%% file: main.tex
\documentclass[12pt]{article}

\usepackage[T1]{fontenc}
\usepackage[utf8]{inputenc}
\usepackage[a4paper,left=1in,right=1in,top=1.5in,bottom=1.5in]{geometry}
\usepackage{setspace}
\usepackage{microtype}
\usepackage{lmodern}

\usepackage{amsmath,amssymb,amsfonts,amsthm,bm,mathtools}

\usepackage{graphicx}
\graphicspath{{figures/}}
\usepackage{booktabs}
\usepackage{tabularx}
\usepackage{array}
\usepackage{caption}
\usepackage{subcaption}
\usepackage{threeparttable}
\usepackage{adjustbox}
\usepackage{float}
\usepackage[section]{placeins}
\usepackage{xcolor}
\usepackage{longtable}
\usepackage{pdflscape}
\newcolumntype{L}[1]{>{\raggedright\arraybackslash}p{#1}}
\newcolumntype{C}[1]{>{\centering\arraybackslash}p{#1}}

\usepackage[round,authoryear]{natbib}
\usepackage[
    colorlinks=true,
    allcolors=blue,
    bookmarks=false,
    pdftitle={The Price of Permission: Classification Uncertainty in Constrained Capital Markets},
    pdfauthor={Abdulrahman Qadi, Akash Sharma, and Francesca Medda}
]{hyperref}

\newcommand{\E}{\mathbb{E}}

\DeclareMathOperator{\Prb}{Pr}

\newcommand{\Disc}{\mathrm{DISC}}

\newcommand{\CAR}{\mathrm{CAR}}
\newcommand{\diffCAR}{\mathrm{diffCAR}}
\newcommand{\AltText}[1]{}
\theoremstyle{plain}
\newtheorem{proposition}{Proposition}
\newtheorem{corollary}{Corollary}
\theoremstyle{definition}
\newtheorem{assumption}{Assumption}

\newtheorem{remark}{Remark}

\title{\textbf{The Price of Permission: Classification Uncertainty in Constrained Capital Markets}}
\author{Abdulrahman Qadi, Akash Sharma, and Francesca Medda}
\date{}

\begin{document}
\pagenumbering{arabic}
\setcounter{page}{1}

\begin{center}
{\LARGE\bfseries The Price of Permission: Classification Uncertainty in Constrained Capital Markets\par}
\vspace{1em}
{\large Abdulrahman Qadi\textsuperscript{*}\quad Akash Sharma\quad Francesca Medda\par}
\vspace{0.55em}
{\small Institute of Finance and Technology, University College London\\[-0.1em]
Gower Street, London WC1E 6BT, United Kingdom\par}
\vspace{0.35em}
{\small \textsuperscript{*}Corresponding author: \href{mailto:abdulrahman.qadi.24@ucl.ac.uk}{abdulrahman.qadi.24@ucl.ac.uk}\par}
\end{center}

\vspace{0.8em}

\begin{abstract}
	\noindent
		Shariah-compliant equity screening provides a transparent setting in which institutional rules determine who may own a stock. A binary label identifies current eligibility but not whether the feasible investor base is fragmented across standards or close to changing. We define this instability as classification uncertainty and formalize its investor-base consequence through permitted investor mass. In a 1999--2024 CRSP--Compustat panel of 13,188 securities classified under seven researcher-emulated Shariah rulebooks, screening-rule disagreement and proximity to active boundaries rank next-month screen-implied transitions. U.S. Fama--MacBeth diagnostics do not support an unconditional equal-weighted permission premium, and a September 2023 DJIM/S\&P methodology change produces no robust matched repricing. The central event evidence uses 25 official Securities Commission Malaysia lists. The 410 inclusions already trading before the preceding review have positive but imprecise matched returns. Applying the pre-event turnover floor yields 295 inclusions with 1.76 percentage points over $[0,10]$ trading days ($p_{\mathrm{date}}=0.008$; $p_{\mathrm{wild}}=0.017$) and 2.25 points over $[0,20]$ ($p_{\mathrm{date}}=0.018$; $p_{\mathrm{wild}}=0.035$). Leave-one-date-out, first-inclusion-only, and mid-review placebo checks are supportive, although a joint 20-day pre-event test rejects. Ownership and demand-pressure diagnostics do not identify a unique marginal buyer or clean causal demand shock. The evidence supports treating classification risk as a portfolio-monitoring state. Official Shariah permission is associated with price effects in a recognized local market among sufficiently tradable securities; formal eligibility alone is insufficient.
	
\end{abstract}

\bigskip
\noindent\textbf{Keywords:} Shariah equity screening; classification uncertainty; investor-base segmentation; reclassification risk; asset pricing; Islamic finance.

\bigskip
\noindent\textbf{JEL Classification:} G11, G12, G14, G23, G32, Z12.

\clearpage

\section{Introduction}\label{sec:intro}

Shariah-compliant equity screening provides an unusually transparent setting for a mainstream finance question: who is allowed to own a stock? A security enters a constrained mandate's feasible set only if it passes the financial and business-activity screens recognized by that mandate. Yet there is no single operational Shariah equity rulebook. AAOIFI, DJIM, S\&P, FTSE/Yasaar, MSCI, and Securities Commission Malaysia (SC Malaysia) use related principles but different financial thresholds, denominators, averaging windows, liquidity tests, and implementation rules \citep{AAOIFI2024,DJIM2024,SP2024,MSCI2024,FTSE2022,SCMalaysia2013}. The same company can therefore be investable for one pool of Shariah-constrained capital and excluded for another even when its underlying accounts are unchanged.

Current permission is only the first finance object. A binary pass label does not reveal how stable that permission is. Two firms can both be eligible today even though one lies comfortably inside every relevant debt, cash, receivables, or liquidity threshold while the other lies marginally inside its binding screen. The first can absorb modest changes in its accounts or market-value denominator without leaving the feasible set; the second is more exposed to a future reclassification. Current eligibility and classification-boundary risk are therefore economically different states.

A reclassification changes more than the name attached to a security. It can alter the set of mandate-constrained investors formally permitted to hold it. The proposed finance sequence is
\[
\begin{gathered}
\text{classification change}
\;\longrightarrow\;
\text{change in the permitted investor set} \\
\;\longrightarrow\;
\text{portfolio rebalancing}
\;\longrightarrow\;
\text{potential price pressure}.
\end{gathered}
\]
Each arrow is conditional. The affected rule must govern economically relevant capital, the event must not be fully anticipated, and permitted investors must be marginal and able to trade. A formal label change can therefore leave prices unchanged.

This distinction creates a potential state variable for portfolio management before an official review occurs. Current classifications describe what a constrained manager may hold today; cross-standard disagreement and boundary distance indicate where that feasible universe is fragmented or vulnerable to change. Such information can support monitoring of possible forced-divestment exposure, future entrants, portfolio turnover, and liquidity needs around review dates. The paper does not claim that these measures already constitute a profitable trading strategy. Its central research question is narrower: \emph{When does a change in Shariah investability alter the relevant investor base sufficiently to have price content?}

We call disagreement over, and instability in, binary Shariah investability \emph{classification uncertainty}. This object differs from disagreement about expected cash flows, analyst forecasts, or continuous sustainability scores. A Shariah screen determines whether a mandate-governed investor is formally permitted to hold the security at all. Classification uncertainty consequently has two dimensions: cross-sectional screening-rule disagreement over the current eligible set, and dynamic classification-boundary risk over the stability of that set.

The theoretical framework connects these dimensions to investor-base segmentation. A firm is represented by an eligibility vector $e_i$, whose entries indicate whether it is permitted under each Shariah standard. Investors are partitioned across mutually exclusive mandate types. The key economic state variable is the firm's permitted investor mass,
\begin{equation}
    m_i = \omega_U + \omega^\top e_i,
\end{equation}
where $\omega_U$ denotes unconstrained capital and $\omega$ contains the capital masses governed by the different Shariah policy types. In a transparent mean--variance benchmark, required expected returns decline as $m_i$ expands. The framework does not equate permission with actual ownership: recognition, benchmark weights, available cash, beliefs, and trading frictions determine whether formally permitted capital becomes marginal demand. It instead disciplines the relevant comparison. Counting rulebooks is not equivalent to measuring the capital they govern, and two firms with the same number of pass labels can face different investor-base consequences.

The model also turns static Shariah compliance into a dynamic financial-risk state. For every standard, we measure the distance between the firm's active financial ratios and the nearest binding threshold. Boundary proximity captures the stability of current investability before a label changes. A boundary crossing then changes the eligibility vector and, in the model, permitted investor mass. This sequence---boundary proximity, transition risk, formal reclassification, and possible investor-base adjustment---connects portfolio monitoring to the event-study tests. It does not imply that every crossing is priced or that the empirical proximity transformation is a structural transition probability.

The empirical analysis uses two complementary institutional settings. Route A constructs researcher-emulated Shariah classifications for 13,188 U.S. common-stock securities in a 1999--2024 CRSP--Compustat panel containing 1,342,606 security-months. The seven implementations cover AAOIFI, DJIM, S\&P, FTSE/Yasaar, MSCI main, MSCI M-Series, and SC Malaysia. Screening-rule disagreement and proximity to active boundaries rank next-month screen-implied transitions: the corresponding logit coefficients are 1.1609 and 3.2384 under two-way security-and-month clustering. These are monitoring results, not causal estimates of official-provider decisions.

Route A also establishes an important boundary condition. Equal-weighted eligibility does not behave as the structural permitted-investor-mass variable. Its positive baseline Fama--MacBeth association is absorbed by profitability, investment, and screening-ratio controls. A September 2023 DJIM/S\&P methodology change mechanically admits 313 U.S. firms, including 233 under both rulebooks, but produces no robust matched repricing. Placebo failures in 2020 and 2022 further limit the event design. Formal Shariah eligibility can therefore expand without generating observable price effects when the affected capital is not measured or is not marginal.

Route B studies 25 official semi-annual SC Malaysia lists between November 2013 and November 2025. This setting differs from the U.S. researcher-emulated panel because the public authority issues a common official classification used in the local Islamic capital market. Events are defined at the official stock-code level, and day 0 is the release date printed in each PDF. Removing 172 securities first observed only after the preceding review leaves 410 continuously listed inclusions with positive but imprecise matched returns. Applying a turnover floor constructed exclusively from pre-event trading data leaves 295 inclusions with 1.76 percentage points over $[0,10]$ and 2.25 points over $[0,20]$, significant under list-date clustered and wild-cluster inference. The estimates remain positive in every leave-one-date-out replication and in the first-inclusion-only sample.

The Malaysia evidence remains conditional. Larger complete-case estimates reflect sample composition, not the addition of fundamental controls. Cumulative pre-event windows are insignificant and a three-month mid-review placebo is null, but a joint test of 20 daily pre-event coefficients rejects equality to zero. The inclusion slope with respect to turnover is approximately zero, strict same-industry matching weakens precision, and dated LSEG ownership changes do not identify a unique constrained marginal buyer. The evidence is therefore a conditional event-return pattern, not a clean causal demand shock.

The paper makes three connected contributions. First, it defines Shariah classification uncertainty as a finance object distinct from a current binary pass label. Cross-standard disagreement measures fragmentation in current investability, while boundary proximity measures its dynamic stability. Second, it formalizes permitted investor mass and shows why the economic importance of a Shariah label depends on the constrained capital governed by the relevant rule rather than the number of approving standards. Third, it links ex ante classification-boundary risk to realized official reclassification events and shows empirically that formal permission is not sufficient for repricing: the institutional setting and the ability of investors to trade matter, although institutional recognition itself is not separately identified.

Taken together, the framework has a direct portfolio interpretation. Binary classifications map the securities a constrained investor may hold today; classification-boundary risk maps where that feasible universe is most exposed to change. For asset managers, the first object determines current compliance, whereas the second identifies prospective rebalancing, forced-divestment exposure, and event risk. By linking boundary proximity to screen-implied transitions and official classification events to subsequent returns, the paper provides a foundation for treating investability risk as a portfolio-monitoring input rather than a static screening output. Establishing whether an ex ante strategy can monetize that information requires a separately validated forecasting and trading design.

Shariah screening is valuable for general finance because it makes investor-set restrictions unusually transparent. Its classifications are binary, its financial boundaries are observable, and its institutional implementations vary across standards and countries. In economic structure, a Shariah reclassification resembles other investability events---such as benchmark eligibility or regulatory admission---in which an institutional rule changes who can hold or track a security. The mechanism can therefore inform research on green taxonomies, controversial-activity exclusions, and sustainability benchmarks. The broader implication follows from the Shariah evidence rather than replacing it: institutional classifications can affect asset prices when they alter the set of investors able and willing to hold the security.

The remainder of the paper proceeds as follows. Section \ref{sec:lit} reviews the institutional setting and related literature. Section \ref{sec:theory} develops the theoretical framework. Section \ref{sec:data} describes the data and measurement. Section \ref{sec:design} presents the empirical designs. Section \ref{sec:us_results} reports Route A results. Section \ref{sec:malaysia} reports Route B evidence. Section \ref{sec:interpretation} integrates the findings and discusses implications. Section \ref{sec:conclusion} concludes.


\section{Institutional Background and Related Literature}\label{sec:lit}

\subsection{Shariah equity screening and institutional heterogeneity}

Shariah-compliant equity screening converts religious-legal investment principles into operational security-selection rules. Recognized methodologies typically combine exclusions for prohibited business activities with financial screens that limit interest-bearing debt, interest-bearing assets, receivables, liquidity, and impermissible income \citep{KhatkhatayNisar2007,AyedhEchchabi2019,RizaldyAhmed2019}. The resulting classification is binary for portfolio implementation: under a given standard, a security is either inside or outside the eligible investment set.

The broad objective is shared, but implementation is heterogeneous. AAOIFI, DJIM, S\&P, FTSE/Yasaar, MSCI, and SC Malaysia differ in denominator choice, market-capitalization averaging windows, treatment of cash and receivables, and threshold architecture \citep{AAOIFI2024,DJIM2024,SP2024,MSCI2024,FTSE2022,SCMalaysia2013}. A debt ratio divided by total assets need not produce the same classification as one divided by a 24- or 36-month average market capitalization. A liquidity screen used by one provider may be absent from another. These are observable institutional differences that can make the same security investable under one recognized methodology and ineligible under another.

Prior Islamic-finance research studies screening criteria, Islamic-index performance, portfolio construction, and continuous compliance rankings \citep{HoRahmanYusufZamzamin2014,AshrafMohammad2016,AlnamlahHassan2022,OrhanIsiker2021}. The present question is different. It does not rank firms by religious merit or interpret a larger financial margin as being ``more Shariah compliant.'' Religious-legal eligibility remains standard-specific and binary. The financial object is the stability and investor-base robustness of that eligibility: how many recognized rules admit the security, how close it is to losing or gaining eligibility, and how much constrained capital is governed by the affected classification.

\subsection{Investor-base segmentation and constrained demand}

The economic mechanism follows the finance literature on limited participation, investor recognition, and segmented demand. Securities with narrower investor recognition can require higher expected returns because risk sharing is incomplete \citep{Merton1987}. When norms or mandates prevent part of the market from holding an asset, the remaining investors must absorb its supply \citep{HeinkelKraus2001,HongKacperczyk2009,FabozziMa2008}. Sustainable-finance models similarly show that tastes and portfolio constraints can affect prices and holdings \citep{PastorStambaugh2021,PedersenFitzgibbons2021,Zerbib2022}. Related work emphasizes that effects on the cost of capital depend on whether constrained investors are economically large and marginal \citep{BakerHollifieldOsambela2022,BerkvanBinsbergen2025,FeldhutterPedersen2025}.

Shariah screening changes the object of segmentation. The constraint is not assumed to be common knowledge and uniform across investors; it is implemented through heterogeneous rulebooks and institutional authorities. The same security can therefore face several overlapping feasible sets. Formal admission expands a potential investor set, but it need not move prices if the affected investors do not recognize the change, are not marginal, cannot trade the security efficiently, or anticipated the event. Evidence that labels can coordinate fund flows supports this distinction between formal classification and implemented demand \citep{HartzmarkSussman2019,CeccarelliRamelliWagner2024}. The event studies accordingly test a conditional implication: a change in Shariah investability can have price content when it alters an economically relevant investor base, not simply because a label changes.

\subsection{Classification disagreement and the paper's contribution}

Research on ESG ratings establishes that providers disagree because they differ in scope, measurement, and weighting \citep{BergKolbel2022,ChristensenSerafeim2022,GibsonBrandon2021}. That disagreement can be associated with returns, information, financing outcomes, and equilibrium demand \citep{AvramovCheng2022,WangWangDongWang2024IRFA,HuLiLi2024IRFA,HePanShanZhou2025IRFA}. This literature provides a useful measurement analogy, but the institutional object here is sharper. ESG disagreement usually concerns continuous scores or rankings within an investable universe. Shariah screening disagreement changes the eligible universe itself. Its thresholds and denominators are also comparatively transparent, allowing disagreement to be connected to specific rule components and subsequent binary transitions.

The paper fills three related gaps. It first defines cross-standard Shariah classification uncertainty as fragmentation in current investability. It then links eligibility vectors to permitted investor mass, distinguishing equal rulebook counts from capital-weighted economic exposure. Finally, it adds a dynamic dimension by measuring distance to active classification boundaries and connecting ex ante instability to realized rulebook and official-list events. This combination moves the Shariah-screening literature from static compliance and index performance toward a mainstream finance question: can the stability of institutional permission serve as a state variable for constrained participation, portfolio monitoring, and asset prices?

\section{Theory: Classification Uncertainty and Permitted Investor Mass}\label{sec:theory}

\subsection{Economic setting and notation}

The model treats recognized Shariah screening policies as rules that partition constrained investors into different feasible asset sets. Assets are indexed by $i$, standards by $s$, and investor types by the implemented policy that governs their holdings. The purpose is to isolate how formal eligibility changes potential market participation; it is not a model of religious adjudication or a claim that all permitted investors hold identical portfolios. The main notation used throughout the theoretical and empirical analysis is summarized in Table~\ref{tab:notation}. In the theory, $N$ denotes the number of risky assets. In empirical tables, $N$ retains its conventional meaning as the number of observations or matched events.

\begin{table}[!htbp]
\centering
\caption{Main notation}
\label{tab:notation}
\begin{threeparttable}
\footnotesize
\begin{tabularx}{\textwidth}{L{0.18\textwidth}X}
\toprule
Symbol & Meaning \\
\midrule
$i$, $s$, $k$, $t$, $d$ & Firm/security, screening standard, financial screen, monthly/list-date index, and daily trading-day index. \\
$e_i^{(s)}$ & Binary eligibility indicator: one if firm $i$ is permitted under standard $s$, zero otherwise. \\
$e_i$ & Eligibility vector collecting all $S$ standard-specific indicators for firm $i$. \\
$\bar e_i$ & Unweighted eligibility share, equal to the fraction of standards that permit firm $i$. \\
$\Disc_i$ and $\Disc_i^\omega$ & Unweighted and capital-weighted disagreement measures. Both equal zero under full consensus and are largest when eligible and ineligible rulebook mass is evenly split. \\
$\omega_s$, $\omega_U$, $\omega$ & Mutually exclusive capital mass governed by standard $s$, unconstrained capital mass, and the vector of standard-specific capital masses. \\
$m_i$ & Potential permitted investor mass: the total capital mass formally allowed to hold firm $i$, before recognition, tastes, benchmarks, and trading frictions determine actual demand. \\
$\tilde r_i$, $\mu_i$, $\tilde\epsilon_i$ & Random one-period excess return, its expected excess return, and its zero-mean return shock. Empirical sections use realized returns $r_{i,t}$ or $r_{i,d}$. \\
$\gamma$, $\sigma^2$, $\Sigma$ & Common risk aversion, common idiosyncratic return variance under the benchmark, and the general return covariance matrix. \\
$q_i$, $D_i$ & Per-unit-capital desired holding of asset $i$ and aggregate demand for asset $i$. \\
$\Delta_{i,s,t}$, $D_{i,t}$, $P_{i,t}$ & Standard-specific active margin, nearest active-boundary distance, and its monotone proximity-risk transformation. \\
$AR$, $\CAR$, $\diffCAR$ & Abnormal return, cumulative abnormal return, and matched treatment-minus-control cumulative abnormal return. \\
$\ell_t$, $a_t$, $\mathcal{L}_{i,t}$, $Q^{turn}_{0.25}$ & SC list date, official PDF release date, liquidity-screen indicator, and bottom-quartile cutoff of pre-event turnover in the Route B event universe. \\
\bottomrule
\end{tabularx}
\begin{tablenotes}
\footnotesize
\item The notation deliberately separates theory objects from empirical proxies. The structural theory prices $m_i$ with mandate-capital weights $\omega_s$; the U.S. diagnostics test implications consistent with the framework using equal rulebook weights because historical mandate-weighted capital is not observed at the required security-month frequency.
\end{tablenotes}
\end{threeparttable}
\end{table}

\subsection{Eligibility disagreement and permitted investor mass}

There are $N$ risky assets indexed by $i=1,\ldots,N$ and $S$ screening standards indexed by $s=1,\ldots,S$. Standard $s$ maps firm $i$ into a binary eligibility indicator
\begin{equation}
    e_i^{(s)}\in\{0,1\}.
\end{equation}
Collect these indicators into an eligibility vector
\begin{equation}
    e_i = \left(e_i^{(1)},e_i^{(2)},\ldots,e_i^{(S)}\right)^\top .
\end{equation}
The unweighted eligibility share is
\begin{equation}
    \bar e_i = \frac{1}{S}\sum_{s=1}^S e_i^{(s)}.
\end{equation}
A convenient unweighted disagreement measure is
\begin{equation}
    \Disc_i = 4\bar e_i(1-\bar e_i),
\end{equation}
which equals zero under full consensus and is maximized when standards are evenly split.

When capital weights are known, the corresponding capital-weighted disagreement measure replaces the simple rulebook share with the eligible share of constrained rulebook capital:
\begin{equation}
    \bar e_i^\omega =
    \frac{\sum_{s=1}^{S}\omega_s e_i^{(s)}}{\sum_{s=1}^{S}\omega_s},
    \qquad
    \Disc_i^\omega = 4\bar e_i^\omega(1-\bar e_i^\omega).
\end{equation}
The unweighted measure $\Disc_i$ is therefore a special case of $\Disc_i^\omega$ when standards receive equal weights. In the empirical sections, $\Disc_{i,t}$ denotes the implemented Route A disagreement variable; because the Route A diagnostics use equal rulebook weights, this implemented disagreement coincides with the raw split-disagreement measure in the preceding equation.

Let $\omega_s>0$ be the mass of capital governed by standard $s$, and let $\omega_U>0$ be unconstrained capital. Investor types are mutually exclusive, so a mandate following several source documents must be assigned to one implemented policy type rather than counted once under each source. This prevents double counting. Normalize
\begin{equation}
    \omega_U + \sum_{s=1}^S \omega_s = 1.
\end{equation}
The key economic object is potential permitted investor mass:
\begin{equation}
    m_i = \omega_U + \omega^\top e_i = \omega_U + \sum_{s=1}^{S}\omega_s e_i^{(s)}.
    \label{eq:mass}
\end{equation}
This is the potential mass of investors formally allowed to hold firm $i$. It is not actual ownership or an assertion that every permitted investor trades. Recognition, tastes, benchmark weights, available cash, and trading frictions determine how much of this potential mass becomes demand. A rulebook count is not enough; a label matters through the distinct capital whose implemented policy it governs.

\subsection{Pricing under segmented investor bases}

Asset $i$ has random one-period excess return
\begin{equation}
    \tilde r_i = \mu_i + \tilde\epsilon_i,
\end{equation}
where $\mu_i=\E[\tilde r_i]$ is the required expected excess return and $\tilde\epsilon_i$ is a zero-mean return shock. The return-shock vector $\tilde\epsilon$ has covariance matrix $\Sigma$. Investors have common mean--variance preferences with risk aversion $\gamma>0$. Unconstrained investors can hold all assets. A constrained investor following standard $s$ can hold asset $i$ only if $e_i^{(s)}=1$.
	
	\begin{assumption}[Investor-base segmentation]
	Investor type $U$ controls normalized wealth mass $\omega_U$ and can hold all assets. Each constrained mandate belongs to one mutually exclusive type $s$, controls wealth mass $\omega_s$, and can hold asset $i$ only when $e_i^{(s)}=1$. Type masses are exogenous over the event window, investors share beliefs and risk aversion, and the only difference across types is the admissible set.
	\end{assumption}
	
	\begin{assumption}[Symmetric covariance benchmark]
	Return shocks satisfy $\Sigma=\sigma^2 I$, where $\Sigma$ is the $N\times N$ return-shock covariance matrix, $I$ is the identity matrix, every asset has the same variance $\sigma^2$, and every off-diagonal covariance is zero. Each risky asset has normalized net supply $1/N$.
	\end{assumption}

Assumption 2 is a tractability benchmark, not an empirical claim that stock returns are uncorrelated or equally volatile. Equal variances and zero covariances make the demand for each admissible asset separable, allowing the investor-base mechanism to be expressed in a transparent closed form. With a general covariance matrix, Shariah eligibility still changes each investor type's feasible set, but demand for asset $i$ also depends on its covariance with every other admissible asset. Permitted investor mass would then remain economically relevant without being sufficient for the same one-line equilibrium expression.

Holdings in the benchmark are measured as portfolio weights per unit of investor-type wealth, and total asset supply is normalized to $1/N$. The normalization is without loss for the sign predictions: changing the supply normalization rescales equilibrium expected excess returns but does not change the comparative static with respect to permitted investor mass.

\begin{proposition}[Permitted investor mass and required returns]\label{prop:mass}
Under the symmetric covariance benchmark, the equilibrium expected excess return of asset $i$ is
\begin{equation}
    \mu_i^* = \frac{\gamma\sigma^2}{N}\frac{1}{m_i}.
\end{equation}
Thus required expected returns are decreasing in potential permitted investor mass.
\end{proposition}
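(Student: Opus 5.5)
The plan is to solve each investor type's mean--variance problem asset by asset, aggregate demand across the types permitted to hold asset $i$, and then impose market clearing for that asset. The result follows from that single clearing equation.

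\textbf{Step 1: individual demands.} A type-$j$ investor, with $j\in\{U,1,\ldots,S\}$, chooses weights per unit of wealth $q^{(j)}$ to maximize $q^\top\mu-\frac{\gamma}{2}q^\top\Sigma q$. The constraint is $q^{(j)}_i=0$ whenever asset $i$ is inadmissible for type $j$. For type $U$ every asset is admissible; for type $s$, asset $i$ is admissible exactly when $e_i^{(s)}=1$.

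Under Assumption 2, $\Sigma=\sigma^2 I$, so the objective is additively separable across assets: $\sum_i\bigl(q_i\mu_i-\frac{\gamma\sigma^2}{2}q_i^2\bigr)$. The constrained problem therefore decouples. Each admissible coordinate solves its own scalar first-order condition, giving $q_i=\mu_i/(\gamma\sigma^2)$, and each inadmissible coordinate is set to zero. Because beliefs and risk aversion are common (Assumption 1), the admissible holding $q_i\equiv\mu_i/(\gamma\sigma^2)$ is the same for every type permitted to hold asset $i$.

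\textbf{Step 2: aggregation and clearing.} Aggregate demand for asset $i$ is
\[
D_i=\Bigl(\omega_U+\sum_{s}\omega_s e_i^{(s)}\Bigr)\frac{\mu_i}{\gamma\sigma^2}=m_i\frac{\mu_i}{\gamma\sigma^2},
\]
using equation \eqref{eq:mass}. Setting $D_i$ equal to the supply $1/N$ gives $\mu_i^*=\gamma\sigma^2/(N m_i)$. Since $\omega_U>0$, we have $m_i\ge\omega_U>0$, so the expression is well defined. The comparative static $\partial\mu_i^*/\partial m_i=-\gamma\sigma^2/(N m_i^2)<0$ then delivers the monotonicity claim.

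\textbf{Main obstacle.} The delicate step is justifying that the constrained optimum really is the coordinatewise interior solution. There are two points to check.

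First, the constraint is only an equality restriction $q_i=0$ on inadmissible coordinates. There is no budget constraint, since we work with excess returns and a riskless asset absorbs the residual wealth. There is also no short-sale restriction that could bind.

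Second, the equilibrium must be internally consistent. With $\mu_i^*>0$, the admissible demands are positive, so no sign issue arises. The diagonal $\Sigma$ guarantees that zeroing out inadmissible coordinates does not distort demand for the admissible ones.

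I would state explicitly that this decoupling fails for non-diagonal $\Sigma$, which matches the paper's remark following Assumption 2. Finally, I would note that the supply normalization rescales $\mu_i^*$ only by a positive constant, so the sign of the comparative static is unchanged.
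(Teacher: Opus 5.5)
Your proposal is correct and follows essentially the same route as the paper's proof. Under $\Sigma=\sigma^2 I$ you separate each type's mean--variance problem coordinatewise, obtain $q_{h,i}=\mu_i/(\gamma\sigma^2)$ on admissible assets, aggregate to $D_i=m_i\mu_i/(\gamma\sigma^2)$, and clear against supply $1/N$; your added checks that $m_i\ge\omega_U>0$ and that $\partial\mu_i^*/\partial m_i<0$ are small completions of that same argument.
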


\begin{proof}
For one unit of capital of investor type $h$, let $A_h$ denote its admissible asset set and $q_h$ its vector of risky-asset holdings. A risk-free asset absorbs residual wealth. The type solves
\[
    \max_{q_h:\,q_{h,i}=0\ \forall i\notin A_h}
    q_h^\top\mu-\frac{\gamma}{2}q_h^\top\Sigma q_h .
\]
Under $\Sigma=\sigma^2I$, the first-order condition for every admissible asset is
$q_{h,i}=\mu_i/(\gamma\sigma^2)$; demand is zero for inadmissible assets. Because covariance is diagonal, this vector problem separates across admissible assets without imposing an additional risky-asset budget constraint. Unconstrained capital can always hold asset $i$, while constrained capital of type $s$ can hold it only if $e_i^{(s)}=1$. Aggregate demand is therefore
\begin{equation}
	    D_i = \frac{\mu_i}{\gamma\sigma^2}\left(\omega_U+\sum_{s=1}^{S}\omega_s e_i^{(s)}\right)
	        = \frac{\mu_i}{\gamma\sigma^2}m_i.
\end{equation}
Market clearing requires $D_i=1/N$. Solving gives the result.
\end{proof}

\begin{remark}[Asset-specific supply and variance]\label{rem:supply_variance}
The inverse relationship between required expected return and permitted investor mass does not rely on equal supply or common variance. If asset $i$ has net supply $S_i>0$ and idiosyncratic return variance $\sigma_i^2$, the same one-asset clearing argument gives
\begin{equation}
    \mu_i^*=\frac{\gamma\sigma_i^2 S_i}{m_i}.
\end{equation}
Higher supply, larger variance, or lower permitted investor mass raises the required expected excess return. Empirical size and volatility measures proxy the model's supply and risk dimensions. Liquidity variables and matching diagnostics instead address whether demand can be implemented and whether treated--control comparisons are credible; they do not enter this comparative static directly.
\end{remark}

\begin{corollary}[Average eligibility is insufficient]\label{cor:elig}
Two firms with the same unweighted eligibility share can have different expected returns if their eligibility vectors load differently on $\omega$.
\end{corollary}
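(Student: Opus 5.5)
The plan is a direct construction that uses Proposition~\ref{prop:mass} as the only pricing input. Under the symmetric covariance benchmark, $\mu_i^*=\gamma\sigma^2/(N m_i)$, and this is strictly decreasing in $m_i$. So two firms have different equilibrium expected returns exactly when their permitted investor masses differ. The corollary therefore reduces to a statement about eligibility vectors alone: I must exhibit $e_i,e_j\in\{0,1\}^S$ with $\bar e_i=\bar e_j$ but $\omega^\top e_i\neq\omega^\top e_j$. Since $\omega_U$ is common to both firms, $m_i-m_j=\omega^\top(e_i-e_j)$.

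The construction goes as follows.

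\begin{enumerate}
\item Take $S\ge 2$ and any two standards $s\neq s'$ with $\omega_s\neq\omega_{s'}$. Let firm $i$ pass only standard $s$, so $e_i=\mathbf{1}_s$. Let firm $j$ pass only standard $s'$, so $e_j=\mathbf{1}_{s'}$.
\item Both firms then have $\bar e_i=\bar e_j=1/S$, and hence also the same unweighted disagreement $\Disc_i=\Disc_j$.
\item Their masses differ, since $m_i-m_j=\omega_s-\omega_{s'}\neq 0$.
\item Applying Proposition~\ref{prop:mass} to each firm gives
\[
\mu_i^*-\mu_j^*=\frac{\gamma\sigma^2}{N}\,\frac{m_j-m_i}{m_i m_j}\neq 0,
\]
and the sign is pinned down: the firm eligible under the larger-capital standard has the lower required return.
\end{enumerate}

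This works for any pair of eligibility vectors with equal pass counts, provided $\omega$ is not constant on the standards where the two vectors differ. I will also note that the same conclusion holds under Remark~\ref{rem:supply_variance} with a common $S_i$ and $\sigma_i^2$. That makes clear the result does not depend on the equal-supply normalization.

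The only real obstacle is interpretive rather than technical. The statement says ``can have,'' so it is an existence claim, and it needs the nondegeneracy condition that the capital weights are not all equal. If $\omega_s$ is constant across $s$, then $m_i=\omega_U+\omega_1 S\bar e_i$ is a function of $\bar e_i$, and the corollary fails. In that case the unweighted measure is exactly the special case identified after the definition of $\Disc_i^\omega$. I will state this as the precise condition: average eligibility is a sufficient statistic for required returns if and only if $\omega$ is constant across standards. The ``if'' direction is the calculation just given. The ``only if'' direction follows from the two-singleton construction above, applied to any pair with $\omega_s\neq\omega_{s'}$.
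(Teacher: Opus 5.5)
Your proof is correct and is essentially the paper's argument. The paper gives no separate proof beyond noting that, by Proposition~\ref{prop:mass}, $\mu_i^*$ is strictly decreasing in $m_i=\omega_U+\omega^\top e_i$, so equal label counts backed by different capital give different required returns; your singleton construction makes that existence claim explicit, and your ``iff'' characterization is also right. One side remark needs correcting: $\omega$ being non-constant on the standards where $e_i$ and $e_j$ differ does not guarantee $\omega^\top e_i\neq\omega^\top e_j$ (for example, passing two standards with weights $0.05,0.20$ versus two with weights $0.10,0.15$), so the exact condition is the corollary's own hypothesis $\omega^\top(e_i-e_j)\neq 0$, and under constant $\omega$ that hypothesis is never satisfied rather than the corollary failing.
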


\begin{corollary}[Unweighted disagreement is insufficient]\label{cor:disagree}
Two firms can have the same unweighted disagreement but different expected returns and investor-base exposure if $\omega^\top e_i\neq \omega^\top e_j$.
\end{corollary}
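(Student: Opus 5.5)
The plan is to prove Corollary~\ref{cor:disagree} by explicit construction, using Proposition~\ref{prop:mass} as the pricing map. Under the symmetric covariance benchmark, $\mu_i^*=\gamma\sigma^2/(N m_i)$ is strictly decreasing in $m_i$. By \eqref{eq:mass}, $m_i-m_j=\omega^\top e_i-\omega^\top e_j$, because the unconstrained term $\omega_U$ is common to both firms. So whenever $\omega^\top e_i\neq\omega^\top e_j$, the two firms have $m_i\neq m_j$ and hence $\mu_i^*\neq\mu_j^*$; under Remark~\ref{rem:supply_variance} the same holds when supply and variance are equal across the two firms. This gives the ``different expected returns and investor-base exposure'' half immediately.

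It remains to show that the condition is compatible with equal unweighted disagreement. Since $\Disc_i=4\bar e_i(1-\bar e_i)$ depends on $e_i$ only through $\bar e_i$, it suffices to exhibit two eligibility vectors with the same count of passing standards but different $\omega$-weighted sums.

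\begin{itemize}
\item Take $S\ge 2$ and suppose $\omega_1\neq\omega_2$.
\item Let $e_i$ pass only standard $1$ and $e_j$ pass only standard $2$.
\item Then $\bar e_i=\bar e_j=1/S$, so $\Disc_i=\Disc_j$, while $\omega^\top e_i=\omega_1\neq\omega_2=\omega^\top e_j$.
\end{itemize}

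A second route exploits the symmetry $\Disc(\bar e)=\Disc(1-\bar e)$: a firm passing $k$ standards and one passing $S-k$ have identical $\Disc$, and their masses differ whenever the corresponding $\omega$-sums differ. This symmetric case even allows equal disagreement with different eligibility shares.

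The main obstacle is that the statement is existential and requires heterogeneous capital weights. If all $\omega_s$ are equal, then $\omega^\top e_i$ is a function of $\bar e_i$. The symmetric route still gives different masses in that case, for $\bar e_i=k/S$ versus $1-k/S$ with $k\neq S/2$. I would therefore present both constructions: the single-swap example when weights differ, and the complementary-share example in general. Both rest only on the strict monotonicity in Proposition~\ref{prop:mass}.
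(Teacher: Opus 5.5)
Your argument is correct and is essentially the paper's own. The paper gives no separate proof and treats the corollary as immediate from Proposition~\ref{prop:mass}: $\Disc_i$ depends on $e_i$ only through the label count $\bar e_i$, while $\mu_i^*$ depends on $m_i=\omega_U+\omega^\top e_i$, and your single-swap construction is exactly its intuition of a firm ``admitted by the same number of small standards.''

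Your complementary-share route is an addition the paper does not make. Because $\Disc$ is symmetric in $\bar e$, an all-excluded firm and an all-permitted firm both have $\Disc=0$, yet their masses are $\omega_U$ and $1$. So this corollary, unlike Corollary~\ref{cor:elig}, holds even with equal weights, and your remark that the statement ``requires heterogeneous capital weights'' applies only to your first construction.
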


The intuition is direct. A firm permitted by standards that govern a large amount of constrained capital has a broader effective investor base than a firm admitted by the same number of small standards. Average eligibility and unweighted disagreement count labels; $m_i$ weights labels by economic relevance.

\begin{remark}[Potential permission is not ownership]\label{rem:ownership}
Equation~\eqref{eq:mass} describes admissibility, not equilibrium ownership shares. Actual ownership additionally depends on heterogeneous beliefs, benchmark weights, portfolio constraints, and whether permitted investors recognize and trade the security. The model therefore signs a price comparative static only under the benchmark assumptions; it does not yield a separate concentration or liquidity proposition. The dated holder tests later in the paper are empirical mechanism diagnostics rather than identities implied by $m_i$.
\end{remark}

\subsection{Classification-boundary risk and Shariah investability}\label{sec:boundary_risk}

Binary Shariah eligibility describes whether a security is investable under a standard today; it does not describe how stable that investability is. Consider two firms that currently pass the same rulebook. One lies comfortably inside every relevant debt, cash, receivables, and income threshold. The other lies only marginally inside its binding debt threshold. Their current eligibility indicators are identical, but their exposure to a future classification change is not.

This distinction matters for constrained portfolio choice. A security close to an active boundary creates a greater need for monitoring and can expose a mandate to future rebalancing, transaction costs, benchmark turnover, and a change in the set of investors permitted to hold it. For a portfolio manager, the current label determines today's feasible set, whereas boundary distance identifies where that set is vulnerable to future change. Boundary risk is not a statement that the firm is more or less Shariah compliant in a religious or legal sense, nor is proximity by itself a trading signal. It is a financial measure of the stability of the binary classification produced by an implemented standard.

To make that stability observable, let $K_s$ denote the set of active financial screens used by standard $s$. The normalized active margin for firm $i$ under standard $s$ is
\begin{equation}
    \Delta_{i,s,t} = \min_{k\in K_s}\frac{\theta_{s,k}-x_{i,k,t}}{\theta_{s,k}},
\end{equation}
where $x_{i,k,t}$ is the relevant accounting ratio and $\theta_{s,k}$ is the corresponding threshold. Dividing by $\theta_{s,k}$ places heterogeneous screens on a comparable relative scale. A positive value means that the firm remains inside every active financial boundary under standard $s$; a negative value means that at least one active screen is breached. Because the minimum selects the binding screen, $\Delta_{i,s,t}$ records the standard-specific margin most relevant for a near-term transition.

Across standards, define the nearest active-boundary distance as
\begin{equation}
    D_{i,t} = \min_s |\Delta_{i,s,t}|
\end{equation}
and its monotone proximity-risk transformation
\begin{equation}
    P_{i,t} = \exp(-5D_{i,t}).
\end{equation}
Small $D_{i,t}$ means that at least one standard places the firm close to an active financial boundary. The fixed scale parameter 5 makes the empirical measure decline smoothly with distance; it is neither estimated nor calibrated as a transition-probability parameter. Because the exponential transformation is strictly decreasing, $P_{i,t}$ preserves the ordering implied by $D_{i,t}$. It should therefore be read as a convenient monitoring index, not as $\Prb(\mathrm{Transition}_{i,t+1}=1)$.

\begin{proposition}[A single active boundary and classification transitions]\label{prop:boundary}
	Suppose one lower-is-better screening ratio evolves as $x_{i,k,t+1}=x_{i,k,t}+u_{i,k,t+1}$, where $u$ has a continuous distribution. For a fixed threshold $\theta_{s,k}$, the probability of crossing that boundary is weakly decreasing in current absolute distance from it. A crossing of standard $s$ changes potential permitted mass by $\omega_s$.
\end{proposition}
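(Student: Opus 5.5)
The plan is to treat the two claims separately: the monotonicity claim follows from monotonicity of a distribution function, and the mass claim follows directly from \eqref{eq:mass}. Fix firm $i$, standard $s$ and screen $k$ with threshold $\theta_{s,k}>0$. Lower is better, so the firm is inside the boundary when $x_{i,k,t}\le\theta_{s,k}$ and outside otherwise. I will take the innovation $u_{i,k,t+1}$ to have a distribution function $F$ that does not depend on the current level $x_{i,k,t}$, and will state this explicitly as the maintained reading of the proposition. Without it, ``the'' crossing probability as a function of distance is not well defined. Write $d=|\theta_{s,k}-x_{i,k,t}|$. Because $\theta_{s,k}$ is fixed, $d$ is a strictly increasing transformation of the normalized margin $|\theta_{s,k}-x_{i,k,t}|/\theta_{s,k}$. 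Monotonicity in either measure is therefore equivalent.

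\emph{Case 1: the firm is currently inside the boundary.} A crossing occurs when $x_{i,k,t}+u>\theta_{s,k}$, that is, $u>d$. Hence
\[
\Prb(\text{cross})=1-F(d),
\]
which is weakly decreasing in $d$ because $F$ is nondecreasing.

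\emph{Case 2: the firm is currently outside the boundary.} A crossing back inside occurs when $u\le -d$. Hence
\[
\Prb(\text{cross})=F(-d),
\]
which is again weakly decreasing in $d$.

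Continuity of $F$ is used only so that the boundary event $\{u=\pm d\}$ has probability zero. This makes the strict and weak inequality conventions at the threshold irrelevant, and it gives a single function of $d$ on each side. I will note that the result is only \emph{weak} monotonicity. Strict monotonicity would need positive density on the relevant region, which the proposition does not assume.

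For the second claim, a crossing of the binding screen flips $e_i^{(s)}$ between $0$ and $1$. This requires that the other active screens of standard $s$ do not bind, which is the sense of a ``single active boundary'' in the statement. Since the remaining indicators are unchanged, \eqref{eq:mass} gives $\Delta m_i=\pm\omega_s$. The sign is negative for an exit and positive for an entry. Mutual exclusivity of investor types ensures that this capital is counted exactly once. Combining this with Proposition~\ref{prop:mass} shows that the crossing moves $\mu_i^*$ in the opposite direction to $m_i$, although the proposition itself only requires the mass change.

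The main obstacle is not computational but interpretive. One must pin down the independence of $u$ from $x$, and handle the multi-screen case by restricting to the screen that determines $\Delta_{i,s,t}$. If several screens are active, the other screens could make $e_i^{(s)}$ insensitive to this crossing, so I would state that restriction explicitly.
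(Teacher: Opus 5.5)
Your proposal is correct and follows essentially the same route as the paper's proof. The argument is the same two-case computation: the crossing probability is $1-F(d)$ from inside and $F(-d)$ from outside, each weakly decreasing because $F$ is nondecreasing, and the $\pm\omega_s$ change in $m_i$ is then read directly off equation~\eqref{eq:mass}. Your explicit caveats make precise assumptions the paper leaves implicit without changing the argument: the innovation distribution must not depend on the current level, the claim holds on each side of the boundary separately, and the crossed screen must be the one actually determining $e_i^{(s)}$.
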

	
\begin{proof}
For a fixed positive threshold $\theta$, let $d=|\theta-x_{i,k,t}|$ and let $F$ denote the cumulative distribution function of $u_{i,k,t+1}$. If the firm is currently inside the boundary, crossing into breach requires $u_{i,k,t+1}>d$; the crossing probability is $1-F(d)$, which is weakly decreasing in $d$. If the firm is outside the boundary, crossing into compliance requires $u_{i,k,t+1}<-d$; continuity implies that the probability is $F(-d)$, which is also weakly decreasing in $d$ because every cumulative distribution function is non-decreasing in its argument. The eligibility indicator for standard $s$ changes at the crossing, so equation~\eqref{eq:mass} changes by $\omega_s$ in absolute value.
\end{proof}

The proposition formalizes only the local intuition: for one active lower-is-better screen, a firm closer to the threshold is more exposed to a crossing under the stated innovation process. It does not establish that minimum distance fully characterizes transition probabilities when several correlated screens operate simultaneously. In that setting, the probability of any label change depends on the full vector of signed margins and their joint innovation distribution. The empirical $D_{i,t}$ is deliberately parsimonious because it identifies the nearest point of classification fragility without pretending to estimate that multivariate process.

The economic sequence is therefore conditional but clear. Greater boundary proximity indicates a less stable current classification. A realized accounting or methodology change may then cross a boundary and alter formal eligibility. That label change changes potential permitted investor mass by the capital governed by the affected standard. Portfolio turnover and price pressure can follow only if that capital recognizes the classification, is marginal for the security, and can trade. Section~\ref{sec:design} tests the first link as a monitoring ranking, while the event studies examine the realized-label stage. Together, these links convert static Shariah investability into a dynamic financial-risk state without claiming that boundary proximity itself causes accounting changes or asset-price movements.

\subsection{Realized reclassification and permitted investor mass}\label{sec:realized_crossings}

Section~\ref{sec:boundary_risk} concerns instability before a classification changes. The realized counterpart is what happens to the feasible investor set once a financial boundary, methodology rule, or official classification changes. Such an event can alter formal eligibility even without contemporaneous cash-flow news. Let $s^*$ denote the affected standard or authority. The general change in permitted investor mass is
\begin{equation}
    m_i^{post}-m_i^{pre}
    =\omega_{s^*}\left(e_i^{(s^*),post}-e_i^{(s^*),pre}\right).
\end{equation}
For an inclusion, $e_i^{(s^*),pre}=0$ and $e_i^{(s^*),post}=1$, so
\begin{equation}
    m_i^{post}=m_i^{pre}+\omega_{s^*}.
\end{equation}
The capital governed by $s^*$ becomes formally permitted to hold the security. For an exclusion, $e_i^{(s^*),pre}=1$ and $e_i^{(s^*),post}=0$, so
\begin{equation}
    m_i^{post}=m_i^{pre}-\omega_{s^*}.
\end{equation}
The potential investor base contracts by the same mandate-capital mass. These expressions formalize the first link from reclassification to a changed feasible investor set. They describe changes in admissibility, not mechanical purchases or sales: portfolio rebalancing and price pressure arise only through the additional conditions stated below.

Using Proposition~\ref{prop:mass}, the change in the required expected excess return is
\begin{align}
    \Delta \mu_i^*
    &= \frac{\gamma\sigma^2}{N}
    \left(\frac{1}{m_i^{pre}+\omega_{s^*}}-\frac{1}{m_i^{pre}}\right) \nonumber \\
    &= -\frac{\gamma\sigma^2}{N}
    \frac{\omega_{s^*}}{m_i^{pre}(m_i^{pre}+\omega_{s^*})}<0.
\end{align}
If the classification event does not change expected cash flows, and if the expected one-period payoff is $X_i$, then a one-period pricing relation is $p_i=X_i/(1+r_f+\mu_i^*)$, where $r_f$ is the risk-free rate and $\mu_i^*$ is the required excess return. Hence a decline in the required expected excess return implies a positive price reaction. This is the theoretical sign restriction behind the inclusion-event tests; the empirical question is whether the formal eligibility change actually moves marginal constrained demand.

The theoretical change in feasible investor mass is symmetric across inclusion and exclusion, but empirical price responses need not be. The amount of capital governed by the authority can differ across settings; purchases and sales can face different implementation lags; exclusions may be anticipated; short-sale and benchmark constraints can create asymmetry; and other investors may absorb order flow. The model therefore supplies a comparative-static sign under its benchmark assumptions, not a prediction that inclusion and exclusion CARs must have equal and opposite magnitudes.

Five conditions determine whether a realized classification change becomes economically binding. The affected standard must govern a meaningful amount of capital; those investors must be marginal for the security; the event must contain information not already incorporated into price; the authority must be recognized by the relevant mandates; and the security must be sufficiently tradable for portfolios to adjust. These conditions explain the empirical comparison without claiming that the comparison identifies any one condition separately. The September 2023 DJIM/S\&P methodology shock expands formal U.S. rulebook eligibility but does not produce robust matched repricing. Official SC Malaysia inclusions exhibit a stronger conditional return pattern in a local institutional setting, yet the paper does not separately identify recognition or the marginal buyer. Sections 3.4 and 3.5 therefore provide the bridge from ex ante classification risk to realized event returns while preserving the distinction between permission, demand, and price.

\section{Data and Measurement}\label{sec:data}

\subsection{Route A: U.S. cross-standard panel}

Route A combines CRSP monthly stock data, Compustat annual fundamentals, and the CRSP--Compustat Merged link table. We retain ordinary U.S. common shares with CRSP share codes 10 or 11 and NYSE, AMEX, or NASDAQ exchange codes. The analysis sample runs from January 1999 to December 2024. CRSP data are pulled from 1995 onward before constructing 24- and 36-month rolling market-capitalization denominators, so that early analysis-period denominators are not mechanically truncated.

Annual accounting data are matched with a six-month reporting lag, following standard asset-pricing practice \citep{FamaFrench1992,HouXue2015}. Each annual accounting record is used from its availability month until the next available annual record, with a stale-data guard to prevent old fundamentals from being carried forward indefinitely. Table \ref{tab:sample_us} summarizes the sample construction.

\begin{table}[!htbp]
\centering
\caption{Route A: U.S. sample construction}
\label{tab:sample_us}
\begin{threeparttable}
\begin{tabularx}{\textwidth}{Xrr}
\toprule
Step & Observations & Unique securities \\
\midrule
CRSP common shares after 1999 filter & 1,376,235 & 13,718 \\
After CCM linking, valid lagged Compustat merge, link de-duplication, and stale-data guard & 1,342,606 & 13,188 \\
\bottomrule
\end{tabularx}
\begin{tablenotes}
\footnotesize
\item The unit of observation is a security-month, and a security is identified by CRSP permno. The combined second row avoids presenting the lagged-accounting and link-quality conditions as a later filter that removes zero observations. The panel contains 12,998 permcos and 13,202 linked gvkeys; no permno-month is duplicated. The final panel spans January 1999 through December 2024.
\end{tablenotes}
\end{threeparttable}
\end{table}

For each security-month, we construct binary eligibility indicators under seven researcher-emulated rulebooks. These labels are not historical classifications supplied by the standard setters. They apply the documented thresholds in Table~\ref{tab:rulebook_implementation} to a common, lagged Compustat data set and a common SIC-based business-activity exclusion map. This common-data design isolates differences in codified financial screens, but it cannot reproduce proprietary revenue purification, issuer review, or judgment by each provider.

\begin{table}[!htbp]
\centering
\caption{Researcher implementation of the seven financial-screen rulebooks}
\label{tab:rulebook_implementation}
\begin{threeparttable}
\scriptsize
\begin{tabularx}{\textwidth}{L{0.13\textwidth}L{0.18\textwidth}L{0.18\textwidth}L{0.25\textwidth}L{0.12\textwidth}}
\toprule
Implementation & Debt screen & Cash screen & Receivables/liquidity screen & Income screen \\
\midrule
AAOIFI & Debt/market cap $\leq30\%$ & Cash and interest-bearing assets/market cap $\leq30\%$ & None & Proxy/sales $\leq5\%$ \\
DJIM & Debt/24-month average market cap $\leq33\%$ & Cash and interest-bearing assets/24-month average market cap $\leq33\%$ before Aug.\ 2023 & Receivables/24-month average market cap $\leq33\%$ before Aug.\ 2023 & Proxy/sales $\leq5\%$ \\
S\&P & Debt/36-month average market cap $\leq33\%$ & Cash and interest-bearing assets/36-month average market cap $\leq33\%$ before Aug.\ 2023 & Receivables/36-month average market cap $\leq33\%$ before Aug.\ 2023 & Proxy/sales $\leq5\%$ \\
FTSE/Yasaar & Debt/assets $\leq1/3$ & Cash/assets $\leq1/3$ & (Cash+receivables)/assets $\leq50\%$ & Proxy/sales $\leq5\%$ \\
MSCI main & Debt/assets $\leq1/3$ & Cash/assets $\leq1/3$ & (Cash+receivables)/assets $\leq70\%$ & Proxy/sales $\leq5\%$ \\
MSCI M-Series & Debt/36-month average market cap $\leq1/3$ & Cash/36-month average market cap $\leq1/3$ & (Cash+receivables)/36-month average market cap $\leq49\%$ & Proxy/sales $\leq5\%$ \\
SC Malaysia & Debt/assets $\leq33\%$ & Cash/assets $\leq33\%$ & None in the implemented financial-ratio layer & Proxy/sales $\leq5\%$ \\
\bottomrule
\end{tabularx}
\begin{tablenotes}
\footnotesize
\item Debt is total debt; cash includes cash and positive short-term investments. The baseline non-permissible-income proxy is $[\max(\mathrm{IDIT},0)+\max(\mathrm{NOPI},0)]/\mathrm{SALE}$. Because Compustat's NOPI also contains permissible non-operating items, this is a conservative, overinclusive proxy rather than an official purification measure. Missing or nonpositive required denominators fail the relevant screen. The Internet Appendix rebuilds all seven labels, disagreement, proximity, next-month transitions, and Fama--MacBeth characteristics using a narrower positive-IDIT-only proxy.
\end{tablenotes}
\end{threeparttable}
\end{table}

Table \ref{tab:elig_rates} and Figure \ref{fig:elig_bar} report average eligibility rates under the baseline implementation.

\begin{table}[!htbp]
\centering
\caption{Average eligibility rates by standard}
\label{tab:elig_rates}
\begin{tabular}{lr}
\toprule
Standard & Mean eligibility \\
\midrule
FTSE/Yasaar & 25.60\% \\
S\&P & 26.66\% \\
MSCI M-Series & 26.69\% \\
DJIM & 27.30\% \\
MSCI main & 28.37\% \\
SC Malaysia & 29.26\% \\
AAOIFI & 29.29\% \\
\bottomrule
\end{tabular}
\end{table}

\begin{figure}[!htbp]
\centering
\includegraphics[width=0.84\textwidth]{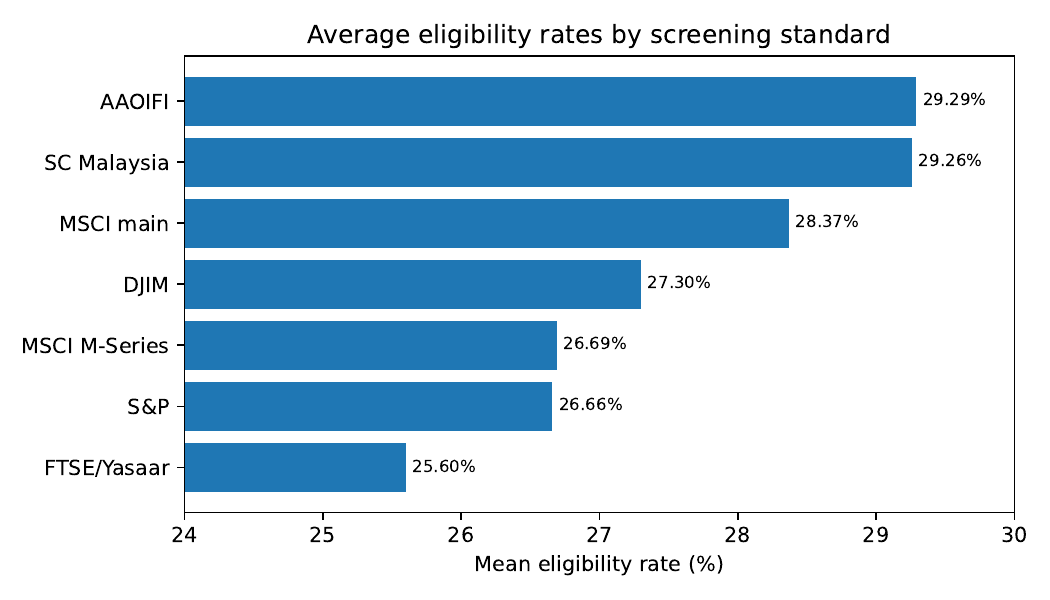}
\caption{Average eligibility rates by implemented screening standard. The spread across standards illustrates that permitted universes differ materially even when the broad Shariah objective is common.}
\AltText{Horizontal bar chart ranking seven Shariah screening standards by average eligibility rate. The rates range from about 25.6 percent for FTSE/Yasaar to about 29.3 percent for AAOIFI, showing modest but economically meaningful differences across rulebooks.}
\label{fig:elig_bar}
\end{figure}
\FloatBarrier

The main Route A variables are eligibility share $\bar e_{i,t}$, implemented disagreement $\Disc_{i,t}$, active-boundary distance $D_{i,t}$, and proximity risk $P_{i,t}=\exp(-5D_{i,t})$. With seven equal rulebook weights, implemented disagreement is $4\bar e_{i,t}(1-\bar e_{i,t})$. We do not label any equal-weighted count as investor mass: without historical mandate weights, $\bar e_{i,t}$ is a rulebook-coverage measure rather than an estimate of the structural object $m_i$.

To connect the structural object $m_i$ to the equal-weighted empirical proxy, Table~\ref{tab:weight_sensitivity} reports a mandate-weight sensitivity exercise. We recompute $\sum_s\omega_s e_{i,t}^{(s)}$ under transparent scenario weights: equal rulebook weights, DJIM/S\&P-heavy weights, MSCI-heavy weights, SC-heavy weights, and AAOIFI-heavy weights. These tilts are deliberately severe stress tests, not estimates of historical capital shares. They ask whether the equal-weighted ranking is fragile to concentrated mandate capital in one authority or index family. The answer is no for the main ranking: scenario scores have Spearman correlations of 0.992--1.000 with the equal-weighted score, mean within-month rank shifts of at most 1.3 percentage points, and top-quintile overlap above 87\%. The bottom-quintile overlap is mechanically high because all-excluded firms remain all-excluded under every scenario.

\input{table_routea_weight_sensitivity}

\subsection{September 2023 DJIM/S\&P methodology shock}

The Route A event study uses the August 2023 pre-event classification state. On August 4, 2023, S\&P Dow Jones Indices announced changes to the compliance criteria for the Dow Jones Islamic Market and S\&P Shariah index families. The changes removed cash and receivables screens while retaining other screens, and were first visible in pro-forma files beginning September 1, 2023 \citep{SPDJIDJIMAnnouncement2023,SPDJIShariahAnnouncement2023}. Firms that failed the old removed screens but passed the retained screens become mechanically eligible. Table \ref{tab:sep2023_groups} reports the treatment groups.

\begin{table}[!htbp]
\centering
\caption{September 2023 DJIM/S\&P treatment groups}
\label{tab:sep2023_groups}
\begin{tabular}{lrr}
\toprule
Group & Firms & Share \\
\midrule
Still ineligible & 2,607 & 66.64\% \\
Continuously eligible & 992 & 25.36\% \\
Newly eligible & 313 & 8.00\% \\
Newly eligible under DJIM & 289 & 92.33\% of newly eligible \\
Newly eligible under S\&P & 257 & 82.11\% of newly eligible \\
Newly eligible under both DJIM and S\&P & 233 & 74.44\% of newly eligible \\
DJIM-only newly eligible & 56 & 17.89\% of newly eligible \\
S\&P-only newly eligible & 24 & 7.67\% of newly eligible \\
\bottomrule
\end{tabular}
\end{table}

\subsection{Route B: SC Malaysia official-list events}

Route B uses all available SC Malaysia Shariah-compliant securities lists from November 2013 to November 2025 \citep{SCMalaysiaLists2013_2025}. The final parser covers 25 semi-annual list dates. We extract stock codes and company names from official PDF lists, validate full-list counts, and construct a canonical membership panel with one row per list date and SC stock code. The full compliant list is the authoritative event source. Direct inclusion and exclusion tables in the PDFs are used only for validation because their layout changes over time and older documents do not contain symmetric direct-change tables.\footnote{Our parser recovers 633 of an officially reported 653 stock codes from the November 2013 release; the 20-name shortfall reflects pagination and column-break artifacts in the earliest PDF format, which used a different layout than the post-2014 documents. We retain November 2013 as the baseline anchor for the May 2014 transition. Excluding the November 2013 list from the sample and re-anchoring on May 2014 leaves all reported event-study estimates economically and statistically unchanged; the audit comparison of parsed-versus-official counts for every release is reported in the replication archive.}

The Malaysia PDFs contain two empirically distinct dates. The \emph{list date} $\ell_t$ identifies the semi-annual membership period and is the date used to define transitions between membership lists. The \emph{release date} $a_t$ is the publication date printed on the official PDF cover page. For event-study timing, day 0 is the first security-specific trading day on or after $a_t$. The list date $\ell_t$ is retained for matching, membership-state definitions, and clustered inference because all firms in a semi-annual list share the same classification cycle. The replication archive records the release-date override table used for the 25 PDFs.

We define inclusion, exclusion, and continuous-compliance events at the official stock-code level before matching to Compustat Global securities. This order is central: defining events after security matching can turn linkage instability into artificial transitions. A second distinction is equally important. An official-list inclusion may be either a status change for an already traded security or the first appearance of a security listed after the preceding SC review. We identify the latter using the Compustat security's first observed trading date and exclude them from the primary reclassification sample. Table~\ref{tab:routeb_sample_flow} reconciles every reported sample count.

\input{table_routeb_sample_flow}
\FloatBarrier

We use LSEG Workspace in three ways. First, we retain contemporaneous investor-base salience measures as descriptive diagnostics for the broader Malaysia universe. The RIC crosswalk covers 930 of 940 candidate Malaysia securities in the broader cache and 859 securities with event-sample LSEG flags. Second, for the mechanism tests in Section~\ref{sec:malaysia}, we pull dated historical holder snapshots around SC list events. The historical treated-event cache covers 945 RIC-mapped inclusion and exclusion events and contains 55,180 holder rows. We also pull 178,058 historical holder rows for the continuously compliant matched controls used in the release-date-corrected event study. Holder names are classified into Shariah-sensitive, Malaysia state/pension, global passive, and sovereign/Gulf categories using transparent name-based rules, and portfolio-only versions exclude likely strategic or control holders. Third, we pull LSEG ETF NAV, shares outstanding, and current holdings for Malaysia Shariah ETFs around each SC release cycle. These ETF data support aggregate channel-capacity and creation-flow diagnostics, but they do not provide historical security-level benchmark weights. The ETF tests are therefore interpreted as a boundary test for listed ETF demand, not as a complete predicted-purchase design.

\section{Empirical Design}\label{sec:design}

\subsection{Monitoring validation}

The first empirical test asks whether classification uncertainty ranks future screen-implied eligibility changes:
\begin{equation}
    \Prb(\mathrm{Transition}_{i,t+1}=1)=
    \Lambda\left(\alpha+\beta_1\Disc_{i,t}+\beta_2P_{i,t}+\Gamma'X_{i,t}+\delta_{y(t)}\right),
\end{equation}
where $\Lambda(\cdot)$ is the logistic cumulative distribution function, $\mathrm{Transition}_{i,t+1}$ equals one if any researcher-emulated standard-specific label changes in the next consecutive calendar month, $P_{i,t}$ is proximity risk, $X_{i,t}$ includes size, $\Gamma$ is the associated control-coefficient vector, and $\delta_{y(t)}$ are calendar-year fixed effects. Standard errors are two-way clustered by CRSP security and calendar month. The outcome is not an official provider reclassification, and the test is not causal: threshold proximity is mechanically related to a screen-implied transition. The goal is validation of a monitoring ranking.

\subsection{Fama--MacBeth diagnostics}

Proposition \ref{prop:mass} implies that expected returns decline with potential permitted investor mass when $m_i$ captures mutually exclusive mandate-capital weights. Because those weights are unavailable over the full U.S. sample, we estimate reduced-form Fama--MacBeth diagnostics using equal-weighted rulebook characteristics:
\begin{equation}
    r_{i,t+1}-r_{f,t+1}=a_t+b_t Z_{i,t}+\Gamma_t'X_{i,t}+\varepsilon_{i,t+1},
\end{equation}
where $a_t$ is the month-$t$ intercept, $b_t$ is the month-$t$ slope on the diagnostic variable, $Z_{i,t}$ is alternatively eligibility share or implemented disagreement, $X_{i,t}$ is the control vector, $\Gamma_t$ is the month-specific control-coefficient vector, and $\varepsilon_{i,t+1}$ is the next-month residual. The baseline controls are log market capitalization, log book-to-market, 12-to-2 momentum, and one-month reversal. Because Shariah eligibility is partly determined by balance-sheet ratios, expanded specifications add operating profitability, ROA, asset growth, capital expenditure over lagged assets, and the leverage, cash, receivables, and income-proxy ratios used in the screening data. The main variables are standardized before estimation, so coefficients measure the next-month excess-return association for a one-standard-deviation change in the diagnostic variable. We report time-series average coefficients with Newey--West standard errors using 12 lags. These regressions describe equal-weighted formal permission; they are not structural estimates of mandate-weighted demand.

\subsection{Route A event study}

The September 2023 DJIM/S\&P event study uses CRSP daily stock returns and CRSP daily market index returns. Let $d$ index trading days around the event, let $r_{i,d}$ be the daily return of firm $i$, and let $r_{m,d}$ be the CRSP daily market return. For each event date, abnormal returns are estimated using a market model,
\begin{equation}
	    r_{i,d}=\alpha_i+\beta_i r_{m,d}+\epsilon_{i,d},
\end{equation}
estimated over trading days $[-250,-30]$ relative to the event. The fitted $\widehat{\alpha}_i$ and $\widehat{\beta}_i$ are firm-specific market-model parameters. The abnormal return is $\widehat{AR}_{i,d}=r_{i,d}-\widehat{\alpha}_i-\widehat{\beta}_ir_{m,d}$, and the event-window CAR is the conventional sum
\begin{equation}
    \CAR_i(\tau_1,\tau_2)=\sum_{d=\tau_1}^{\tau_2}\widehat{AR}_{i,d}.
\end{equation}
CARs are winsorized at the 1st and 99th percentiles within each event-window cross-section. The two event dates are the August 4, 2023 announcement date and the September 1, 2023 pro-forma visibility date. The preferred treatment group is firms newly eligible under both DJIM and S\&P. The preferred controls are matched still-ineligible firms. We report matched continuously eligible controls and unmatched newly-both specifications as secondary evidence.

\subsection{Route B event study}

Let $M_{\ell}$ denote the set of SC Malaysia stock codes in the official compliant list for list date $\ell$. A stock code is an inclusion if it is in $M_{\ell}$ but not $M_{\ell-1}$, an exclusion if it is in $M_{\ell-1}$ but not $M_{\ell}$, and continuous compliant if it is in both $M_{\ell}$ and $M_{\ell-1}$. Events are then linked to Compustat securities using the stable crosswalk. Let $a_{\ell}$ denote the release date printed in the PDF for list date $\ell$. For event-study timing, $d=0$ is the first security-specific trading day on or after $a_{\ell}$.

For the Malaysia events, daily abnormal returns are market-adjusted returns, $AR_{i,d}=r_{i,d}-r^{MYS}_{m,d}$, where $r^{MYS}_{m,d}$ is the equal-weighted Compustat Global Malaysia daily return in the analysis universe. Event-window CARs sum market-adjusted returns over the window,
\begin{equation}
\CAR_{i,\ell}(\tau_1,\tau_2)=\sum_{d=\tau_1}^{\tau_2}AR_{i,d}.
\end{equation}
The Route A and Route B CARs both sum daily abnormal returns, but Route A uses a firm-specific CRSP market model whereas Route B uses a Malaysia market-adjusted benchmark. The two routes are not pooled in one regression. For each treated event, the candidate pool consists of continuously compliant securities from the same list date with complete pre-event log market capitalization, return, volatility, turnover, and Amihud illiquidity. The five variables are standardized within the treated observation and its date-specific candidate pool. We compute Euclidean distance and select the three nearest controls independently for each treated event, without a caliper and with replacement. Controls may therefore be reused within and across review cycles; missing candidates are excluded, and deterministic ties are resolved by event identifier. This is greedy nearest-neighbor matching rather than a globally optimal assignment. A robustness design retains a treated event only when at least three same-date controls share its two-digit SIC industry and does not fall back to another industry. For treated firm $i$ and matched controls $\mathcal{C}(i,\ell)$, the matched abnormal-return difference is
\begin{equation}
\diffCAR_{i,\ell}(\tau_1,\tau_2)
=
\CAR_{i,\ell}(\tau_1,\tau_2)
-
\frac{1}{|\mathcal{C}(i,\ell)|}\sum_{j\in\mathcal{C}(i,\ell)}
\CAR_{j,\ell}(\tau_1,\tau_2).
\end{equation}

Our primary Route B event-study specification focuses on continuously listed, liquidity-qualified SC Malaysia inclusion events. A continuously listed transition requires the security's first observed trading date to precede the immediately previous SC review; this removes new listings and security-history changes that enter the official universe between reviews. Liquidity is measured before the release date as average share turnover over available trading days in $[-30,-1]$. If $\mathcal{D}_{i,\ell}^{pre}$ is that set of available pre-event trading days, then
\begin{equation}
\overline{Turn}_{i,\ell}^{[-30,-1]}
=
\frac{1}{|\mathcal{D}_{i,\ell}^{pre}|}\sum_{d\in\mathcal{D}_{i,\ell}^{pre}}\frac{Volume_{i,d}}{SharesOut_{i,d}}.
\end{equation}
Let $Q^{turn}_{0.25}$ be the bottom-quartile cutoff of $\overline{Turn}_{i,\ell}^{[-30,-1]}$ in the full matched event universe before treatment-specific filtering. The primary liquidity-qualified inclusion indicator is
\begin{equation}
\mathcal{L}_{i,\ell}
=
\mathbf{1}\left\{\overline{Turn}_{i,\ell}^{[-30,-1]}\ge Q^{turn}_{0.25}\right\}.
\end{equation}
This screen is pre-event and economically motivated: a formal permission label should be more likely to move prices when the security is sufficiently tradable for mandate-constrained capital to adjust. The primary control group consists of three continuously compliant securities matched within the same SC list date on pre-event market capitalization, pre-event returns, volatility, turnover, and illiquidity. The primary outcome is the matched treatment-minus-control cumulative abnormal return over the $[0,10]$ trading-day window for inclusion events with $\mathcal{L}_{i,\ell}=1$. We use this window because official classification changes may take several trading days to be incorporated by mandate-constrained investors, intermediaries, and investment platforms, while remaining short enough to limit contamination from unrelated firm news. Inference clustered by SC list date, together with wild-cluster bootstrap $p$-values by list date, is treated as the primary inference. The full release-date-corrected inclusion sample is retained as the benchmark rather than discarded.

This creates an explicit evidence hierarchy. Pair-level tests describe the average treated-control contrast across matched pairs, but list-date clustered and wild-cluster tests are the inferential benchmark because SC lists are released semi-annually and many firm events share the same announcement date. The clustered tests have only 24 non-baseline list-date clusters, so they are intentionally conservative and are interpreted as such.

The design addresses the main identification threats directly. Same-list-date matching absorbs market-wide Malaysia news and list-date shocks common to treated and control firms; release-date event timing aligns day 0 with the public information date printed in the official PDF; matching on size, pre-event returns, volatility, turnover, and illiquidity reduces observable differences in return dynamics; and defining events at the official stock-code level before Compustat matching prevents security-linkage noise from becoming artificial treatment. A timing placebo shifts the same fixed matched sets three months before the public release, between official reviews. The older six-month shift is retained only as a preceding-review falsification because it coincides approximately with another official classification cycle. Date-clustered and wild-cluster inference address common list-date shocks, while two-way date-security clustering and a first-inclusion-only sample address repeated treated securities. The remaining risks are firm-specific news, partial anticipation, residual pre-event differences, reused controls, and only 24 list-date clusters. Route B is therefore interpreted as a conditional price association rather than a causal law.

All other event windows, exclusion-event estimates, unmatched comparisons, timing falsifications, and strict SIC-2 specifications are reported as consistency checks, asymmetry tests, robustness checks, or diagnostics rather than as separate primary tests.

\section{Route A Results: Classification Uncertainty Across Standards}\label{sec:us_results}

\subsection{Classification uncertainty ranks screen-implied transitions}

Table \ref{tab:reclass} reports the monitoring validation. The test is not causal: proximity to an implemented threshold is mechanically related to crossing that threshold, and disagreement partly reflects the same boundary geometry. The operational question is whether the variables rank next-month screen-implied eligibility instability. Implemented disagreement has a coefficient of 1.1609 conditional on proximity, while proximity has a coefficient of 3.2384. With two-way clustering by security and calendar month, the corresponding $z$-statistics are 39.75 and 30.61. Larger securities are less likely to experience an implemented transition. The regression uses a stratified random sample of 300,000 security-months; its purpose is ranking validation rather than inference about an economic cause of official provider decisions.

\input{table_routea_monitoring_clustered}

\begin{figure}[!htbp]
\centering
\includegraphics[width=0.78\textwidth]{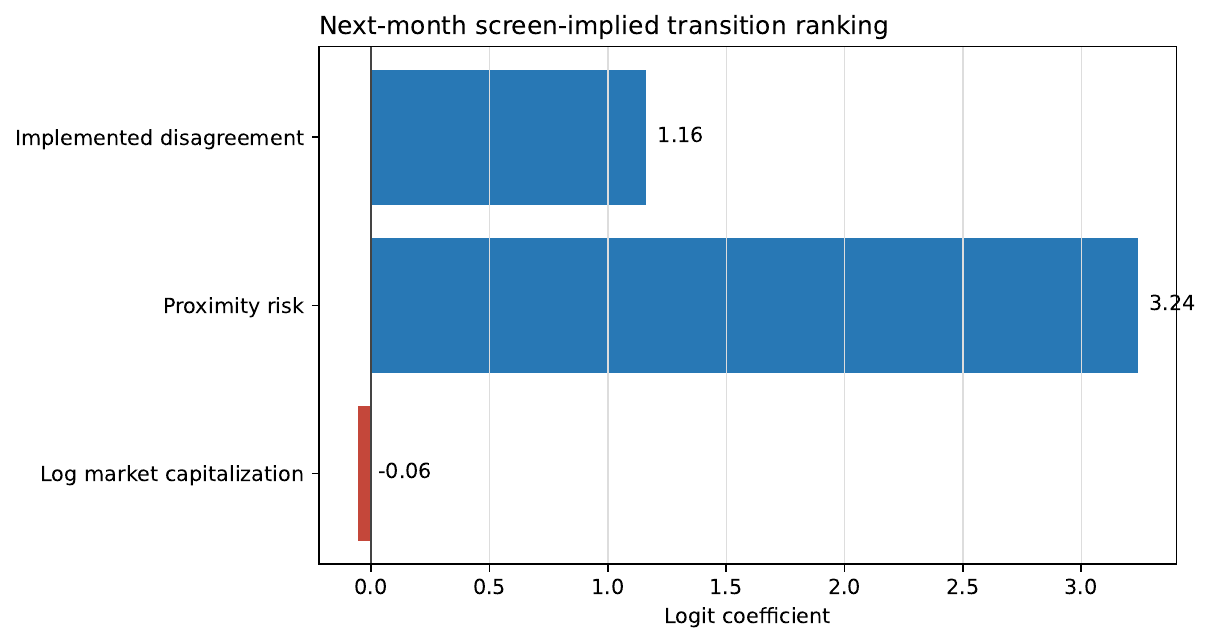}
\caption{Monitoring-state validation. Implemented disagreement and proximity to active screening boundaries rank next-month screen-implied eligibility transitions.}
\AltText{Bar chart of logit coefficients for next-month screen-implied transitions. Proximity risk has the largest positive coefficient, implemented disagreement is also positive, and log market capitalization is slightly negative.}
\label{fig:monitoring}
\end{figure}
\FloatBarrier

Out-of-sample AUC, Brier-score, and decile-lift diagnostics are reported in the Internet Appendix so the present paper does not turn the monitoring validation into a separate prediction study. They confirm ranking value, but they do not remove the mechanical connection between implemented boundaries and implemented transitions. The defensible conclusion is narrow: the measures identify holdings whose researcher-emulated investability is unstable at the next monitoring date.

The Internet Appendix also carries the positive-IDIT-only income proxy through the downstream Route A tests. Mean eligibility rises from 0.276 to 0.323 and mean disagreement from 0.225 to 0.289, so the measurement choice is economically material. Nevertheless, the rebuilt monitoring coefficients remain positive: 1.0687 for disagreement and 3.1006 for proximity under security-and-month clustering. The monitoring ranking is therefore not an artifact of including NOPI, although the levels of eligibility and disagreement are proxy-sensitive.

\subsection{Fama--MacBeth expected-return diagnostics}

Table \ref{tab:fmb_quality} reports Fama--MacBeth diagnostics. These regressions are a diagnostic contrast rather than a central pricing test. Eligibility share is positive in the baseline specification, opposite to the sign one would obtain by pretending that an equal rulebook count is structural permitted investor mass. Once controls include profitability, investment, capital expenditure, and the screening ratios, the eligibility-share coefficient is economically zero and statistically insignificant. Implemented disagreement remains positive after the expanded controls; this is a residual characteristic association, not evidence that disagreement is a priced state variable. The table therefore closes, rather than evades, the empirical distinction between equal rulebook coverage and mandate-weighted capital.

\input{table_fmb_quality_controls}

Under the IDIT-only proxy, the fully adjusted eligibility coefficient is $-0.0005$ ($t=-1.94$), while the disagreement coefficient remains positive at 0.0006 ($t=2.42$). Thus the absence of a robust positive eligibility premium survives, whereas the exact disagreement slope remains a proxy-dependent characteristic association. Route A validates a monitoring object and rejects an unconditional equal-weighted exclusion-premium interpretation; it does not structurally test Proposition~\ref{prop:mass}.

\subsection{September 2023 DJIM/S\&P methodology shock}

Table \ref{tab:daily_event_us} reports conventional summed daily market-model abnormal returns. We interpret the September 2023 DJIM/S\&P methodology shock as a boundary-condition test: a rulebook change that mechanically expands formal investability need not reprice firms unless it also expands marginal demand. The announcement-date results are positive in the broad unmatched newly-both specification: 0.750 percentage points over $[0,1]$ and 1.210 percentage points over $[0,3]$. The effects are not statistically significant after matching. Around the September 1 pro-forma date, matched estimates are small or negative. Figure \ref{fig:us_event} shows the pattern across windows.

\input{table_routea_daily_event_corrected}

\begin{figure}[!htbp]
\centering
\includegraphics[width=0.86\textwidth]{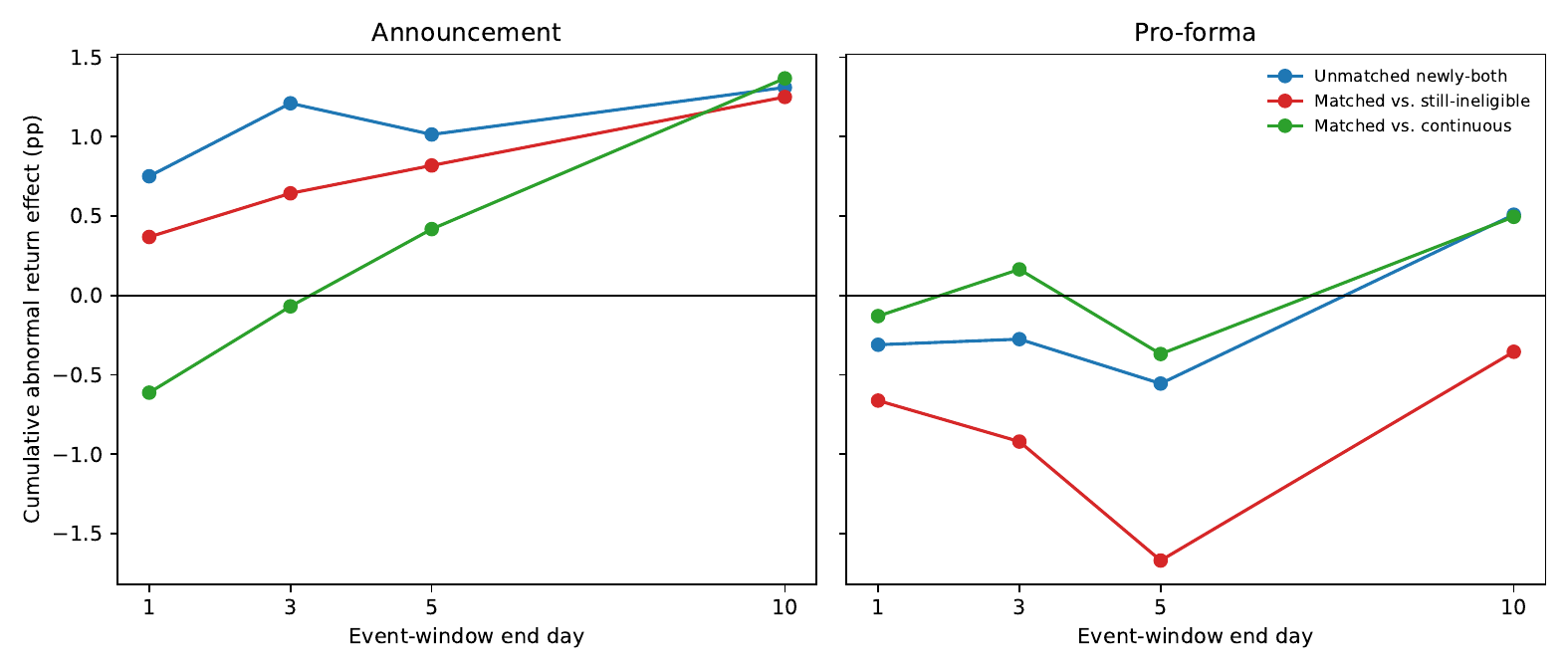}
\caption{September 2023 DJIM/S\&P methodology-shock estimates. Announcement-window effects are positive in unmatched samples but attenuate after matching; pro-forma matched estimates do not show robust positive repricing.}
\AltText{Grouped line chart of event-window CAR effects for the September 2023 DJIM/S\&P methodology change. Unmatched announcement effects are positive, matched announcement effects are smaller, and matched pro-forma effects are flat or negative.}
\label{fig:us_event}
\end{figure}

Placebo evidence reinforces caution. Table \ref{tab:us_placebo} reports comparable September dates in non-event years. Most windows are insignificant, but 2020 shows a positive $[0,3]$ placebo and 2022 shows strongly negative $[0,3]$ and $[0,5]$ effects. Figure \ref{fig:us_placebo} visualizes these placebo coefficients. The conclusion is not that the 2023 pro-forma effect is negative; it is that the price evidence is too fragile to support an unconditional inclusion-premium interpretation.

\input{table_routea_placebo_corrected}

\begin{figure}[!htbp]
\centering
\includegraphics[width=0.84\textwidth]{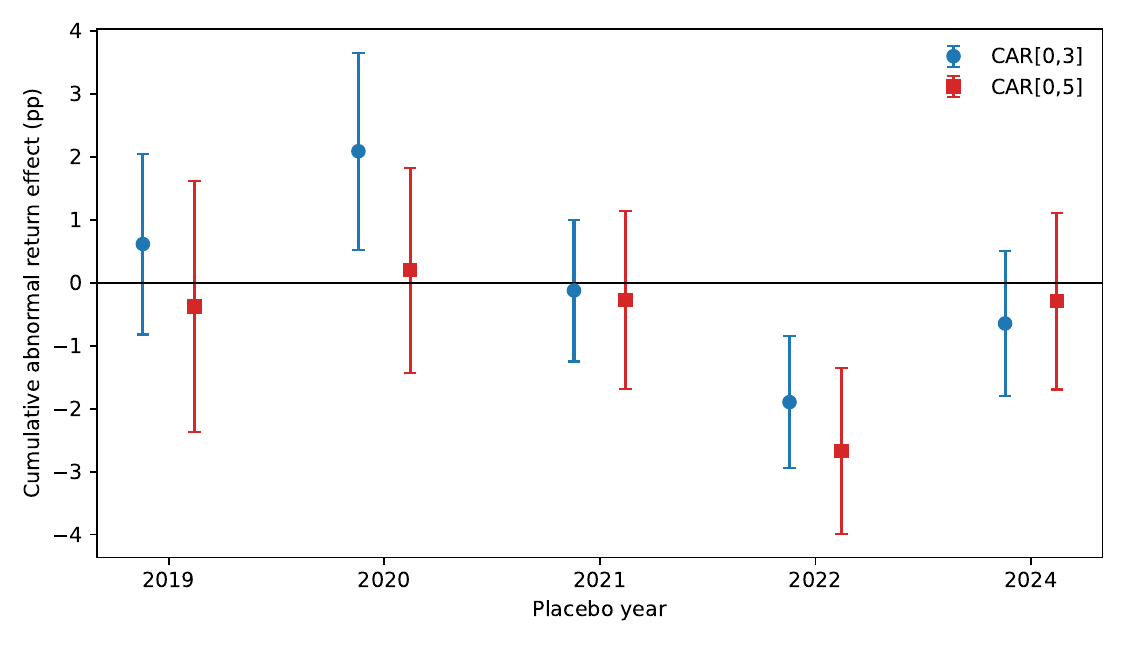}
\caption{U.S. September placebo tests. Points report coefficients and whiskers report 95\% confidence intervals for discrete placebo years; observations are not connected across years. The mixed pattern cautions against interpreting a single matched event-window estimate as mechanical proof of price pressure.}
\AltText{Grouped coefficient plot of placebo CAR effects for discrete September dates in non-event years, with separate points and confidence intervals for the zero-to-three-day and zero-to-five-day windows.}
\label{fig:us_placebo}
\end{figure}

\FloatBarrier
\section{Route B Results: Official SC Malaysia Classification Events}\label{sec:malaysia}

\subsection{Raw event returns}

Table \ref{tab:malaysia_raw} reports release-date-corrected raw market-adjusted CARs by event type. Inclusions are positive over the medium $[0,10]$ window, while exclusions are small and not persistently negative under release-date timing. Continuous-compliant firms have returns close to zero. These raw differences motivate the matched analysis but are not the main evidence because event firms differ in size, liquidity, and pre-event dynamics.

\begin{table}[!htbp]
\centering
\caption{Route B raw event CARs by classification event}
\label{tab:malaysia_raw}
\begin{tabular}{lrrrrr}
\toprule
Event type & $N$ & $[-10,-1]$ & $[0,3]$ & $[0,5]$ & $[0,10]$ \\
\midrule
Continuous compliant & 12,024 & -0.11\% & -0.09\% & -0.09\% & -0.28\% \\
Exclusion & 375 & 0.46\% & 0.02\% & -0.09\% & -0.62\% \\
Inclusion & 595 & -0.85\% & -0.32\% & 0.00\% & 1.14\% \\
\bottomrule
\end{tabular}
\end{table}

\subsection{Matched release-date-corrected inclusion CARs}

Table~\ref{tab:routeb_matching_balance} reports matched-pair balance for the corrected primary sample. The treated and matched-control securities are close on log market capitalization, pre-event returns, volatility, turnover, and Amihud illiquidity. The largest standardized difference is 0.150 for log dollar volume; all matching variables are below 0.09 in absolute value.

\input{table_routeb_matching_balance}
\FloatBarrier

Table \ref{tab:routeb_fundamentals} reports the complete specification ladder. All 582 matched inclusions, including first appearances after the preceding review, produce 0.78 percentage points over $[0,10]$ and 1.44 points over $[0,20]$. Removing listing-history contamination leaves 410 continuously listed inclusions. Their estimates are 0.90 and 1.46 percentage points, with wild-cluster $p$-values above conventional significance levels. The turnover floor then leaves 295 primary events with 1.76 percentage points over $[0,10]$ and 2.25 points over $[0,20]$. The table therefore makes explicit that the stronger inference emerges at the turnover-screen step, not merely from removing new listings.

\input{table_routeb_fundamentals}

The Compustat Global fundamentals merge originally produced no usable events because numeric-looking GVKEYs were read with a trailing ``.0'' in the event file. After normalizing both sources to the same six-character identifier, 834 matched events have two pre-event fiscal observations. The 262 primary inclusions with complete treated and control histories have unadjusted estimates of 2.15 and 2.75 percentage points. Adding treated-minus-matched-control changes in debt, cash, receivables, book leverage, log assets, and sales-to-assets, together with market controls, leaves the centered intercepts at those same common-sample means while reducing their standard errors. The larger coefficients relative to the 295-event row therefore reflect complete-case sample composition, not the addition of controls. The cached panel does not contain usable ROA, and the paper does not imply otherwise.

\subsection{Event-time profile and pre-event diagnostics}

Table~\ref{tab:routeb_event_windows} reports all standard windows for the corrected primary sample. The response is not immediate: $[0,1]$ is negative and insignificant, $[0,3]$ is positive but only marginal under wild inference, and the estimate accumulates through days 10 and 20. The cumulative pre-event contrasts over $[-20,-1]$, $[-10,-1]$, and $[-5,-1]$ are all statistically insignificant.

\input{table_routeb_event_windows}

Figure~\ref{fig:malaysia_dynamic} plots the matched daily path normalized to zero at day $-1$. A list-date-clustered joint Wald test of the 20 daily pre-event coefficients rejects equality to zero ($p=0.006$), driven by isolated differences around days $-17$ and $-11$ rather than a monotonic run-up in the immediate pre-event window. With 20 restrictions and only 24 release-date clusters, that test is demanding and potentially size-sensitive, but it cannot be ignored. The combination of null cumulative pre-windows and a rejected long daily joint test limits a clean parallel-trends or causal interpretation.

\begin{figure}[!htbp]
\centering
\includegraphics[width=0.88\textwidth]{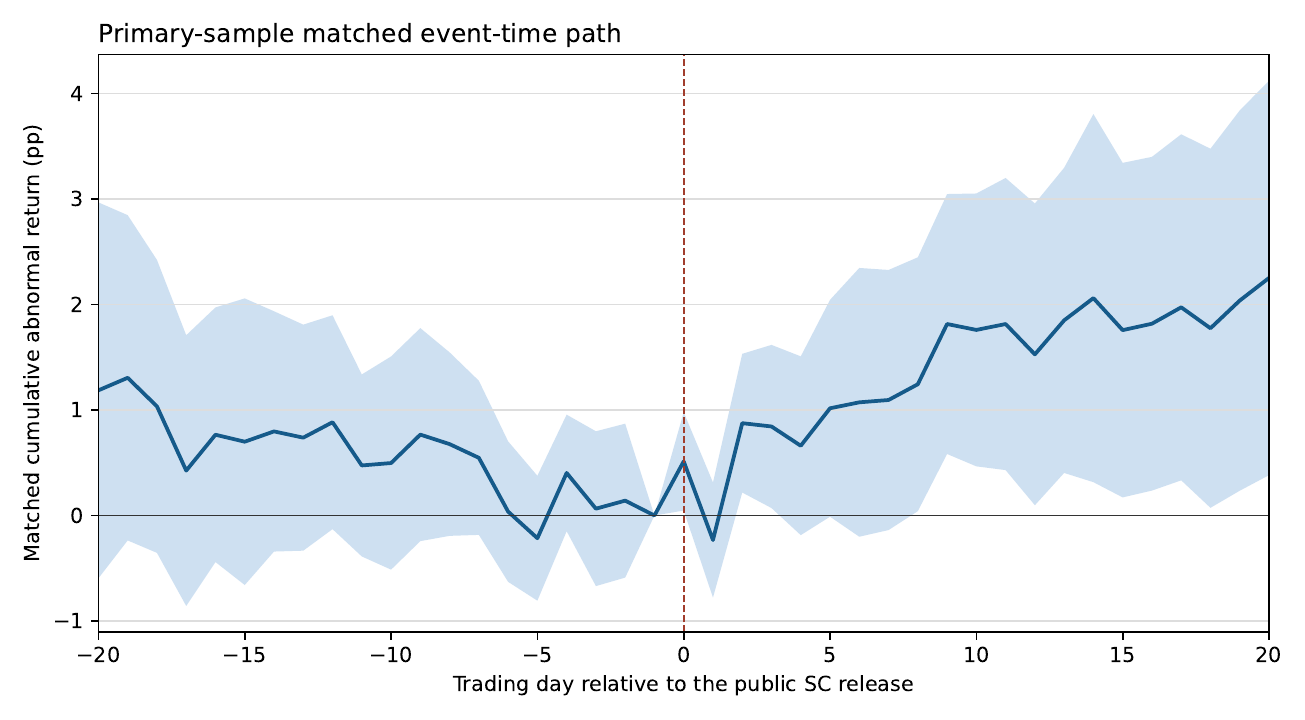}
\caption{Matched event-time path for the primary SC Malaysia inclusion sample. The path is normalized to zero at day $-1$; shaded bands use list-date-clustered standard errors.}
\AltText{Matched cumulative abnormal-return path from 20 trading days before through 20 days after the public SC Malaysia release. The pre-event path is not monotonic, and the post-event effect accumulates mainly after day 3.}
\label{fig:malaysia_dynamic}
\end{figure}
\FloatBarrier

\subsection{What the turnover screen does and does not establish}

Table~\ref{tab:malaysia_liquidity_diagnostics} recomputes nearby pre-event screens after excluding new listings. The $[0,10]$ estimate remains positive under the reclassification-sample turnover quartile, tercile, and median cutoffs, with wild-cluster $p$-values of 0.014, 0.038, and 0.009. Market-cap screens do not reproduce the result. This reduces the concern that the fixed cutoff is an isolated numerical choice, but it does not make the result unconditional.

\input{table_routeb_threshold_sensitivity}

The primary turnover floor is a sample definition based exclusively on pre-event information, not evidence of a monotonic liquidity mechanism. Table~\ref{tab:routeb_turnover_slopes} makes this distinction explicit by parameterizing separate slopes for continuously listed inclusions and exclusions. The inclusion slope is 0.09 percentage points per standard deviation over $[0,10]$ ($p_{\mathrm{wild}}=0.925$) and $-0.03$ over $[0,20]$ ($p_{\mathrm{wild}}=0.979$). The previously reported positive inclusion-by-turnover interaction came from a strongly negative exclusion slope. It cannot be interpreted as the inclusion effect rising with turnover. The defensible result is therefore conditional on a pre-event turnover floor; the data do not establish a linear dose-response relation within included securities.

\input{table_routeb_turnover_slopes}
\FloatBarrier

\subsection{List-date influence}

The small number of SC Malaysia list-date clusters is the main inference constraint. Figure~\ref{fig:malaysia_jackknife} therefore reports leave-one-list-date-out estimates for the corrected primary sample. The $[0,10]$ estimate remains positive when every release is omitted in turn, ranging from 1.29 to 1.98 percentage points. The $[0,20]$ range is 1.78 to 2.55 percentage points. The level is therefore not driven by one release, although 24 clusters remain a material limitation on inference.

\begin{figure}[!htbp]
\centering
\includegraphics[width=0.86\textwidth]{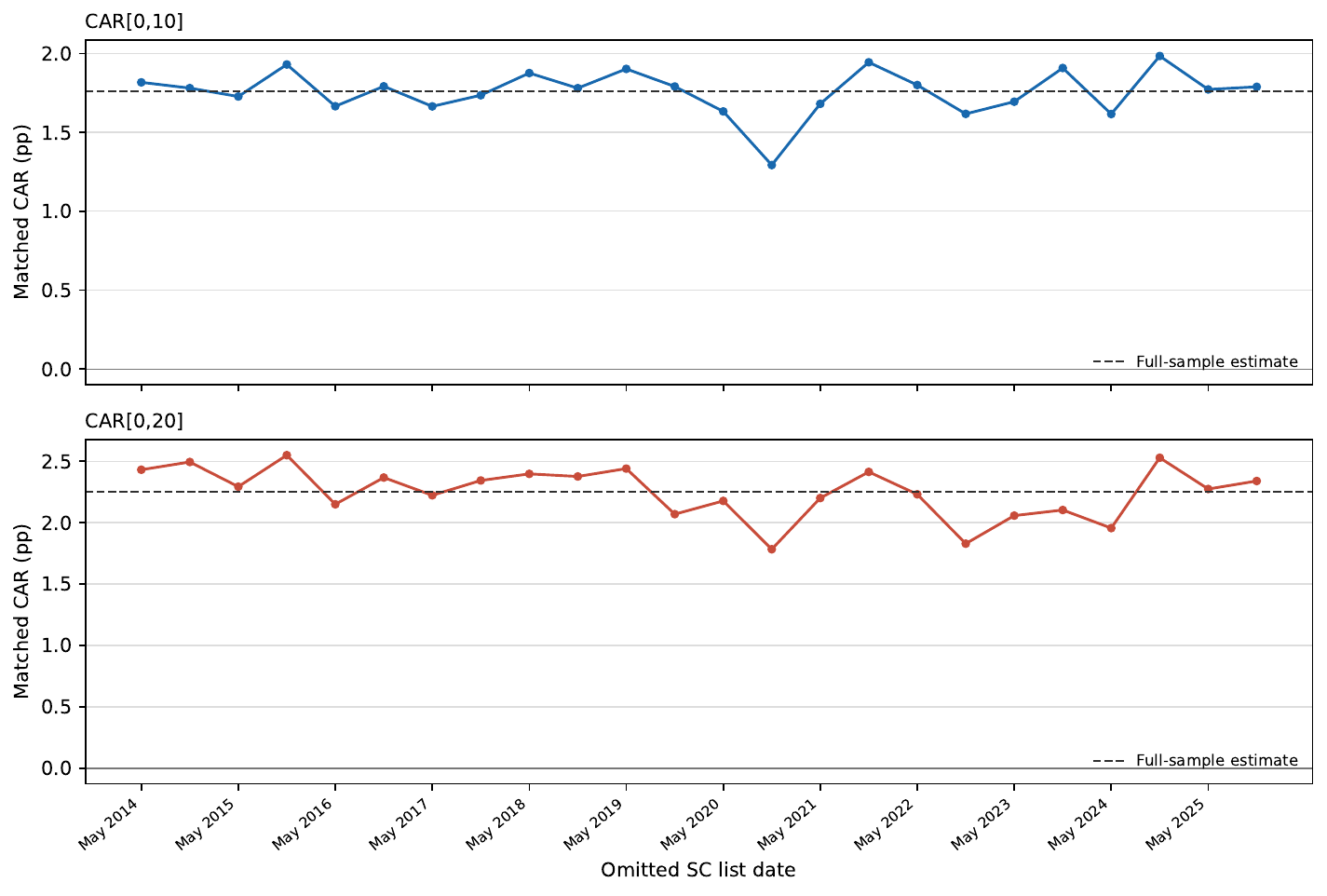}
\caption{Leave-one-list-date-out estimates for continuously listed, liquidity-qualified SC Malaysia inclusions. Dashed lines report the corresponding corrected full-sample estimates for each event window.}
\AltText{Line chart showing the matched inclusion diffCAR when each SC Malaysia list date is omitted in turn. All leave-one-date-out estimates remain positive and close to the full-sample estimate.}
\label{fig:malaysia_jackknife}
\end{figure}
\FloatBarrier

\subsection{Industry and repeated-security robustness}

The primary sample contains 295 events for 237 distinct treated securities, so 44 securities appear more than once. Two-way clustering by list date and treated security yields $p$-values of 0.008 and 0.015 over $[0,10]$ and $[0,20]$. Retaining only the first primary inclusion per security leaves 237 observations and estimates of 1.69 and 2.41 percentage points, both significant under list-date wild-cluster inference. The Internet Appendix reports these results and audits control reuse. A stricter same-date, same-SIC design retains 266 treated events and produces positive but statistically imprecise estimates of 0.87 and 1.48 percentage points. Thus repeated treated securities do not generate the primary estimate, while exact industry comparability lowers precision and effect size.

\subsection{Historical ownership diagnostics}

The event-study evidence is strongest when it is interpreted as a conditional price-of-permission result: the official label matters when it can coordinate investors who are permitted, willing, and able to trade. To examine this mechanism directly, we reconstruct dated LSEG holder snapshots around each SC Malaysia list cycle for the RIC-mapped treated-event universe. For each event, we observe investor names and holder types approximately one month before and one month after the official list date, classify holders into Shariah-sensitive, Malaysian state/pension, sovereign/Gulf, broad mandate-constrained, and global passive buckets, and compute pre-to-post changes in the number of investors and percentage ownership. We also create portfolio-only constrained buckets that exclude likely strategic or control holders.

Table~\ref{tab:routeb_ownership_did} reports absolute ownership changes and the matched-control difference-in-differences comparison requested by the mechanism claim. Shariah-sensitive portfolio ownership rises by 0.022 percentage points among the 288 treated inclusions with dated coverage and is unchanged among their controls. The treated change is borderline under wild-cluster inference ($p_{\mathrm{wild}}=0.051$), but the treated-minus-control difference is imprecise ($p_{\mathrm{wild}}=0.360$). Broad constrained ownership produces a 0.053-percentage-point difference with $p_{\mathrm{wild}}=0.282$; the global passive placebo is null. Thus the snapshots show suggestive mandate-sensitive reallocation, not an identified constrained-buyer first stage relative to matched controls.

\input{table_routeb_ownership_did}

\subsection{Channel-narrowing diagnostics}

The Internet Appendix reports corrected-sample demand-pressure, abnormal-volume, and ETF-capacity diagnostics. In the 288 primary events with the required LSEG variables, the predicted constrained-pressure proxy is positively related to subsequent constrained ownership change ($p_{\mathrm{wild}}=0.037$), but its CAR coefficient is negative and insignificant. The pressure-by-turnover term is also insignificant. Realized buying pressure is a post-event, endogenous variable; its CAR coefficient is negative and insignificant, while its abnormal-volume association is only marginal. Listed ETF capacity and share creations likewise do not explain CARs. Direct pressure proxies are therefore either insignificant or enter with signs inconsistent with a simple buying-pressure account. These results narrow the channel but do not establish a broader mechanism by elimination.

\subsection{Timing falsifications and exclusion-window interpretation}

Table~\ref{tab:malaysia_placebo} shifts each primary event and its fixed matched set three calendar months before the public release, placing the pseudo-event between official review dates. Estimates from $[0,1]$ through $[0,20]$ are positive but small and statistically indistinguishable from zero. This is a genuine non-review timing falsification. It does not repair the rejected long daily pretrend test, but it shows that the primary matched sets do not reproduce the post-release magnitude at a nearby midpoint date.

\input{table_routeb_placebo_clean}

The former six-month shift coincides approximately with the preceding official review. It is retained in the Internet Appendix and relabelled a preceding-review falsification rather than a non-event placebo. Under release-date timing, actual exclusion estimates are not persistently negative and are not robust under list-date clustered or wild inference. Exclusions therefore remain asymmetry diagnostics rather than a separate pricing result.

\section{Economic Interpretation and Portfolio Implications}\label{sec:interpretation}

The evidence yields three linked conclusions about Shariah investability. First, a current binary pass label omits a dynamic risk dimension. Cross-standard disagreement identifies fragmentation in the current eligible investor set, while proximity to active financial boundaries identifies securities whose researcher-emulated classifications are less stable. Both variables rank next-month screen-implied transitions in the U.S. panel. That monitoring value is relevant for mandate governance and portfolio turnover, but it is partly mechanical and is not evidence that the variables cause official-provider decisions.

Second, formal Shariah eligibility is not equivalent to a priced expansion in the investor base. The U.S. Fama--MacBeth diagnostics do not support an unconditional permission premium in equal-weighted rulebook data; profitability, investment, and screening-ratio controls absorb the eligibility-share slope. The September 2023 DJIM/S\&P methodology shock also produces no robust matched repricing, and its placebo pattern is unstable. These findings do not reject the permitted-investor-mass comparative static because historical mandate-capital weights are not observed. They reject the simpler substitution of rulebook counts for economically marginal constrained capital.

Third, official SC Malaysia inclusions exhibit a conditional positive return pattern. The 410 continuously listed inclusions are positive but imprecise, while conventional significance emerges after applying the pre-event turnover floor. The 295-event estimate survives leave-one-date-out, first-inclusion-only, two-way date-security clustering, and a null mid-review timing placebo. Its larger complete-case coefficient is already present before controls are added. Nearby turnover thresholds reproduce the result, whereas market-cap restrictions do not. These checks make listing contamination, one release, and a unique cutoff less plausible explanations.

The same evidence prevents a stronger causal interpretation. The inclusion return does not increase monotonically with turnover, strict same-SIC estimates are imprecise, and the joint test of 20 daily pre-event coefficients rejects. Institutional recognition distinguishes the SC Malaysia setting conceptually, but it is not a separately identified treatment. Predicted pressure forecasts some constrained ownership adjustment but not CARs, and matched ownership changes remain imprecise. The paper therefore identifies neither a security-level demand instrument nor a unique marginal buyer.

The U.S.--Malaysia contrast is consequently an institutional comparison rather than a comparison of one failed and one successful event study. The U.S. evidence shows that formal rulebook eligibility can change without robust matched repricing. The Malaysia evidence shows that an official classification can have price content in a recognized local Shariah market when the security also satisfies an ex ante tradability condition. Country, market structure, event type, and investor composition differ simultaneously, so the comparison does not causally identify recognition. It instead establishes the paper's central boundary condition: formal permission alone is insufficient.

For Shariah-constrained portfolio managers, the practical implication is to separate current eligibility, transition risk, and realized event risk. Current eligibility determines whether a position may be held. Cross-standard disagreement and boundary distance identify holdings whose permission is fragmented or comparatively fragile. Authority-specific mandate exposure and tradability determine whether an actual reclassification is likely to require economically meaningful adjustment. Used jointly, these objects can support surveillance of possible forced-divestment exposure, candidate entrants, expected turnover, and liquidity needs around official reviews. Treating them separately also prevents every near-boundary firm from being labelled an imminent reclassification and every official event from being treated as a mechanical trading signal.

The paper does not estimate the profitability of acting on this information. A tradable early-warning strategy would require out-of-sample forecasts of official rather than researcher-emulated classifications, a decision rule fixed before each review, implementation lags, transaction costs, and investable capacity. Those are distinct empirical requirements. The present contribution is the prior step: it identifies dynamic investability as portfolio-relevant information and tests when realized permission changes have price content. The broader finance relevance follows from the same investor-set logic. Green taxonomies, exclusionary mandates, and sustainability benchmarks can also segment capital when recognized institutional classifications define feasible investment universes.

Three limitations bound the contribution. Historical mandate-capital weights are unavailable, so the structural $m_i$ is not directly estimated. The LSEG ownership, demand-pressure, abnormal-volume, and ETF diagnostics do not isolate a price-setting shock. Finally, the rejected long daily pretrend test and the dependence of conventional significance on the turnover-qualified sample limit causal interpretation. These constraints are substantive. The defensible empirical claim is a conditional price association for official Shariah permission, not causally isolated mandate-constrained buying.

\section{Conclusion}\label{sec:conclusion}

Shariah investability is not only a binary current state. Recognized screening standards can disagree over the same firm, and firms with identical current labels can lie at very different distances from the boundaries that determine future eligibility. Cross-standard disagreement and boundary proximity therefore reveal complementary dimensions of classification risk: fragmentation of the current feasible investor set and instability of that set over time.

The permitted-investor-mass framework identifies the economic object behind those classifications. A Shariah label matters through the constrained capital governed by the relevant standard or authority, not through a simple count of approving rulebooks. Route A supports the monitoring implication but not an unconditional pricing interpretation. Disagreement and proximity rank next-month screen-implied transitions, while equal-weighted eligibility and the September 2023 DJIM/S\&P methodology event do not provide robust evidence of a permission premium.

Official SC Malaysia classifications provide the paper's central price evidence. The corrected primary sample of 295 continuously listed, liquidity-qualified inclusions earns 1.76 percentage points over $[0,10]$ and 2.25 points over $[0,20]$ under list-date clustered and wild-cluster inference. The estimates survive leave-one-date-out, first-inclusion-only, date-security clustering, and relative-fundamentals specifications, and the mid-review timing placebo is null. The result nevertheless remains conditional: the 410-event continuously listed sample is imprecise, the inclusion effect does not rise linearly with turnover, strict industry matching weakens precision, and the long daily pre-event joint test rejects. Ownership diagnostics do not identify a unique marginal buyer.

Formal Shariah classification is therefore neither irrelevant nor mechanically priced. It can have price content when it defines an institutionally recognized investment universe and the affected security can be traded, but the evidence does not isolate a clean causal demand shock. The portfolio interpretation is precise: a current label tells a constrained manager what may be held, while classification-boundary risk identifies where that feasible universe is vulnerable to change. This information can support mandate-risk monitoring and preparation for rebalancing, but the paper does not establish a profitable pre-reclassification strategy. The broader finance implication is that institutional classifications matter when they alter feasible investor participation. Applications to other constrained investment universes follow from that investor-base mechanism; the central contribution remains the conversion of static Shariah eligibility into a dynamic, financially relevant classification-risk state.

\clearpage

\section*{Declaration of competing interests}

The authors declare that they have no known competing financial interests or personal relationships that could have appeared to influence the work reported in this paper.

\section*{Funding}

This research did not receive any specific grant from funding agencies in the public, commercial, or not-for-profit sectors.

\section*{Data availability}
The replication package for this paper will include the programs needed to reproduce the empirical tables and figures, the data dictionary, merge-key documentation, parser logs, and synthetic input files that illustrate the workflow. Public Securities Commission Malaysia list files are cited in the reference list and can be obtained from the Securities Commission Malaysia Islamic Capital Market publication archive. CRSP, Compustat, WRDS, LSEG Workspace, and index-provider inputs are proprietary and cannot be redistributed. Users with appropriate licenses can reproduce the restricted-data components using the supplied scripts. Where restricted source data cannot be shared, the replication package will include pseudo data, derived non-proprietary outputs where permitted, and run logs generated from the licensed-data workflow.

\clearpage
\section*{Declaration of generative AI use}

During the preparation of this work, the authors used Claude (Anthropic) and Codex (OpenAI) to solely assist with LaTeX debugging. All formulas, empirical claims, references, notation, and manuscript content were subsequently reviewed and verified by the authors, who take full responsibility for the content of the submitted manuscript.

\section*{Acknowledgements}

The authors thank colleagues and seminar participants for helpful comments and discussions. We are particularly grateful for feedback on the paper's theoretical framework, empirical design, data construction, and event-study interpretation. The paper benefited from discussions on investor segmentation, classification uncertainty, and mandate-constrained capital. The authors acknowledge the support of the Institute of Finance and Technology at University College London. All remaining errors are our own.

\bibliographystyle{elsarticle-harv}
\bibliography{references}

\clearpage
\input{appendix_content}

\end{document}

%% file: table_routea_weight_sensitivity.tex
\begin{table}[!htbp]
\centering
\caption{Mandate-weight scenario sensitivity for the permitted-investor-mass proxy}
\label{tab:weight_sensitivity}
\begin{adjustbox}{max width=\textwidth}
\begin{tabular}{lL{3.7cm}cccc}
\toprule
Scenario & Weight tilt & Spearman $\rho$ vs. equal & Mean rank shift & Top-quintile overlap & Bottom-quintile overlap \\
\midrule
        Equal rulebook weights & All seven standards $=1/7$ & 1.000 & 0.0\% & 100.0\% & 100.0\% \\
        DJIM/S\&P-heavy & DJIM $=0.30$, S\&P $=0.30$ & 0.996 & 0.8\% & 89.4\% & 100.0\% \\
        MSCI-heavy & MSCI main $=0.25$, M-Series $=0.25$ & 1.000 & 0.1\% & 99.0\% & 100.0\% \\
        SC-heavy & SC Malaysia $=0.40$ & 0.992 & 1.3\% & 87.8\% & 100.0\% \\
        AAOIFI-heavy & AAOIFI $=0.40$ & 0.995 & 0.9\% & 91.9\% & 100.0\% \\
\bottomrule
\end{tabular}
\end{adjustbox}
\begin{minipage}{0.96\textwidth}
\footnotesize \emph{Notes:} The table recomputes the permission score $\sum_s \omega_s e_i^{(s)}$ on the full Route A firm-month panel under transparent mandate-weight scenarios. The tilted scenarios are deliberately severe stress tests that concentrate mandate weight in selected rulebook families; they are not estimated capital shares. Rank shifts and quintile overlaps are computed within calendar month. The exercise is a calibration/sensitivity check, not an estimate of historical mandate-weighted capital.
\end{minipage}
\end{table}

%% file: table_routeb_sample_flow.tex
\begin{table}[!htbp]
\centering
\caption{Route B event-sample reconciliation}
\label{tab:routeb_sample_flow}
\begin{threeparttable}
\begin{tabular}{lrr}
\toprule
Panel A: cumulative event-sample stage & Inclusions & Exclusions \\
\midrule
Official stock-code transitions & 859 & 661 \\
Linked to a Compustat Global security & 598 & 429 \\
Valid release-date CAR & 595 & 375 \\
Successful same-date three-control match & 582 & 367 \\
Continuously listed before the preceding review & 410 & 367 \\
Continuously listed and above turnover floor & 295 & 257 \\
\midrule
Panel B: non-cumulative fundamentals coverage & Inclusions & Exclusions \\
\midrule
Matched events with two prior fiscal observations & 507 & 327 \\
Primary inclusions with complete relative controls & 262 & -- \\
\bottomrule
\end{tabular}
\begin{tablenotes}[flushleft]
\footnotesize
\item Panel A follows cumulative event selection. Panel B is deliberately separate because two-fiscal-observation coverage is measured among all matched linked events and is not a stage after the continuously listed restriction. Events are created from consecutive official stock-code lists before security matching. The valid-CAR row reports the 595 inclusions and 375 exclusions used in the raw event-return table.
\end{tablenotes}
\end{threeparttable}
\end{table}

%% file: table_routea_monitoring_clustered.tex
\begin{table}[!htbp]
\centering
\caption{Classification uncertainty and next-month screen-implied transitions}
\label{tab:reclass}
\begin{threeparttable}
\begin{tabular}{lc}
\toprule
 & Next-month eligibility transition \\
\midrule
Implemented disagreement & 1.1609*** \\
 & (39.75) \\
Proximity risk & 3.2384*** \\
 & (30.61) \\
Log market capitalization & -0.0557*** \\
 & (-8.48) \\
Year fixed effects & Yes \\
Observations & 300,000 \\
\bottomrule
\end{tabular}
\begin{tablenotes}[flushleft]
\footnotesize
\item The dependent variable equals one when any researcher-emulated standard-specific label changes in the next consecutive calendar month. Entries are logit coefficients with $z$-statistics in parentheses. Standard errors are two-way clustered by CRSP security and calendar month. The exercise validates ranking information; it is not a causal test because the outcome and proximity measure share the implemented screening boundaries.
\end{tablenotes}
\end{threeparttable}
\end{table}

%% file: table_fmb_quality_controls.tex
\begin{table}[!htbp]
\centering
\caption{Fama--MacBeth permission slopes with quality controls}
\label{tab:fmb_quality}
\begin{threeparttable}
\begin{tabular}{llrr}
\toprule
Main variable & Controls & Mean coefficient & $t$-statistic \\
\midrule
Eligibility share & Baseline & 0.0013*** & 4.13 \\
Eligibility share & + quality & 0.0005 & 1.62 \\
Eligibility share & + quality + screens & -0.0001 & -0.35 \\
Implemented disagreement & Baseline & 0.0015*** & 4.47 \\
Implemented disagreement & + quality & 0.0010*** & 2.76 \\
Implemented disagreement & + quality + screens & 0.0007** & 2.54 \\
\bottomrule
\end{tabular}
\begin{tablenotes}[flushleft]
\footnotesize
\item The table reports time-series average Fama--MacBeth slopes for standardized permission variables. Baseline controls are log market capitalization, log book-to-market, 12-to-2 momentum, and one-month reversal. Quality controls add operating income over assets, ROA, asset growth, and capital expenditure over lagged assets. Screen-ratio controls add leverage, cash, receivables, and impure-income ratios used in the screening data. Standard errors are Newey--West with 12 lags. ***, **, and * denote significance at 1\%, 5\%, and 10\%.
\end{tablenotes}
\end{threeparttable}
\end{table}

%% file: table_routea_daily_event_corrected.tex
\begin{table}[!htbp]
\centering
\caption{Route A daily event-window cumulative abnormal returns}
\label{tab:daily_event_us}
\begin{adjustbox}{max width=\textwidth}
\begin{tabular}{llrrrr}
\toprule
Event date & Specification & CAR$[0,1]$ & CAR$[0,3]$ & CAR$[0,5]$ & CAR$[0,10]$ \\
\midrule
Announcement Aug. 4, 2023 & Unmatched newly-both & 0.750** & 1.210*** & 1.013* & 1.310* \\
 & & (2.31) & (2.63) & (1.79) & (1.85) \\
  & Matched vs. still-ineligible & 0.368 & 0.643 & 0.818 & 1.249 \\
 & & (0.89) & (1.09) & (1.04) & (1.37) \\
  & Matched vs. continuous & -0.612 & -0.069 & 0.418 & 1.367 \\
 & & (-1.04) & (-0.08) & (0.40) & (1.15) \\
\addlinespace
Pro-forma Sept. 1, 2023 & Unmatched newly-both & -0.311 & -0.275 & -0.555 & 0.508 \\
 & & (-1.02) & (-0.66) & (-1.08) & (0.78) \\
  & Matched vs. still-ineligible & -0.662 & -0.920 & -1.669** & -0.355 \\
 & & (-1.40) & (-1.46) & (-2.24) & (-0.37) \\
  & Matched vs. continuous & -0.130 & 0.164 & -0.369 & 0.495 \\
 & & (-0.26) & (0.25) & (-0.45) & (0.45) \\
\bottomrule
\end{tabular}
\end{adjustbox}
\begin{flushleft}
\footnotesize CARs are sums of daily market-model abnormal returns, winsorized at the 1st and 99th percentiles within each event-window cross-section. Reported effects are percentage-point coefficients with HC3 $t$-statistics.
\end{flushleft}
\end{table}

%% file: table_routea_placebo_corrected.tex
\begin{table}[!htbp]
\centering
\caption{Route A placebo daily event tests}
\label{tab:us_placebo}
\begin{tabular}{lrr}
\toprule
Placebo date & CAR$[0,3]$ & CAR$[0,5]$ \\
\midrule
Sept. 1, 2019 & 0.615 & -0.376 \\
 & (0.84) & (-0.37) \\
Sept. 1, 2020 & 2.088*** & 0.202 \\
 & (2.62) & (0.24) \\
Sept. 1, 2021 & -0.123 & -0.273 \\
 & (-0.21) & (-0.38) \\
Sept. 1, 2022 & -1.895*** & -2.670*** \\
 & (-3.53) & (-3.99) \\
Sept. 1, 2024 & -0.647 & -0.292 \\
 & (-1.11) & (-0.41) \\
\bottomrule
\end{tabular}
\begin{flushleft}
\footnotesize Placebos use the matched newly-both versus still-ineligible design. CARs are sums of daily market-model abnormal returns. HC3 $t$-statistics are in parentheses.
\end{flushleft}
\end{table}

%% file: table_routeb_matching_balance.tex
\begin{table}[!htbp]
\centering
\caption{Matching balance for the primary SC Malaysia reclassification sample}
\label{tab:routeb_matching_balance}
\begin{adjustbox}{max width=\textwidth}
\begin{tabular}{lrrr}
\toprule
Pre-event variable & Treated mean & Matched-control mean & Standardized difference \\
\midrule
Log market capitalization & 19.122 & 19.125 & -0.002 \\
Pre-event return & 0.001 & 0.001 & -0.004 \\
Pre-event volatility & 0.036 & 0.034 & 0.084 \\
Pre-event turnover & 0.0066 & 0.0058 & 0.065 \\
Pre-event Amihud & 7.99e-06 & 5.25e-06 & 0.085 \\
Log dollar volume & 12.992 & 12.689 & 0.150 \\
\bottomrule
\end{tabular}
\end{adjustbox}
\begin{flushleft}
\footnotesize Balance is computed across the three matched pairs for each of the 295 continuously listed, liquidity-qualified inclusions. Controls are continuously compliant securities from the same SC list date. Standardized differences use the pooled treated-control standard deviation.
\end{flushleft}
\end{table}

%% file: table_routeb_fundamentals.tex
\begin{table}[!htbp]
\centering
\caption{SC Malaysia inclusion specification ladder}
\label{tab:routeb_fundamentals}
\scriptsize
\begin{threeparttable}
\begin{adjustbox}{max width=\textwidth}
\begin{tabular}{lrrrrrrr}
\toprule
 & & \multicolumn{3}{c}{CAR$[0,10]$} & \multicolumn{3}{c}{CAR$[0,20]$} \\
\cmidrule(lr){3-5}\cmidrule(lr){6-8}
Sample/specification & $N$ & Estimate & $p_{date}$ & $p_{wild}$ & Estimate & $p_{date}$ & $p_{wild}$ \\
\midrule
All matched inclusions & 582 & 0.784 & 0.163 & 0.163 & 1.436* & 0.072 & 0.083 \\
Continuously listed only & 410 & 0.896 & 0.127 & 0.134 & 1.458* & 0.053 & 0.064 \\
Continuously listed + turnover floor & 295 & 1.759*** & 0.008 & 0.017 & 2.252** & 0.018 & 0.035 \\
Complete fundamentals sample, unadjusted & 262 & 2.154*** & 0.003 & 0.009 & 2.753*** & 0.005 & 0.012 \\
Complete sample + relative fundamentals & 262 & 2.154*** & 0.001 & 0.006 & 2.753*** & 0.003 & 0.017 \\
\bottomrule
\end{tabular}
\end{adjustbox}
\begin{tablenotes}[flushleft]
\footnotesize
\item Estimates are percentage-point means of treated-minus-three-control market-adjusted CARs. The continuously listed restriction requires trading before the preceding SC review. The turnover floor is fixed at 0.000296 from the pre-event matched universe. The complete sample requires all six treated and matched-control-relative fundamental changes and four market controls. The final row adds standardized treated-minus-control changes in debt, cash, receivables, book leverage, log assets, and sales-to-assets plus pre-event size, return, volatility, and Amihud illiquidity. Because controls are centered, its intercept is the common-sample mean; the comparison does not attribute the larger complete-sample mean to adding controls. Inference clusters by SC list date; $p_{wild}$ uses 999 restricted Rademacher replications.
\end{tablenotes}
\end{threeparttable}
\end{table}

%% file: table_routeb_event_windows.tex
\begin{table}[!htbp]
\centering
\caption{Matched event-window profile for the primary Malaysia sample}
\label{tab:routeb_event_windows}
\begin{threeparttable}
\begin{tabular}{lrrrr}
\toprule
Window & Estimate & Cluster SE & $p_{date}$ & $p_{wild}$ \\
\midrule
$[-20,-1]$ & -0.816 & 0.987 & 0.409 & 0.451 \\
$[-10,-1]$ & -0.475 & 0.440 & 0.281 & 0.298 \\
$[-5,-1]$ & -0.037 & 0.340 & 0.914 & 0.909 \\
\midrule
$[0,1]$ & -0.233 & 0.278 & 0.404 & 0.461 \\
$[0,3]$ & 0.844** & 0.395 & 0.032 & 0.068 \\
$[0,5]$ & 1.017* & 0.524 & 0.053 & 0.082 \\
$[0,10]$ & 1.759*** & 0.660 & 0.008 & 0.017 \\
$[0,20]$ & 2.252** & 0.954 & 0.018 & 0.035 \\
\bottomrule
\end{tabular}
\begin{tablenotes}[flushleft]
\footnotesize
\item The sample contains 295 continuously listed, liquidity-qualified inclusions. Estimates are percentage-point treated-minus-control CARs. Pre-event windows are disclosed to assess differential drift; post-event windows show that the response is not immediate and accumulates mainly after day 3. Inference clusters by list date and uses 999 wild-cluster replications.
\end{tablenotes}
\end{threeparttable}
\end{table}

%% file: table_routeb_threshold_sensitivity.tex
\begin{table}[!htbp]
\centering
\caption{Pre-event screen sensitivity in continuously listed inclusions}
\label{tab:malaysia_liquidity_diagnostics}
\begin{adjustbox}{max width=\textwidth}
\begin{tabular}{lrrrr}
\toprule
Sample definition & Treated & Mean diff. & $p_{date}$ & $p_{wild}$ \\
\midrule
Fixed primary turnover floor & 295 & 1.76\% & 0.008 & 0.017 \\
Turnover above reclassification-sample quartile & 307 & 1.84\% & 0.004 & 0.014 \\
Turnover above reclassification-sample tercile & 273 & 1.62\% & 0.022 & 0.038 \\
Turnover above reclassification-sample median & 205 & 2.53\% & 0.003 & 0.009 \\
Market cap above reclassification-sample quartile & 307 & 0.87\% & 0.173 & 0.174 \\
Market cap above reclassification-sample median & 205 & 0.60\% & 0.471 & 0.493 \\
\bottomrule
\end{tabular}
\end{adjustbox}
\begin{flushleft}
\footnotesize All samples exclude securities first observed after the preceding SC review. Cutoffs use only pre-release variables. Means are percentage-point matched treatment-minus-control CAR$[0,10]$ estimates. Inference clusters by SC list date; wild $p$-values use 999 restricted Rademacher replications.
\end{flushleft}
\end{table}

%% file: table_routeb_turnover_slopes.tex
\begin{table}[!htbp]
\centering
\caption{Turnover slopes in continuously listed classification events}
\label{tab:routeb_turnover_slopes}
\begin{threeparttable}
\begin{tabular}{lrrrr}
\toprule
 & \multicolumn{2}{c}{CAR$[0,10]$} & \multicolumn{2}{c}{CAR$[0,20]$} \\
\cmidrule(lr){2-3}\cmidrule(lr){4-5}
Term & Coefficient & $p_{wild}$ & Coefficient & $p_{wild}$ \\
\midrule
Inclusion indicator & 2.007** & 0.051 & 1.768* & 0.089 \\
 & (0.900) & & (0.971) & \\
Turnover slope: exclusions & -3.543*** & 0.010 & -3.365** & 0.083 \\
 & (1.089) & & (1.532) & \\
Turnover slope: inclusions & 0.086 & 0.925 & -0.027 & 0.979 \\
 & (0.884) & & (0.924) & \\
\bottomrule
\end{tabular}
\begin{tablenotes}[flushleft]
\footnotesize
\item The model parameterizes separate standardized log-turnover slopes for inclusions and exclusions, so the inclusion slope is reported directly rather than inferred from an interaction alone. Controls are pre-event size, return, volatility, and Amihud illiquidity. Standard errors and stars use SC-list-date clustering; $p_{wild}$ uses 999 restricted Rademacher replications.
\end{tablenotes}
\end{threeparttable}
\end{table}

%% file: table_routeb_ownership_did.tex
\begin{table}[!htbp]
\centering
\caption{Matched-control ownership response after continuously listed inclusions}
\label{tab:routeb_ownership_did}
\scriptsize
\begin{threeparttable}
\begin{adjustbox}{max width=\textwidth}
\begin{tabular}{lrrrr}
\toprule
Investor bucket & Treated change & Control change & Difference & $p_{wild}$ \\
\midrule
Shariah-sensitive portfolio & 0.022 & 0.000 & 0.021 & 0.360 \\
Broad mandate-constrained portfolio & 0.017 & -0.036 & 0.053 & 0.282 \\
Global passive placebo & -0.002 & -0.001 & -0.001 & 0.832 \\
\bottomrule
\end{tabular}
\end{adjustbox}
\begin{tablenotes}[flushleft]
\footnotesize
\item Entries are percentage-point changes in visible ownership from the dated pre- to post-release LSEG snapshots. The sample contains continuously listed, liquidity-qualified inclusions. Each treated change is compared with the mean change among its three continuously compliant matched controls. The portfolio buckets exclude likely strategic or control holders. Inference clusters by SC list date and uses 999 wild-cluster replications.
\end{tablenotes}
\end{threeparttable}
\end{table}

%% file: table_routeb_placebo_clean.tex
\begin{table}[!htbp]
\centering
\caption{Mid-review timing placebo for the primary Malaysia matched sets}
\label{tab:malaysia_placebo}
\begin{threeparttable}
\begin{tabular}{lrrrrr}
\toprule
Pseudo-event window & Treated events & Mean & Cluster SE & $p_{date}$ & $p_{wild}$ \\
\midrule
$[0,1]$ & 295 & 0.222 & 0.628 & 0.724 & 0.736 \\
$[0,3]$ & 295 & 0.243 & 0.771 & 0.753 & 0.738 \\
$[0,5]$ & 295 & 0.988 & 0.915 & 0.280 & 0.305 \\
$[0,10]$ & 295 & 0.841 & 1.151 & 0.465 & 0.473 \\
$[0,20]$ & 295 & 0.895 & 1.181 & 0.449 & 0.466 \\
\bottomrule
\end{tabular}
\begin{tablenotes}[flushleft]
\footnotesize
\item Each focal date is shifted three calendar months before the public SC release, placing it between the preceding and current semi-annual reviews. The test retains the primary treated securities and their fixed three-control matched sets, then reconstructs market-adjusted returns around the pseudo-date. It is a timing falsification rather than another classification event. Inference clusters by the focal SC list date and uses 999 wild-cluster replications. The older six-month shift is retained in the Internet Appendix and relabelled as a preceding-review falsification.
\end{tablenotes}
\end{threeparttable}
\end{table}

%% file: appendix_content.tex
\begin{center}
{\Large\bfseries Internet Appendix to\\[0.35em]
The Price of Permission: Classification Uncertainty in Constrained Capital Markets}\par
\vspace{1.1em}
Abdulrahman Qadi\quad Akash Sharma\quad Francesca Medda\par
\vspace{0.45em}
{\small Institute of Finance and Technology, University College London\\
Gower Street, London WC1E 6BT, United Kingdom}
\end{center}

\vspace{1.25em}
This Internet Appendix reports diagnostic tables and figures that support the main manuscript. The main manuscript is self-contained; this appendix preserves additional sample-construction, measurement, matching, placebo, and robustness evidence.

\clearpage
\appendix
\numberwithin{table}{section}
\numberwithin{figure}{section}
\clearpage
\section{Appendix Roadmap and Evidence Architecture}\label{app:roadmap}

This appendix preserves the supporting evidence required to evaluate sample construction, researcher-emulated rulebook measurement, matching quality, event-study validity, placebo behavior, and investor-base diagnostics. Route A documents the U.S. cross-standard design. Route B documents the SC Malaysia official-list design. The appendix distinguishes screen-implied transitions from official classifications and mechanism-consistent diagnostics from marginal-buyer identification.

\begin{table}[!htbp]
\centering
\caption{Mapping from theory to empirical evidence}
\label{tab:appendix_theory_map}
\begin{threeparttable}
\footnotesize
\begin{adjustbox}{max width=\textwidth}
\begin{tabular}{L{0.24\textwidth}L{0.34\textwidth}L{0.34\textwidth}}
\toprule
Theoretical object & Route A evidence & Route B evidence \\
\midrule
Eligibility vector $e_i$ & Seven researcher-emulated rulebooks produce security-month eligibility vectors. & Official SC Malaysia stock-code membership states across semi-annual lists. \\
Permitted investor mass $m_i$ & Not observed; equal-weighted eligibility share is reported only as rulebook coverage. & Official list status changes a recognized permission label but does not measure mandate capital. \\
Classification uncertainty & Cross-standard disagreement, nearest-boundary distance, and proximity risk. & Inclusion, exclusion, and continuous-compliance states define official-list events. \\
Monitoring implication & Disagreement and proximity rank next-month screen-implied transitions. & Not tested as a prediction exercise in Route B. \\
Pricing implication & Fama--MacBeth and DJIM/S\&P diagnostics show no unconditional rulebook-coverage premium. & Continuously listed, liquidity-qualified inclusions have positive matched CARs; mechanism diagnostics do not identify a marginal buyer. \\
\bottomrule
\end{tabular}
\end{adjustbox}
\begin{tablenotes}
\footnotesize
\item Route A and Route B use different evidence and estimands. Neither equal-weighted rulebook coverage nor official list status is a direct estimate of mandate-weighted investor mass.
\end{tablenotes}
\end{threeparttable}
\end{table}

\section{Route A Internet Appendix: U.S. Classification Uncertainty, Monitoring, and Methodology-Shock Diagnostics}\label{app:routeA_full}

This appendix documents the additional tables and figures produced by the empirical workflow. The main text reports the compact set of results needed for the paper's central claims. The additional outputs below are included to make clear which diagnostics support the construction of the sample, the classification-uncertainty measures, the matching procedure, the monitoring validation, and the daily event-study interpretation. Tables and figures that are purely diagnostic or too granular for the printed article are listed as replication-archive outputs rather than repeated in full.

\subsection{Data, measurement, and output inventory}\label{app:measurement}

Appendix Table \ref{tab:app_output_inventory} summarizes the saved empirical outputs. The archive contains both machine-readable CSV files and LaTeX versions for tables, and both PDF and PNG versions for figures where applicable.

\begin{longtable}{L{0.25\textwidth}L{0.68\textwidth}}
\caption{Replication output inventory}\label{tab:app_output_inventory}\\
\toprule
Category & Saved outputs \\
\midrule
\endfirsthead
\toprule
Category & Saved outputs \\
\midrule
\endhead
Sample construction and coverage & \texttt{table\_sample\_construction}, \texttt{table\_crsp\_coverage}. \\
Eligibility and margins & \texttt{table\_overall\_eligibility\_rates}, \texttt{table\_eligibility\_rates\_by\_year}, \texttt{table\_nearest\_standard\_counts}, \texttt{table\_standard\_margin\_summary}; figures for eligibility share, disagreement, proximity risk, nearest distance, and removed-screen margins. \\
Screen-implied transitions and monitoring & Two-way-clustered monitoring estimates, proximity-decile diagnostics, out-of-sample ranking performance, and standard-by-standard transition rates. \\
September 2023 groups and matching & \texttt{table\_sep2023\_event\_group\_counts}, \texttt{table\_sep2023\_newly\_eligible\_shock\_detail}, \texttt{table\_sep2023\_balance\_by\_group}, \texttt{table\_event\_group\_balance\_unmatched\_and\_matched}, \texttt{table\_event\_group\_balance\_newly\_both\_matched}, matching-pair and matching-weight tables for newly eligible and newly-both treatment groups. \\
Monthly event and panel diagnostics & \texttt{table\_sep2023\_monthly\_car\_regressions\_*}, \texttt{table\_sep2023\_monthly\_pretrend\_diagnostics}, \texttt{table\_sep2023\_did\_*}, \texttt{table\_dynamic\_event\_time\_*}, \texttt{table\_placebo\_sep\_window\_car\_tests}, \texttt{table\_panel\_market\_quality\_regressions}; monthly CAAR, placebo, and dynamic-event figures. \\
Daily event study & \texttt{table\_daily\_event\_main\_windows}, \texttt{table\_daily\_event\_pretrend\_windows}, \texttt{table\_daily\_dynamic\_joint\_pretrend\_tests}, \texttt{table\_daily\_dynamic\_event\_time\_coefficients}, \texttt{table\_daily\_event\_placebo\_focus}, \texttt{table\_daily\_event\_placebo\_all}, \texttt{table\_daily\_event\_car\_summary\_by\_group}, \texttt{table\_daily\_event\_diagnostic\_dashboard}; daily CAAR and dynamic treatment figures. \\
Expected-return diagnostics & \texttt{table\_fmb\_expected\_return\_diagnostics}, \texttt{table\_fmb\_expected\_return\_main\_variables}, \texttt{table\_fmb\_monthly\_coefficients}, \texttt{table\_fmb\_quality\_controls\_main}, \texttt{table\_fmb\_quality\_controls\_full}. \\
Additional robustness diagnostics & Mandate-weight scenarios, income-proxy sensitivity, predictive ranking, unified Route B sample flow, corrected threshold sensitivity, direct inclusion/exclusion turnover slopes, leave-one-date-out estimates, matched-control ownership changes, and the SC release-date audit. \\
September 2022 placebo diagnostics & \texttt{table\_sep2022\_placebo\_overall\_diagnostics}, \texttt{table\_sep2022\_placebo\_size\_diagnostics}, \texttt{table\_sep2022\_placebo\_sector\_diagnostics}, \texttt{table\_sep2022\_placebo\_market\_context}. \\
\bottomrule
\end{longtable}

\begin{table}[H]
\centering
\caption{Nearest-standard diagnostic counts}
\label{tab:app_nearest_standard}
\begin{tabular}{lr}
\toprule
Nearest standard & Firm-months \\
\midrule
FTSE/Yasaar & 378,435 \\
AAOIFI & 365,406 \\
SC Malaysia & 228,121 \\
DJIM & 126,183 \\
MSCI main & 94,444 \\
S\&P & 79,513 \\
MSCI M-Series & 70,504 \\
\bottomrule
\end{tabular}
\begin{minipage}{0.86\textwidth}
\footnotesize
\emph{Notes:} The table reports the standard associated with the minimum absolute active-boundary margin in the security-month panel. It is a diagnostic of which rulebook is locally closest to binding; it is not used as a standalone treatment variable.
\end{minipage}
\end{table}

\subsection{Rulebook implementation and measurement audits}\label{app:rulebook_audit}

The main manuscript reports the denominator, averaging window, and threshold used for every researcher-emulated rulebook. Business-activity exclusions are applied through a common SIC map. A required screen fails when its numerator or denominator is unavailable; nonpositive sales fail the income screen, and nonpositive total assets fail asset-denominated screens. The final Route A panel contains 113 nonpositive-total-asset observations and 41,339 nonpositive-sales observations. These rules prevent invalid ratios from being converted into artificial passes.

The income screen is the least exact Compustat emulation. The baseline numerator is positive interest income plus positive non-operating income. Because Compustat NOPI can contain permissible non-operating items, the baseline is deliberately conservative and overinclusive. Appendix Table~\ref{tab:impure_proxy_sensitivity} replaces it with positive interest income only. Agreement ranges from 94.46\% to 96.20\%, but average eligibility rises by 3.8--5.5 percentage points. The proxy therefore matters economically and should not be described as an official provider classification.

\input{table_impure_proxy_sensitivity}

Appendix Table~\ref{tab:routea_proxy_downstream} carries the alternative proxy through the monitoring and Fama--MacBeth constructions. The narrower proxy materially changes average eligibility and disagreement, but disagreement and proximity continue to rank next-month transitions. The fully adjusted eligibility slope remains economically small, while disagreement remains a residual characteristic association.

\input{table_routea_proxy_downstream_sensitivity}

\subsection{Additional measurement figures}\label{app:measurement_figures}

Appendix Figures \ref{fig:app_discrete_distributions} and \ref{fig:app_boundary_distributions} report diagnostic distributions for eligibility share, disagreement, boundary distance, proximity risk, and removed-screen margins.

\begin{figure}[p]
\centering
\begin{subfigure}{0.48\textwidth}
\centering
\includegraphics[width=\linewidth]{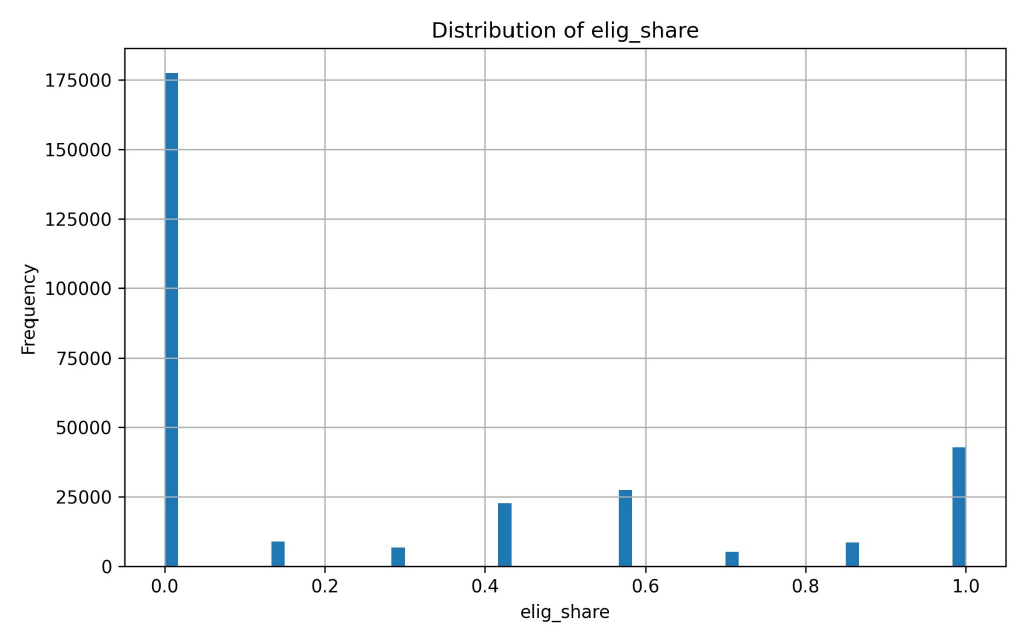}
\caption{Eligibility share}
\end{subfigure}
\begin{subfigure}{0.48\textwidth}
\centering
\includegraphics[width=\linewidth]{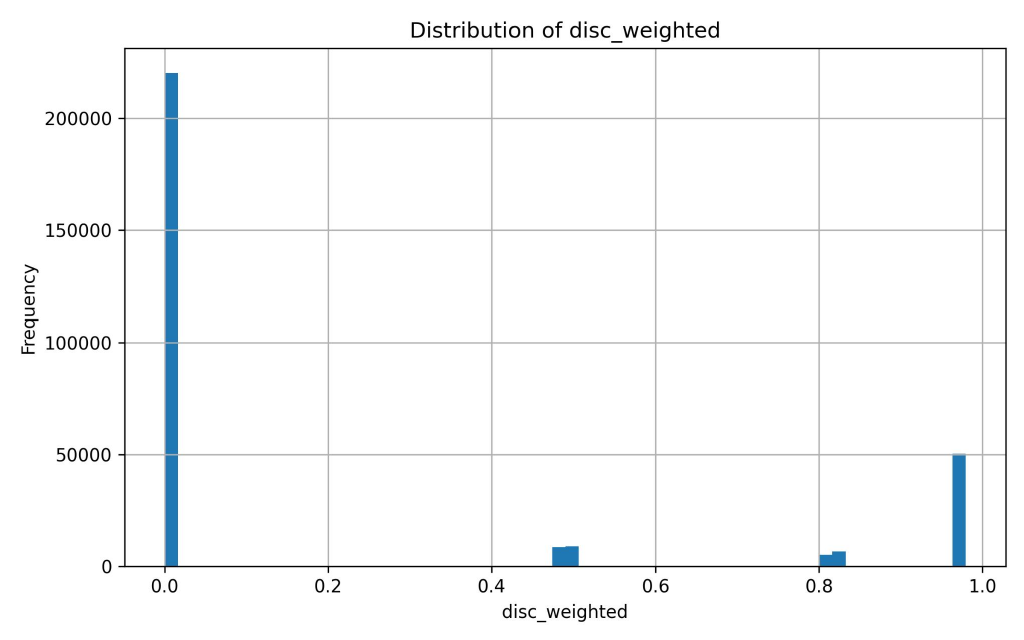}
\caption{Implemented disagreement}
\end{subfigure}
\caption{Distribution of equal-weighted classification variables}
\AltText{Two-panel histogram showing the security-month distributions of eligibility share and implemented disagreement.}
\label{fig:app_discrete_distributions}
\begin{minipage}{0.92\textwidth}
\footnotesize
\emph{Notes:} The variables are computed at the security-month level and are discrete under equal rulebook weights. They are rulebook-coverage diagnostics, not estimates of mandate-weighted investor mass.
\end{minipage}
\end{figure}

\begin{figure}[p]
\centering
\begin{subfigure}{0.48\textwidth}
\centering
\includegraphics[width=\linewidth]{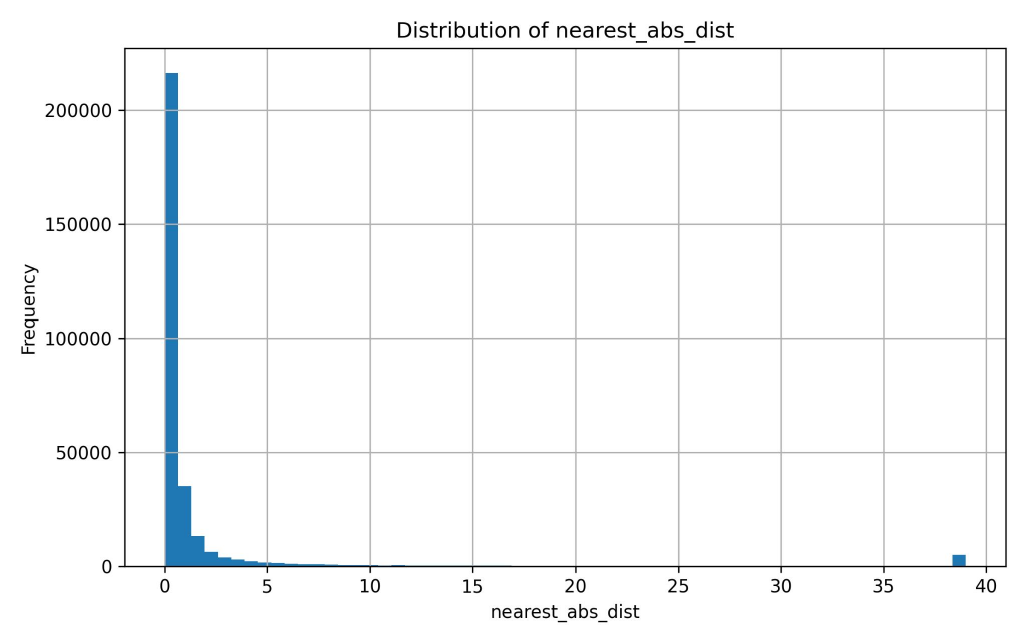}
\caption{Nearest absolute distance}
\end{subfigure}
\begin{subfigure}{0.48\textwidth}
\centering
\includegraphics[width=\linewidth]{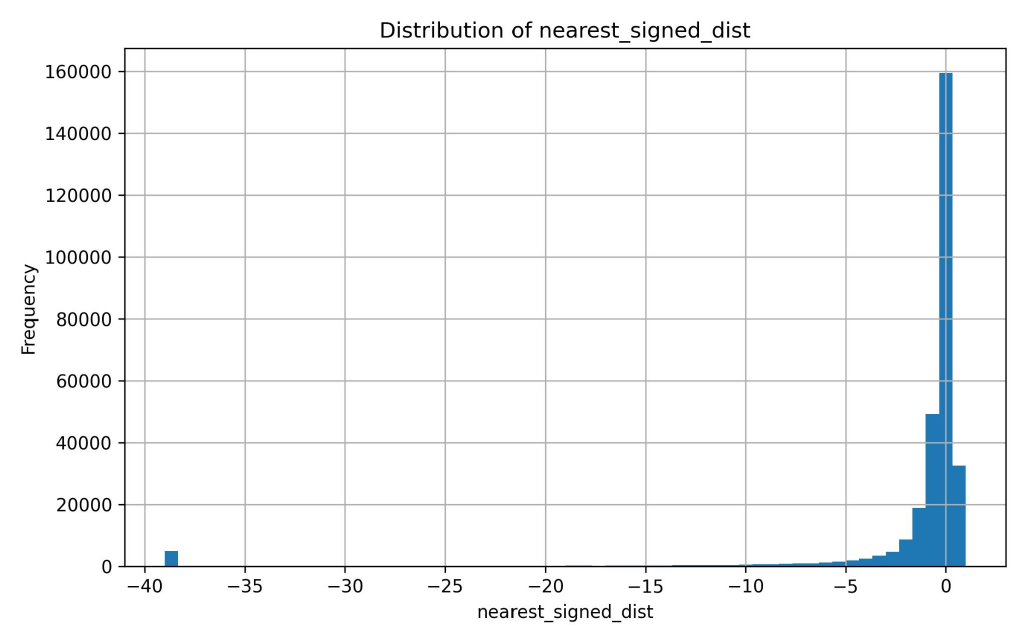}
\caption{Nearest signed distance}
\end{subfigure}
\begin{subfigure}{0.48\textwidth}
\centering
\includegraphics[width=\linewidth]{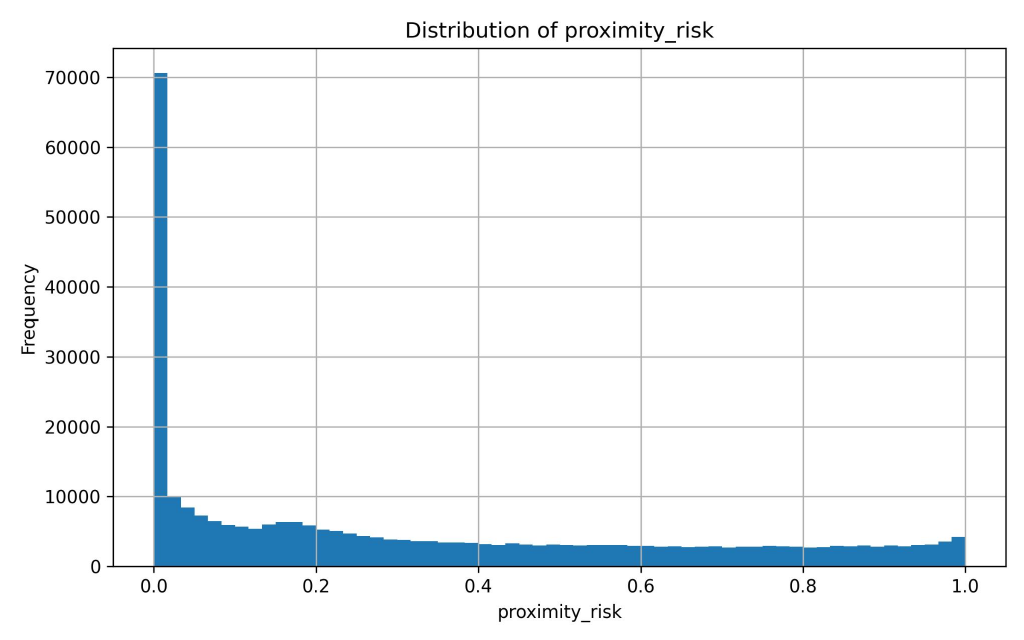}
\caption{Proximity risk}
\end{subfigure}
\begin{subfigure}{0.48\textwidth}
\centering
\includegraphics[width=\linewidth]{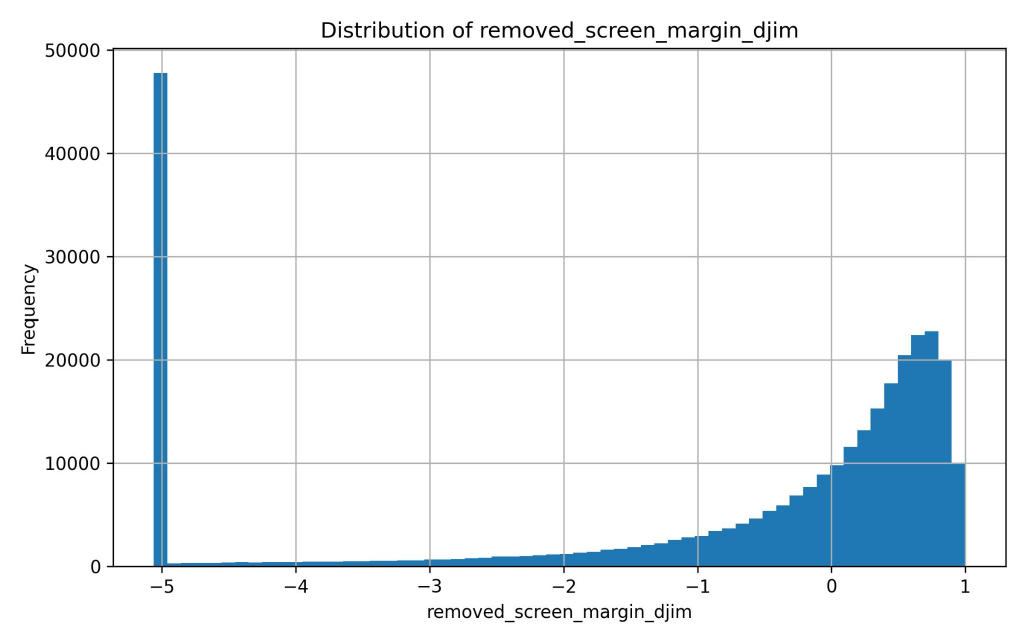}
\caption{DJIM removed-screen margin}
\end{subfigure}
\begin{subfigure}{0.48\textwidth}
\centering
\includegraphics[width=\linewidth]{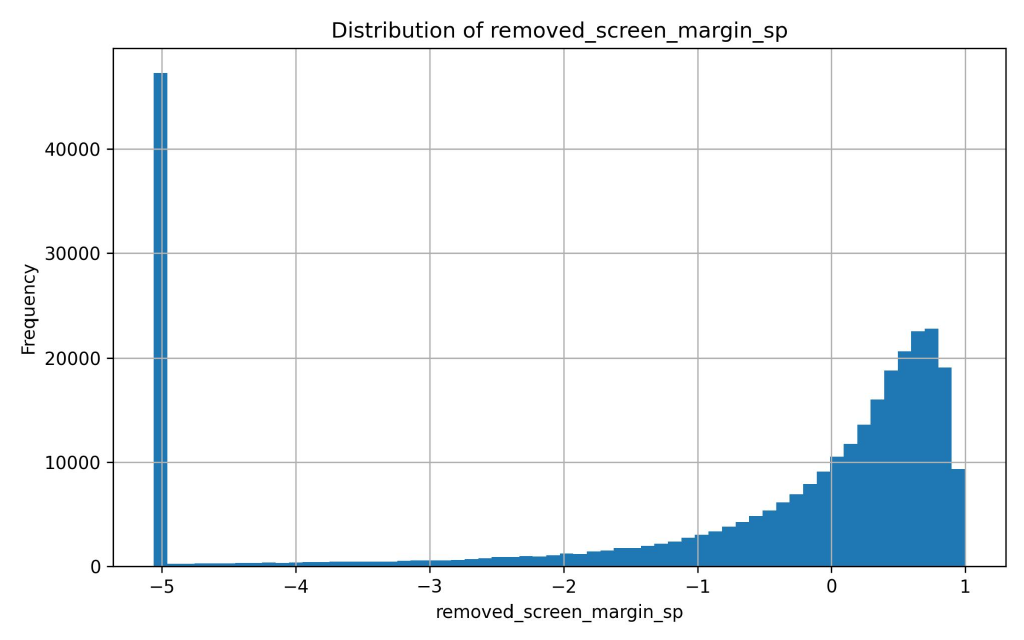}
\caption{S\&P removed-screen margin}
\end{subfigure}
\caption{Boundary-distance and removed-screen margin diagnostics}
\AltText{Five-panel histogram showing nearest absolute distance, nearest signed distance, proximity risk, and the DJIM and S\&P removed-screen margins used to construct boundary and methodology-shock measures.}
\label{fig:app_boundary_distributions}
\begin{minipage}{0.92\textwidth}
\footnotesize
\emph{Notes:} Removed-screen margins summarize distance from the cash and receivables screens removed in the 2023 DJIM/S\&P methodology change. These figures document the treatment construction and boundary-proximity variables used in the monitoring tests.
\end{minipage}
\end{figure}

\subsection{Monitoring validation}\label{app:monitoring}

The main text reports the compact transition logit. Additional decile tables and figures show that screen-implied transition frequency rises with proximity risk. These outputs are validation diagnostics rather than causal evidence because proximity and the outcome use the same implemented boundaries. The relevant saved outputs are:
\begin{itemize}
\item \path{table_reclassification_by_proximity_decile};
\item \path{fig_reclassification_by_proximity_decile};
\end{itemize}

Appendix Table~\ref{tab:app_standard_reclassification} decomposes the next-month screen-implied transition outcome by standard. AAOIFI has the highest security-month transition rate in the implemented panel, while the other rulebooks also contribute non-trivial changes.

\input{table_routea_standard_reclassification}

\subsection{Matching and balance diagnostics}\label{app:matching}

The event study uses matching because the raw September 2023 treatment groups are materially different. Continuously eligible firms are much larger and have lower cash and receivables ratios than newly eligible firms, while newly eligible firms are closer to the still-ineligible group. The replication archive reports unmatched and matched balance tables as well as matching-pair and matching-weight outputs for both newly eligible and newly-both treatment definitions. These outputs are listed in Appendix Table \ref{tab:app_output_inventory}. The daily event study in the main text uses the newly-both matched samples because the newly-both treatment is mechanically affected by both rule families.

\subsection{Additional daily event-study evidence}\label{app:daily}

\input{table_routea_daily_pretrend_corrected}

\subsection{Monthly and panel diagnostics}\label{app:monthly}

The replication workflow also produces monthly CAR regressions, monthly difference-in-differences specifications, placebo September-window tests, monthly dynamic event-time coefficients, and broad panel market-quality regressions. These results are intentionally not central to the paper. Monthly event windows are too coarse for the September 2023 methodology timing, and the panel market-quality regressions are sensitive because disagreement and proximity are mechanically related to the implemented labels. The main text therefore relies on the corrected daily matched event design and screen-implied transition validation.

\subsection{Fama--MacBeth expected-return diagnostics}\label{app:fmb}

Appendix Table \ref{tab:app_fmb_full} reports the expanded Fama--MacBeth diagnostics with standard asset-pricing controls. The main text adds stricter quality and screening-ratio controls. These are reduced-form characteristic regressions, not estimates of mandate-level capital weights.

\begin{table}[H]
\centering
\caption{Expanded Fama--MacBeth diagnostics}
\label{tab:app_fmb_full}
\begin{threeparttable}
\begin{adjustbox}{max width=\textwidth}
\begin{tabular}{llcc}
\toprule
Specification & Variable & Mean coefficient & $t$-statistic \\
\midrule
Eligibility share, controls & Eligibility share & 0.0013*** & 4.12 \\
 & Log market capitalization & 0.0013** & 2.25 \\
 & Log book-to-market & 0.0035*** & 3.37 \\
 & Momentum, 12-to-2 & 0.0045 & 1.42 \\
 & One-month reversal & -0.0166*** & -3.69 \\
\midrule
Disagreement, controls & Implemented disagreement & 0.0014*** & 4.47 \\
 & Log market capitalization & 0.0013** & 2.37 \\
 & Log book-to-market & 0.0034*** & 3.35 \\
 & Momentum, 12-to-2 & 0.0046 & 1.47 \\
 & One-month reversal & -0.0164*** & -3.64 \\
\bottomrule
\end{tabular}
\end{adjustbox}
\begin{tablenotes}[flushleft]
\footnotesize
\item The table reports time-series average coefficients from monthly Fama--MacBeth regressions of next-month excess returns. Variables are winsorized and the main permission variables are standardized. Standard errors are Newey--West with 12 lags. ***, **, and * denote significance at 1\%, 5\%, and 10\%.
\end{tablenotes}
\end{threeparttable}
\end{table}

\subsection{September 2022 placebo diagnostics}\label{app:placebo2022}

The strongest placebo result occurs around September 1, 2022. Appendix Tables \ref{tab:app_sep2022_overall}--\ref{tab:app_sep2022_market} examine this placebo window. The diagnostics show that the underperformance of the treated group is broad across size quartiles and not reducible to a single industry, although unclassified firms and business services account for a substantial share of the treated sample. The market-context table documents a severe selloff around the placebo window.

\begin{table}[H]
\centering
\caption{September 2022 placebo: overall diagnostics}
\label{tab:app_sep2022_overall}
\begin{threeparttable}
\begin{adjustbox}{max width=\textwidth}
\footnotesize
\begin{tabular}{lrrrr}
\toprule
Role & Firms & Weighted CAR[0,5] & Median CAR[0,5] & Mean log size \\
\midrule
Treated placebo & 233 & -2.95\% & -2.44\% & 5.59 \\
Still-ineligible controls & 403 & -0.43\% & -0.98\% & 5.75 \\
\bottomrule
\end{tabular}
\end{adjustbox}
\begin{tablenotes}[flushleft]
\footnotesize
\item CARs are market-model abnormal returns over the placebo September 2022 $[0,5]$ window. The treated-control difference is approximately -2.5 percentage points.
\end{tablenotes}
\end{threeparttable}
\end{table}

\begin{table}[H]
\centering
\caption{September 2022 placebo by size quartile}
\label{tab:app_sep2022_size}
\begin{threeparttable}
\begin{adjustbox}{max width=\textwidth}
\begin{tabular}{lrrr}
\toprule
Size quartile & Treated CAR[0,5] & Control CAR[0,5] & Difference \\
\midrule
Q1 small & -4.40\% & -2.22\% & -2.17\% \\
Q2 & -3.71\% & 0.73\% & -4.43\% \\
Q3 & -1.75\% & -0.14\% & -1.61\% \\
Q4 large & -1.71\% & -0.53\% & -1.17\% \\
\bottomrule
\end{tabular}
\end{adjustbox}
\begin{tablenotes}[flushleft]
\footnotesize
\item Size quartiles are formed using pre-event market capitalization. The placebo gap is larger in the lower half of the size distribution but remains negative in all four quartiles.
\end{tablenotes}
\end{threeparttable}
\end{table}

\begin{table}[H]
\centering
\caption{September 2022 placebo: largest treated sectors}
\label{tab:app_sep2022_sector}
\begin{threeparttable}
\begin{adjustbox}{max width=\textwidth}
\begin{tabular}{lrrrr}
\toprule
SIC2 & Treated firms & Treated share & Treated CAR[0,5] & Treated-control difference \\
\midrule
99 & 87 & 37.34\% & -0.84\% & -2.24\% \\
73 & 37 & 15.88\% & -4.66\% & -3.04\% \\
36 & 19 & 8.15\% & -5.25\% & -1.65\% \\
28 & 12 & 5.15\% & -5.68\% & -9.93\% \\
38 & 11 & 4.72\% & -2.93\% & 1.68\% \\
35 & 8 & 3.43\% & -3.71\% & -3.93\% \\
13 & 7 & 3.00\% & -5.29\% & -1.62\% \\
50 & 5 & 2.15\% & -5.61\% & -3.47\% \\
48 & 5 & 2.15\% & 0.16\% & 3.85\% \\
80 & 4 & 1.72\% & -3.40\% & -7.04\% \\
\bottomrule
\end{tabular}
\end{adjustbox}
\begin{tablenotes}[flushleft]
\footnotesize
\item The table lists the largest SIC2 categories in the treated placebo sample. The negative placebo effect is not driven by a single sector, although SIC2 99 and 73 account for a large share of treated firms.
\end{tablenotes}
\end{threeparttable}
\end{table}

\begin{table}[H]
\centering
\caption{September 2022 placebo: market context}
\label{tab:app_sep2022_market}
\begin{threeparttable}
\begin{adjustbox}{max width=\textwidth}
\begin{tabular}{lrr}
\toprule
Trading date & Event day & Cumulative market return \\
\midrule
August 31, 2022 & -1 & -4.35\% \\
September 1, 2022 & 0 & -4.46\% \\
September 9, 2022 & 5 & -1.70\% \\
September 13, 2022 & 7 & -4.67\% \\
September 23, 2022 & 15 & -10.95\% \\
\bottomrule
\end{tabular}
\end{adjustbox}
\begin{tablenotes}[flushleft]
\footnotesize
\item Cumulative market returns are computed from the CRSP daily market return series over the placebo event window. The placebo occurs during a volatile monetary-tightening episode following the late-August 2022 Jackson Hole repricing and the subsequent September selloff.
\end{tablenotes}
\end{threeparttable}
\end{table}

\section{Route B Internet Appendix: SC Malaysia Official-List Event Diagnostics}\label{app:routeB_full}

Appendix Table~\ref{tab:app_release_date_audit} reports the release-date override table used for the release-date-corrected event-study timing. The semi-annual list date remains the membership-period identifier and clustering unit; the public release date printed on the official PDF defines day 0.

\input{table_malaysia_release_date_audit}

\input{table_routeb_sample_flow_appendix}

\subsection{Matching and dependence diagnostics}\label{app:routeb_matching_details}

Appendix Table~\ref{tab:routeb_matching_audit} records the exact matching algorithm used by the release-date-corrected pipeline. It distinguishes the primary with-replacement nearest-neighbor design from the strict same-industry robustness, which drops treated events without three eligible same-SIC controls.

\input{table_routeb_matching_method_audit}

\input{table_routeb_sic2_exact}

\input{table_routeb_repeated_security}

The main manuscript reports the complete matched event-window profile and dynamic path. Cumulative pre-event CARs over $[-20,-1]$, $[-10,-1]$, and $[-5,-1]$ are individually insignificant. The joint list-date-clustered Wald test across all 20 daily pre-event coefficients rejects, so the appendix and main text do not treat the matched design as satisfying an unqualified parallel-trends condition.

\subsection{Preceding-review falsification}\label{app:preceding_review}

The older six-month shift is approximately the preceding official SC review, not a non-event date. Appendix Table~\ref{tab:app_preceding_review} is therefore retained only as a preceding-review falsification. Counts are treated-event counts after averaging each treated event's matched controls; they are not pair counts.

\begin{table}[!htbp]
\centering
\caption{Preceding-review falsification estimates}
\label{tab:app_preceding_review}
\begin{threeparttable}
\begin{tabular}{llrrrr}
\toprule
Treatment & Window & Treated events & Mean & $p_{date}$ & $p_{wild}$ \\
\midrule
Exclusion & $[0,1]$ & 368 & 0.80\% & 0.191 & 0.173 \\
Exclusion & $[0,5]$ & 368 & 0.64\% & 0.310 & 0.306 \\
Exclusion & $[0,10]$ & 368 & 0.90\% & 0.264 & 0.274 \\
Inclusion & $[0,1]$ & 399 & -0.25\% & 0.596 & 0.607 \\
Inclusion & $[0,5]$ & 399 & -0.07\% & 0.918 & 0.935 \\
Inclusion & $[0,10]$ & 399 & -0.32\% & 0.662 & 0.679 \\
\bottomrule
\end{tabular}
\begin{tablenotes}
\footnotesize
\item The pseudo-event is shifted approximately six months before the focal review and therefore overlaps the preceding official classification release. This table cannot establish behavior on an ordinary non-event date. The main text instead uses a three-month mid-review timing placebo on the corrected primary matched sets.
\end{tablenotes}
\end{threeparttable}
\end{table}

\begin{table}[!htbp]
\centering
\caption{Matching attrition diagnostics}
\label{tab:mal_attrition}
\begin{threeparttable}
\footnotesize
\begin{adjustbox}{max width=\textwidth}
\begin{tabular}{llrrrr}
\toprule
Event type & Variable & Matched $N$ & Unmatched $N$ & Matched mean & Unmatched mean \\
\midrule
Continuous & Stock-code length & 12,050 & 4,302 & 4.00 & 4.08 \\
Continuous & Name length & 12,050 & 4,302 & 11.65 & 13.45 \\
Exclusion & Stock-code length & 429 & 232 & 4.00 & 4.13 \\
Exclusion & Name length & 429 & 232 & 12.46 & 18.35 \\
Inclusion & Stock-code length & 598 & 261 & 4.00 & 4.23 \\
Inclusion & Name length & 598 & 261 & 12.60 & 19.40 \\
\bottomrule
\end{tabular}
\end{adjustbox}
\begin{tablenotes}[flushleft]
\footnotesize
\item Unmatched events have longer names, suggesting that matching attrition is related to name quality, listing changes, or more complex entity histories. The main estimates therefore apply to the reliably linked listed universe.
\end{tablenotes}
\end{threeparttable}
\end{table}

\begin{table}[!htbp]
\centering
\caption{Current-snapshot LSEG investor-base coverage by event type}
\label{tab:lseg_cov}
\scriptsize
\begin{adjustbox}{max width=\textwidth}
\begin{tabular}{lrrrrr}
\toprule
Event type & Events & With LSEG & Mean holders & State/pension & Shariah-sensitive \\
\midrule
Continuous compliant & 12,024 & 12,024 & 38.16 & 30.31\% & 30.36\% \\
Exclusion & 374 & 374 & 31.63 & 18.45\% & 19.25\% \\
Inclusion & 595 & 595 & 32.29 & 26.05\% & 27.73\% \\
\bottomrule
\end{tabular}
\end{adjustbox}
\begin{flushleft}
\footnotesize
The table uses current-snapshot LSEG holder classifications, not historical event-date ownership files. It is therefore a descriptive salience diagnostic only. State/pension and Shariah-sensitive columns report the fraction of events with at least one visible holder in the named category. The lower exclusion-event shares are consistent with the paper's interpretation that exclusion pressure depends on whether constrained investors were marginal holders before removal.
\end{flushleft}
\end{table}

\subsection{Historical LSEG holder reconstruction and demand-pressure diagnostics}\label{app:historical_lseg_mechanism}

The historical mechanism tests in the main manuscript are separate from the current-snapshot salience diagnostic in Table~\ref{tab:lseg_cov}. For the treated-event mechanism tests, we reconstruct dated LSEG holder snapshots for RIC-mapped SC Malaysia inclusion and exclusion events. The treated-event cache covers 945 inclusion or exclusion events and contains 55,180 holder rows. For the stricter predicted-demand test, we separately pull dated holder snapshots for the continuously compliant matched controls used in the release-date-corrected event study; this matched-control cache contains 178,058 holder rows.

For each event $i,d$ and investor bucket $b$, let $Own_{i,d,-}^{b}$ and $Own_{i,d,+}^{b}$ denote percentage ownership by bucket $b$ in the pre- and post-list snapshots, and let $NInv_{i,d,-}^{b}$ and $NInv_{i,d,+}^{b}$ denote the corresponding number of visible holders. The realized mechanism variables are
\[
\Delta Own_{i,d}^{b}=Own_{i,d,+}^{b}-Own_{i,d,-}^{b},
\qquad
\Delta NInv_{i,d}^{b}=NInv_{i,d,+}^{b}-NInv_{i,d,-}^{b}.
\]
For comparison with exclusions, the exploratory ownership-response specification is
\[
\Delta Y_{i,d}^{b}=\alpha+\beta\,Inclusion_{i,d}+X_{i,d}^{\prime}\theta+\varepsilon_{i,d},
\]
where $\Delta Y_{i,d}^{b}$ is either $\Delta Own_{i,d}^{b}$ or $\Delta NInv_{i,d}^{b}$, and $X_{i,d}$ contains pre-event turnover, market capitalization, pre-event return, volatility, and Amihud illiquidity. Standard errors are clustered by SC list date, and the reported small-cluster check is a restricted Rademacher wild-cluster bootstrap by list date.

The matched-control predicted-demand gap uses the actual continuously compliant controls matched to inclusion event $i,d$. Let $\mathcal{M}_{i,d}$ be that matched-control set. The ex ante permitted-ownership target and predicted ownership gap are
\[
TargetOwn_{i,d}^{b}
=
\frac{1}{|\mathcal{M}_{i,d}|}
\sum_{j\in\mathcal{M}_{i,d}} Own_{j,d,-}^{b},
\qquad
Gap_{i,d}^{b}
=
\max\{0,TargetOwn_{i,d}^{b}-Own_{i,d,-}^{b}\}.
\]
The price-link test regresses the treated-event matched abnormal return on $Gap_{i,d}^{b}$ and the same pre-event controls:
\[
\diffCAR_{i,d,[0,10]}=\alpha+\lambda\,Gap_{i,d}^{b}+X_{i,d}^{\prime}\theta+\varepsilon_{i,d}.
\]
The portfolio-only buckets exclude likely strategic or control holders before aggregating ownership. A cleaner matched-control diagnostic compares each treated ownership change with the mean change among the same three continuously compliant controls used in the price test. Appendix Table~\ref{tab:app_routeb_ownership_did} reports that comparison on 288 corrected primary events with dated coverage. Shariah-sensitive ownership rises among treated inclusions, but neither it nor broad constrained ownership differs significantly from matched controls under wild-cluster inference.

\input{table_routeb_ownership_did_appendix}

Because price pressure should depend on the amount of required buying relative to trading capacity, we also convert the matched-control ownership gap into an ADV-scaled demand-pressure variable:
\[
PredPressure_{i,d}^{b}
=
\log\left(
1+
\frac{Gap_{i,d}^{b}\times MCap_{i,d}^{pre}/100}
{ADV_{i,d}^{[-30,-1]}}
\right),
\]
where $MCap_{i,d}^{pre}$ is pre-event market capitalization and $ADV_{i,d}^{[-30,-1]}$ is average daily trading value over the pre-event window. We test whether this pressure variable predicts subsequent constrained ownership changes, matched abnormal returns, abnormal trading volume, and a pressure-by-turnover interaction. We define matched abnormal dollar volume over $[0,k]$ as the treated stock's log event-window dollar volume relative to its own pre-event expected volume, minus the average of the same measure for the three continuously compliant matched controls.

Appendix Table~\ref{tab:rof_demand_pressure} now uses the corrected continuously listed, turnover-qualified sample whenever the necessary LSEG variables are present. Predicted broad-constrained pressure is positively associated with subsequent constrained ownership change, but its CAR coefficient is negative and insignificant. The global-passive pressure coefficient is also negative, and realized buying pressure has a negative but insignificant CAR association. Post-event realized ownership is endogenous to event-window prices and cannot be interpreted causally. These signs are inconsistent with treating the available pressure proxies as a simple price-setting mechanism, even though the predicted proxy contains some first-stage information for ownership adjustment.

\input{table_rof_demand_pressure_mechanism}

\subsection{LSEG ETF capacity and aggregate flow diagnostics}\label{app:lseg_etf_capacity}

A narrow listed-ETF mechanism is testable with the LSEG fund data. For each SC release cycle, we pull ETF NAV and shares outstanding one business day before the release and at post-release business-day horizons. ETF size is measured as NAV times shares outstanding. This produces a direct capacity measure for Malaysia-focused Shariah ETFs even though historical security-level ETF weights are not available.

Appendix Table~\ref{tab:lseg_etf_capacity_scale} shows that the listed ETF channel is economically nontrivial relative to the trading capacity of some treated stocks. In the liquidity-qualified inclusion sample with ETF-share coverage, the core Malaysia Shariah ETFs have median pre-release assets of MYR 354.8 million. A deliberately conservative one-percent allocation from those ETFs equals 2.81 times the median treated stock's pre-event average daily trading value; current non-cash ETF holdings have a median weight of 5.56 percent.

\input{table_lseg_etf_capacity_scale}

\begin{figure}[!htbp]
\centering
\includegraphics[width=0.84\textwidth]{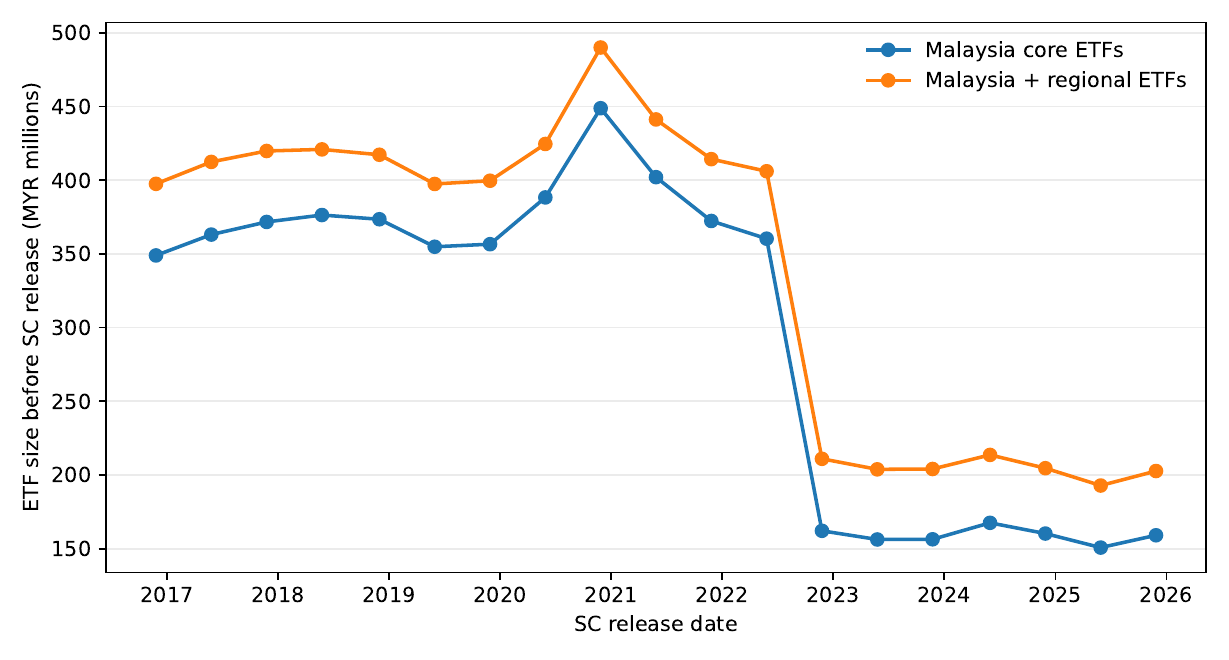}
\caption{LSEG Shariah ETF assets around SC Malaysia release cycles. ETF size is measured as NAV times shares outstanding one business day before each official SC release date.}
\AltText{Line chart showing Malaysia-focused and Malaysia-plus-regional Shariah ETF assets across SC Malaysia release cycles. Assets are observable from late 2016 onward and decline after 2022.}
\label{fig:lseg_etf_capacity}
\end{figure}

Appendix Table~\ref{tab:lseg_etf_capacity_tests} asks whether ETF capacity explains the cross-section of liquidity-qualified inclusion CARs. It does not. The coefficient on Malaysia core ETF capacity pressure is essentially zero over both $[0,10]$ and $[0,20]$, with wild-cluster $p$-values of 0.993 and 0.956. The Malaysia-plus-regional extension is also insignificant, and the capacity-by-turnover interaction is far from conventional significance. The Malaysia core ETF sample has no positive aggregate share-creation dates over the $[0,10]$ post-release window. These results make a narrow listed-ETF mechanism less likely; they do not identify another mechanism by elimination.

\input{table_lseg_etf_capacity_tests}

\subsection{Corrected fundamentals linkage and reclassification controls}\label{app:malaysia_fundamentals}

The original event--fundamentals merge read event-file GVKEYs such as \texttt{274603.0} while Compustat stored the same key as \texttt{274603}. Normalizing both sources to six-character strings restores 693 unique GVKEY overlaps. Of the matched event records, 834 have two fiscal observations before the official release: 507 inclusions and 327 exclusions. Appendix Table~\ref{tab:app_routeb_fundamentals} reports the resulting specification ladder. The common-sample regression uses treated-minus-matched-control changes in debt, cash, receivables, book leverage, log assets, and sales-to-assets. The cached panel does not contain usable ROA and the paper does not imply that it does. Because all controls are centered, the adjusted intercept equals the unadjusted 262-event common-sample mean; the larger coefficient relative to the 295-event row is a sample-composition result. The corrected merge therefore resolves the zero-observation diagnostic; the inconsistent former classification-surprise exercise is omitted.

\input{table_routeb_fundamentals_appendix}

\newpage

%% file: table_impure_proxy_sensitivity.tex
\begin{table}[!htbp]
\centering
\caption{Sensitivity to the non-permissible-income proxy}
\label{tab:impure_proxy_sensitivity}
\begin{threeparttable}
\begin{tabular}{lrrrr}
\toprule
Implementation & Baseline rate & Narrow-proxy rate & Agreement & Changed rows \\
\midrule
AAOIFI & 29.29\% & 34.81\% & 94.46\% & 74,360 \\
DJIM & 27.30\% & 32.74\% & 94.55\% & 73,232 \\
S\&P & 26.66\% & 31.84\% & 94.80\% & 69,813 \\
FTSE/Yasaar & 25.60\% & 29.40\% & 96.20\% & 51,026 \\
MSCI main & 28.37\% & 32.30\% & 96.06\% & 52,896 \\
MSCI M-Series & 26.69\% & 31.70\% & 94.97\% & 67,523 \\
SC Malaysia & 29.26\% & 33.16\% & 96.09\% & 52,465 \\
\bottomrule
\end{tabular}
\begin{tablenotes}[flushleft]
\footnotesize
\item The baseline proxy is $[\max(\mathrm{IDIT},0)+\max(\mathrm{NOPI},0)]/\mathrm{SALE}$. Because NOPI includes permissible non-operating items, the alternative uses $\max(\mathrm{IDIT},0)/\mathrm{SALE}$ only. Missing or nonpositive sales denominators fail the screen. Agreement reports identical binary labels.
\end{tablenotes}
\end{threeparttable}
\end{table}

%% file: table_routea_proxy_downstream_sensitivity.tex
\begin{table}[!htbp]
\centering
\caption{Downstream sensitivity to the positive-IDIT-only income proxy}
\label{tab:routea_proxy_downstream}
\scriptsize
\begin{threeparttable}
\begin{tabular}{llrr}
\toprule
Panel/test & Implementation or controls & Estimate & Test statistic \\
\midrule
Mean eligibility share & Baseline IDIT+NOPI & 0.2760 & SD=0.3788 \\
Mean eligibility share & Positive IDIT only & 0.3228 & SD=0.3827 \\
Mean disagreement & Baseline IDIT+NOPI & 0.2254 & SD=0.3879 \\
Mean disagreement & Positive IDIT only & 0.2886 & SD=0.4174 \\
\midrule
Monitoring: Implemented disagreement & Positive IDIT only & 1.0687*** & $z=41.27$ \\
Monitoring: Proximity risk & Positive IDIT only & 3.1006*** & $z=31.09$ \\
Monitoring: Log market capitalization & Positive IDIT only & -0.0682*** & $z=-10.36$ \\
\midrule
Fama--MacBeth: Eligibility share & Baseline & 0.0007* & $t=1.90$ \\
Fama--MacBeth: Eligibility share & + quality & 0.0002 & $t=0.48$ \\
Fama--MacBeth: Eligibility share & + quality + screens & -0.0005* & $t=-1.94$ \\
Fama--MacBeth: Implemented disagreement & Baseline & 0.0008** & $t=2.18$ \\
Fama--MacBeth: Implemented disagreement & + quality & 0.0007** & $t=2.11$ \\
Fama--MacBeth: Implemented disagreement & + quality + screens & 0.0006** & $t=2.42$ \\
\bottomrule
\end{tabular}
\begin{tablenotes}[flushleft]
\footnotesize
\item The alternative income screen uses $\max(\mathrm{IDIT},0)/\mathrm{SALE}$ and rebuilds all seven labels, equal-weighted eligibility and disagreement, active-boundary proximity, next-month transitions, and the Fama--MacBeth characteristics. Monitoring standard errors are two-way clustered by security and month. Fama--MacBeth standard errors are Newey--West with 12 lags. The fully adjusted alternative uses the IDIT-only income ratio as its screen-ratio control.
\end{tablenotes}
\end{threeparttable}
\end{table}

%% file: table_routea_standard_reclassification.tex
\begin{table}[!htbp]
\centering
\caption{Standard-by-standard next-month reclassification rates}
\label{tab:app_standard_reclassification}
\begin{tabular}{lcc}
\toprule
Standard & Reclassification events & Firm-month rate \\
\midrule
        AAOIFI & 35,063 & 2.61\% \\
        DJIM & 17,579 & 1.31\% \\
        S\&P & 16,296 & 1.21\% \\
        FTSE/Yasaar & 11,858 & 0.88\% \\
        MSCI main & 12,375 & 0.92\% \\
        MSCI M-Series & 16,084 & 1.20\% \\
        SC Malaysia & 12,576 & 0.94\% \\
\bottomrule
\end{tabular}
\begin{minipage}{0.92\textwidth}
\footnotesize \emph{Notes:} A reclassification event is a change in the binary eligibility indicator for the same CRSP security between adjacent firm-month observations. The table decomposes the all-standard monitoring outcome used in the main text.
\end{minipage}
\end{table}

%% file: table_routea_daily_pretrend_corrected.tex
\begin{table}[!htbp]
\centering
\caption{Route A daily pre-event cumulative abnormal returns}
\label{tab:app_daily_pretrend}
\begin{adjustbox}{max width=\textwidth}
\begin{tabular}{llrrr}
\toprule
Event date & Specification & CAR$[-20,-1]$ & CAR$[-10,-1]$ & CAR$[-5,-1]$ \\
\midrule
Announcement Aug. 4, 2023 & Unmatched newly-both & 0.441 & 0.694 & -0.805* \\
 & & (0.47) & (1.15) & (-1.76) \\
 & Matched vs. still-ineligible & -1.036 & -0.648 & -1.168* \\
 & & (-0.78) & (-0.77) & (-1.68) \\
 & Matched vs. continuous & 0.151 & 0.690 & -0.201 \\
 & & (0.08) & (0.48) & (-0.26) \\
\addlinespace
Pro-forma Sept. 1, 2023 & Unmatched newly-both & 1.583* & 0.022 & 0.398 \\
 & & (1.72) & (0.04) & (1.00) \\
 & Matched vs. still-ineligible & 0.699 & -0.846 & 0.342 \\
 & & (0.56) & (-1.06) & (0.59) \\
 & Matched vs. continuous & 1.371 & -0.237 & -0.135 \\
 & & (0.97) & (-0.27) & (-0.22) \\
\bottomrule
\end{tabular}
\end{adjustbox}
\begin{flushleft}
\footnotesize CARs are sums of daily market-model abnormal returns, winsorized at the 1st and 99th percentiles within each event-window cross-section. Reported effects are percentage-point coefficients with HC3 $t$-statistics.
\end{flushleft}
\end{table}

%% file: table_malaysia_release_date_audit.tex
\begin{longtable}{lll}
\caption{SC Malaysia list dates and release-date overrides}\label{tab:app_release_date_audit}\\
\toprule
List date & Public release date & Official PDF file \\
\midrule
\endfirsthead
\toprule
List date & Public release date & Official PDF file \\
\midrule
\endhead
        2013-11-30 & 2013-11-29 & \texttt{nov2013.pdf} \\
        2014-05-31 & 2014-05-30 & \texttt{may2014.pdf} \\
        2014-11-30 & 2014-11-28 & \texttt{nov2014.pdf} \\
        2015-05-31 & 2015-05-29 & \texttt{may2015.pdf} \\
        2015-11-30 & 2015-11-27 & \texttt{nov2015.pdf} \\
        2016-05-31 & 2016-05-27 & \texttt{may2016.pdf} \\
        2016-11-30 & 2016-11-25 & \texttt{nov2016.pdf} \\
        2017-05-31 & 2017-05-26 & \texttt{may2017.pdf} \\
        2017-11-30 & 2017-11-24 & \texttt{nov2017.pdf} \\
        2018-05-31 & 2018-05-25 & \texttt{may2018.pdf} \\
        2018-11-30 & 2018-11-30 & \texttt{nov2018.pdf} \\
        2019-05-31 & 2019-05-31 & \texttt{may2019.pdf} \\
        2019-11-30 & 2019-11-29 & \texttt{nov2019.pdf} \\
        2020-05-31 & 2020-05-29 & \texttt{may2020.pdf} \\
        2020-11-30 & 2020-11-27 & \texttt{nov2020.pdf} \\
        2021-05-31 & 2021-05-28 & \texttt{may2021.pdf} \\
        2021-11-30 & 2021-11-26 & \texttt{nov2021.pdf} \\
        2022-05-31 & 2022-05-27 & \texttt{may2022.pdf} \\
        2022-11-30 & 2022-11-25 & \texttt{nov2022.pdf} \\
        2023-05-31 & 2023-05-26 & \texttt{may2023.pdf} \\
        2023-11-30 & 2023-11-24 & \texttt{nov2023.pdf} \\
        2024-05-31 & 2024-05-31 & \texttt{may2024.pdf} \\
        2024-11-30 & 2024-11-29 & \texttt{nov2024.pdf} \\
        2025-05-31 & 2025-05-30 & \texttt{may2025.pdf} \\
        2025-11-30 & 2025-11-28 & \texttt{nov2025.pdf} \\
\bottomrule
\multicolumn{3}{p{0.92\textwidth}}{\footnotesize \emph{Notes:} Release dates are the public dates printed on the official Securities Commission Malaysia PDF cover pages and manually verified from first-page renders. The list date remains the semi-annual membership-period identifier and the clustering unit; the release date defines event day 0.}\\
\end{longtable}

%% file: table_routeb_sample_flow_appendix.tex
\begin{table}[!htbp]
\centering
\caption{Route B event-sample reconciliation}
\label{tab:app_routeb_sample_flow}
\begin{threeparttable}
\begin{tabular}{lrr}
\toprule
Panel A: cumulative event-sample stage & Inclusions & Exclusions \\
\midrule
Official stock-code transitions & 859 & 661 \\
Linked to a Compustat Global security & 598 & 429 \\
Valid release-date CAR & 595 & 375 \\
Successful same-date three-control match & 582 & 367 \\
Continuously listed before the preceding review & 410 & 367 \\
Continuously listed and above turnover floor & 295 & 257 \\
\midrule
Panel B: non-cumulative fundamentals coverage & Inclusions & Exclusions \\
\midrule
Matched events with two prior fiscal observations & 507 & 327 \\
Primary inclusions with complete relative controls & 262 & -- \\
\bottomrule
\end{tabular}
\begin{tablenotes}[flushleft]
\footnotesize
\item Panel A follows cumulative event selection. Panel B is deliberately separate because two-fiscal-observation coverage is measured among all matched linked events and is not a stage after the continuously listed restriction. Events are created from consecutive official stock-code lists before security matching. The valid-CAR row reports the 595 inclusions and 375 exclusions used in the raw event-return table.
\end{tablenotes}
\end{threeparttable}
\end{table}

%% file: table_routeb_matching_method_audit.tex
\begin{table}[!htbp]
\centering
\caption{Route B matching implementation audit}
\label{tab:routeb_matching_audit}
\begin{tabularx}{\textwidth}{L{0.31\textwidth}X}
\toprule
Design element & Implementation \\
\midrule
Candidate controls & Continuously compliant securities from the same SC list date \\
Covariates and windows & All use trading days $[-30,-1]$: log mean market capitalization; cumulative market-adjusted return; standard deviation of market-adjusted returns; mean daily turnover; and mean daily Amihud illiquidity \\
Scaling & Covariates standardized within each treated security plus its same-date candidate pool \\
Distance & Euclidean distance across the five standardized covariates \\
Selection & Three nearest controls, selected independently for each treated event \\
Replacement & With replacement; controls can serve multiple treated events and review cycles \\
Caliper & None \\
Missing values & Treated observations or candidates missing any matching covariate are excluded \\
Ties & Sorted by distance and then event identifier in the audited robustness implementation \\
Optimization & Greedy treated-by-treated nearest-neighbor selection, not global optimal assignment \\
\bottomrule
\end{tabularx}
\begin{flushleft}
\footnotesize The primary sample contains 295 treated events and 885 treated-control pairs. The strict SIC-2 robustness requires three same-industry candidates and never falls back to cross-industry controls.
\end{flushleft}
\end{table}

%% file: table_routeb_sic2_exact.tex
\begin{table}[!htbp]
\centering
\caption{Strict same-industry matching robustness}
\label{tab:routeb_sic2_exact}
\begin{threeparttable}
\begin{tabular}{lrrrrr}
\toprule
Window & Treated events & Estimate & Cluster SE & $p_{date}$ & $p_{wild}$ \\
\midrule
$[0,10]$ & 266 & 0.869 & 0.772 & 0.260 & 0.273 \\
$[0,20]$ & 266 & 1.481 & 1.138 & 0.193 & 0.232 \\
\bottomrule
\end{tabular}
\begin{tablenotes}[flushleft]
\footnotesize
\item This robustness design requires all three controls to share the treated security's two-digit SIC industry. Treated events with fewer than three same-date, same-SIC continuously compliant candidates are omitted; there is no fallback to cross-industry controls. Within the eligible pool, matching uses the same standardized Euclidean distance and allows control reuse.
\end{tablenotes}
\end{threeparttable}
\end{table}

%% file: table_routeb_repeated_security.tex
\begin{table}[!htbp]
\centering
\caption{Repeated-security inference and first-event robustness}
\label{tab:routeb_repeated_security}
\begin{threeparttable}
\begin{tabular}{llrrrr}
\toprule
Sample & Window & $N$ & Estimate & $p_{wild}$ & $p_{date,security}$ \\
\midrule
Primary sample & $[0,10]$ & 295 & 1.759*** & 0.017 & 0.008 \\
Primary sample & $[0,20]$ & 295 & 2.252** & 0.035 & 0.015 \\
First inclusion per security & $[0,10]$ & 237 & 1.688*** & 0.018 & 0.008 \\
First inclusion per security & $[0,20]$ & 237 & 2.405** & 0.032 & 0.020 \\
\bottomrule
\end{tabular}
\begin{tablenotes}[flushleft]
\footnotesize
\item $p_{date,security}$ uses two-way clustering by SC list date and treated security. The first-event sample retains only the earliest primary inclusion for each security. Wild-cluster inference by list date remains the principal small-cluster check. Controls are matched with replacement and may be reused; the replication output separately audits control reuse.
\end{tablenotes}
\end{threeparttable}
\end{table}

%% file: table_routeb_ownership_did_appendix.tex
\begin{table}[!htbp]
\centering
\caption{Matched-control ownership response after continuously listed inclusions}
\label{tab:app_routeb_ownership_did}
\scriptsize
\begin{threeparttable}
\begin{adjustbox}{max width=\textwidth}
\begin{tabular}{lrrrr}
\toprule
Investor bucket & Treated change & Control change & Difference & $p_{wild}$ \\
\midrule
Shariah-sensitive portfolio & 0.022 & 0.000 & 0.021 & 0.360 \\
Broad mandate-constrained portfolio & 0.017 & -0.036 & 0.053 & 0.282 \\
Global passive placebo & -0.002 & -0.001 & -0.001 & 0.832 \\
\bottomrule
\end{tabular}
\end{adjustbox}
\begin{tablenotes}[flushleft]
\footnotesize
\item Entries are percentage-point changes in visible ownership from the dated pre- to post-release LSEG snapshots. The sample contains continuously listed, liquidity-qualified inclusions. Each treated change is compared with the mean change among its three continuously compliant matched controls. The portfolio buckets exclude likely strategic or control holders. Inference clusters by SC list date and uses 999 wild-cluster replications.
\end{tablenotes}
\end{threeparttable}
\end{table}

%% file: table_rof_demand_pressure_mechanism.tex
\begin{table}[!htbp]
\centering
\caption{Demand-pressure and trading-volume mechanism tests}
\label{tab:rof_demand_pressure}
\begin{adjustbox}{max width=\textwidth}
\begin{tabular}{lllrrrrr}
\toprule
Test & Bucket/outcome & Term & Coef. & SE & $p_{date}$ & $p_{wild}$ & N \\
\midrule
Inclusion-volume response & All events: Abnormal dollar volume $[0,10]$ & Inclusion & 0.029 & 0.106 & 0.785 & 0.802 & 546 \\
Predicted pressure & Broad constrained, portfolio only: $\Delta$ constrained ownership & Predicted pressure & 0.059 & 0.028 & 0.038 & 0.037 & 288 \\
Predicted pressure & Broad constrained, portfolio only: $CAR[0,10]$ & Predicted pressure & -0.006 & 0.005 & 0.226 & 0.264 & 288 \\
Predicted pressure & Broad constrained, portfolio only: $CAR[0,10]$ & Predicted pressure $\times$ turnover & 0.009 & 0.014 & 0.528 & 0.524 & 288 \\
Predicted pressure & Global passive placebo: $CAR[0,10]$ & Predicted pressure & -0.007 & 0.004 & 0.095 & 0.145 & 288 \\
Realized pressure & Broad constrained, portfolio only: $CAR[0,10]$ & Realized buying pressure & -0.004 & 0.004 & 0.309 & 0.300 & 288 \\
Realized pressure & Broad constrained, portfolio only: Abnormal dollar volume $[0,10]$ & Realized buying pressure & 0.095 & 0.051 & 0.062 & 0.090 & 285 \\
\bottomrule
\end{tabular}
\end{adjustbox}
\begin{minipage}{0.95\linewidth}
\footnotesize \emph{Notes:} The table reports demand-pressure mechanism diagnostics on the corrected continuously listed, turnover-qualified sample whenever the required LSEG variables are available. Predicted pressure is the matched-control constrained ownership gap converted to currency value using pre-event market capitalization and scaled by pre-event average daily trading value, then log-transformed and standardized within the estimating sample. Realized buying pressure uses the positive post-minus-pre LSEG constrained ownership change, converted to value and scaled by pre-event average daily trading value. All specifications use release-date event timing, pre-event controls, list-date clustered standard errors, and restricted Rademacher wild-cluster p-values by SC list date with 999 replications. Opposite-signed price coefficients are reported rather than treated as supportive mechanism evidence. Realized-pressure rows are endogenous mechanism associations because post-event holder snapshots are observed after the price window.
\end{minipage}
\end{table}

%% file: table_lseg_etf_capacity_scale.tex
\begin{table}[!htbp]
\centering
\caption{LSEG Shariah ETF channel capacity}
\label{tab:lseg_etf_capacity_scale}
\begin{adjustbox}{max width=\textwidth}
\begin{tabular}{lrrrrrrr}
\toprule
Channel & Events & Dates & Median size & Latest size & Median 1\% pressure & Median current weight & Nonzero creations \\
\midrule
Malaysia core ETFs & 378 & 19 & 354.826 & 159.099 & 2.807 & 5.560 & 0 \\
Malaysia + regional ETFs & 378 & 19 & 397.496 & 202.688 & 3.518 & 6.689 & 0 \\
\bottomrule
\end{tabular}
\end{adjustbox}
\begin{minipage}{0.95\linewidth}
\footnotesize \emph{Notes:} ETF size is measured from LSEG NAV times ETF shares outstanding one business day before the official SC Malaysia release date. Median and latest size are in MYR millions. Median 1\% pressure is the hypothetical purchase value from allocating one percent of channel ETF assets to a treated stock, scaled by that stock's pre-event average daily trading value. Median current weight is the median current constituent weight among non-cash ETF holdings observed in LSEG. This is a capacity stress test, not an observed security-level predicted purchase. Nonzero creations counts list dates with positive aggregate ETF share-creation value over the post-release $[0,10]$ business-day window.
\end{minipage}
\end{table}

%% file: table_lseg_etf_capacity_tests.tex
\begin{table}[!htbp]
\centering
\caption{LSEG ETF capacity and flow-channel tests}
\label{tab:lseg_etf_capacity_tests}
\begin{adjustbox}{max width=\textwidth}
\begin{tabular}{lllrrrrr}
\toprule
Channel & Outcome & Test variable & Coef. & SE & $p_{date}$ & $p_{wild}$ & N \\
\midrule
Malaysia core & $CAR[0,10]$ & 1\% capacity pressure & 0.000 & 0.013 & 0.990 & 0.993 & 378 \\
Malaysia core & $CAR[0,20]$ & 1\% capacity pressure & 0.001 & 0.012 & 0.957 & 0.956 & 378 \\
Malaysia + regional & $CAR[0,10]$ & 1\% capacity pressure & 0.000 & 0.013 & 0.985 & 0.981 & 378 \\
Malaysia core & $CAR[0,10]$ & Capacity pressure $\times$ turnover & 0.005 & 0.013 & 0.674 & 0.707 & 378 \\
\bottomrule
\end{tabular}
\end{adjustbox}
\begin{minipage}{0.95\linewidth}
\footnotesize \emph{Notes:} The table reports liquidity-qualified inclusion-event regressions using release-date event timing. ETF capacity pressure equals one percent of channel ETF assets measured one business day before the SC release, scaled by the treated stock's pre-event average daily trading value, log-transformed and standardized. The Malaysia ETF channels have no positive aggregate ETF share-creation dates over the post-release $[0,10]$ business-day window, so creation-flow regressions are not reported in the displayed rows. Specifications include pre-event market capitalization, return, volatility, and illiquidity controls. $p_{date}$ clusters by SC list date. $p_{wild}$ is a restricted Rademacher wild-cluster bootstrap by list date with 999 replications.
\end{minipage}
\end{table}

%% file: table_routeb_fundamentals_appendix.tex
\begin{table}[!htbp]
\centering
\caption{SC Malaysia inclusion specification ladder}
\label{tab:app_routeb_fundamentals}
\scriptsize
\begin{threeparttable}
\begin{adjustbox}{max width=\textwidth}
\begin{tabular}{lrrrrrrr}
\toprule
 & & \multicolumn{3}{c}{CAR$[0,10]$} & \multicolumn{3}{c}{CAR$[0,20]$} \\
\cmidrule(lr){3-5}\cmidrule(lr){6-8}
Sample/specification & $N$ & Estimate & $p_{date}$ & $p_{wild}$ & Estimate & $p_{date}$ & $p_{wild}$ \\
\midrule
All matched inclusions & 582 & 0.784 & 0.163 & 0.163 & 1.436* & 0.072 & 0.083 \\
Continuously listed only & 410 & 0.896 & 0.127 & 0.134 & 1.458* & 0.053 & 0.064 \\
Continuously listed + turnover floor & 295 & 1.759*** & 0.008 & 0.017 & 2.252** & 0.018 & 0.035 \\
Complete fundamentals sample, unadjusted & 262 & 2.154*** & 0.003 & 0.009 & 2.753*** & 0.005 & 0.012 \\
Complete sample + relative fundamentals & 262 & 2.154*** & 0.001 & 0.006 & 2.753*** & 0.003 & 0.017 \\
\bottomrule
\end{tabular}
\end{adjustbox}
\begin{tablenotes}[flushleft]
\footnotesize
\item Estimates are percentage-point means of treated-minus-three-control market-adjusted CARs. The continuously listed restriction requires trading before the preceding SC review. The turnover floor is fixed at 0.000296 from the pre-event matched universe. The complete sample requires all six treated and matched-control-relative fundamental changes and four market controls. The final row adds standardized treated-minus-control changes in debt, cash, receivables, book leverage, log assets, and sales-to-assets plus pre-event size, return, volatility, and Amihud illiquidity. Because controls are centered, its intercept is the common-sample mean; the comparison does not attribute the larger complete-sample mean to adding controls. Inference clusters by SC list date; $p_{wild}$ uses 999 restricted Rademacher replications.
\end{tablenotes}
\end{threeparttable}
\end{table}

%% file: references.bib
@techreport{AAOIFI2024,
  author      = {{Accounting and Auditing Organization for Islamic Financial Institutions}},
  title       = {{Shari'ah Standards}},
  institution = {AAOIFI},
  year        = {2024},
  address     = {Manama}
}

@article{AlnamlahHassan2022,
  author  = {Alnamlah, Abdullah and Hassan, M. Kabir and Alhomaidi, Asem and Smolo, Edib},
  title   = {A New Model for Screening {Shariah-Compliant} Firms},
  journal = {Borsa Istanbul Review},
  year    = {2022},
  volume  = {22},
  number  = {S1},
  pages   = {S10--S23},
  doi     = {10.1016/j.bir.2022.10.011}
}

@article{AshrafMohammad2016,
  author  = {Ashraf, Dawood and Mohammad, Nazeeruddin},
  title   = {Matching Perception with Reality: Performance of {Islamic} Equity Investments},
  journal = {Pacific-Basin Finance Journal},
  year    = {2016},
  volume  = {39},
  pages   = {175--189}
}

@article{AvramovCheng2022,
  author  = {Avramov, Doron and Cheng, Si and Lioui, Abraham and Tarelli, Andrea},
  title   = {Sustainable Investing with {ESG} Rating Uncertainty},
  journal = {Journal of Financial Economics},
  year    = {2022},
  volume  = {145},
  number  = {2},
  pages   = {642--664},
  doi     = {10.1016/j.jfineco.2021.09.009}
}

@article{AyedhEchchabi2019,
  author  = {Ayedh, Abdullah Mohammed Ahmed and Echchabi, Abdelghani and Aziz, Mohd Rizal Alif Abdul and Dandis, Mohammed Omar},
  title   = {{Shariah} Screening Methodology: Does It Really {Shariah} Compliance?},
  journal = {Iqtishadia: Jurnal Kajian Ekonomi dan Bisnis Islam},
  year    = {2019},
  volume  = {12},
  number  = {2},
  pages   = {144--172}
}

@article{BakerHollifieldOsambela2022,
  author  = {Baker, Steven D. and Hollifield, Burton and Osambela, Emilio},
  title   = {Asset Prices and Portfolios with Externalities},
  journal = {Review of Finance},
  year    = {2022},
  volume  = {26},
  number  = {6},
  pages   = {1433--1468},
  doi     = {10.1093/rof/rfac065}
}

@article{BergKolbel2022,
  author  = {Berg, Florian and Koelbel, Julian F. and Rigobon, Roberto},
  title   = {Aggregate Confusion: The Divergence of {ESG} Ratings},
  journal = {Review of Finance},
  year    = {2022},
  volume  = {26},
  number  = {6},
  pages   = {1315--1344},
  doi     = {10.1093/rof/rfac033}
}

@article{BerkvanBinsbergen2025,
  author  = {Berk, Jonathan B. and van Binsbergen, Jules H.},
  title   = {The Impact of Impact Investing},
  journal = {Journal of Financial Economics},
  year    = {2025},
  volume  = {164},
  pages   = {103972}
}

@article{CeccarelliRamelliWagner2024,
  author  = {Ceccarelli, Marco and Ramelli, Stefano and Wagner, Alexander F.},
  title   = {Low Carbon Mutual Funds},
  journal = {Review of Finance},
  year    = {2024},
  volume  = {28},
  number  = {1},
  pages   = {45--74},
  doi     = {10.1093/rof/rfad015}
}

@article{ChristensenSerafeim2022,
  author  = {Christensen, Dane M. and Serafeim, George and Sikochi, Anywhere},
  title   = {Why Is Corporate Virtue in the Eye of the Beholder? The Case of {ESG} Ratings},
  journal = {The Accounting Review},
  year    = {2022},
  volume  = {97},
  number  = {1},
  pages   = {147--175}
}

@techreport{DJIM2024,
  author      = {{S\&P Dow Jones Indices}},
  title       = {{Dow Jones Islamic Market Indices Methodology}},
  institution = {S\&P Dow Jones Indices},
  year        = {2024},
  address     = {New York}
}

@article{FabozziMa2008,
  author  = {Fabozzi, Frank J. and Ma, K. C. and Oliphant, Becky J.},
  title   = {Sin Stock Returns},
  journal = {The Journal of Portfolio Management},
  year    = {2008},
  volume  = {35},
  number  = {1},
  pages   = {82--94}
}

@article{FamaFrench1992,
  author  = {Fama, Eugene F. and French, Kenneth R.},
  title   = {The Cross-Section of Expected Stock Returns},
  journal = {The Journal of Finance},
  year    = {1992},
  volume  = {47},
  number  = {2},
  pages   = {427--465}
}

@article{FeldhutterPedersen2025,
  author  = {Feldhuetter, Peter and Pedersen, Lasse Heje},
  title   = {Is Capital Structure Irrelevant with {ESG} Investors?},
  journal = {The Review of Financial Studies},
  year    = {2025},
  volume  = {38},
  number  = {8},
  pages   = {2362--2385},
  doi     = {10.1093/rfs/hhae059}
}

@techreport{FTSE2022,
  author      = {{FTSE Russell}},
  title       = {{FTSE Shariah Global Equity Index Series Ground Rules}},
  institution = {FTSE Russell},
  year        = {2022},
  address     = {London}
}

@article{GibsonBrandon2021,
  author  = {Gibson Brandon, Rajna and Krueger, Philipp and Schmidt, Peter S.},
  title   = {{ESG} Rating Disagreement and Stock Returns},
  journal = {Financial Analysts Journal},
  year    = {2021},
  volume  = {77},
  number  = {4},
  pages   = {104--127}
}

@article{HartzmarkSussman2019,
  author  = {Hartzmark, Samuel M. and Sussman, Abigail B.},
  title   = {Do Investors Value Sustainability? A Natural Experiment Examining Ranking and Fund Flows},
  journal = {The Journal of Finance},
  year    = {2019},
  volume  = {74},
  number  = {6},
  pages   = {2789--2837},
  doi     = {10.1111/jofi.12841}
}

@article{HeinkelKraus2001,
  author  = {Heinkel, Robert and Kraus, Alan and Zechner, Josef},
  title   = {The Effect of Green Investment on Corporate Behavior},
  journal = {Journal of Financial and Quantitative Analysis},
  year    = {2001},
  volume  = {36},
  number  = {4},
  pages   = {431--449}
}

@article{HoRahmanYusufZamzamin2014,
  author  = {Ho, Catherine Soke Fun and Abd Rahman, Nurul Afiqah and Yusuf, Nor Azlan Mohamed and Zamzamin, Zulkifli},
  title   = {Performance of Global {Islamic} versus Conventional Share Indices: International Evidence},
  journal = {Pacific-Basin Finance Journal},
  year    = {2014},
  volume  = {28},
  pages   = {110--121}
}

@article{HongKacperczyk2009,
  author  = {Hong, Harrison and Kacperczyk, Marcin},
  title   = {The Price of Sin: The Effects of Social Norms on Markets},
  journal = {Journal of Financial Economics},
  year    = {2009},
  volume  = {93},
  number  = {1},
  pages   = {15--36}
}

@article{HouXue2015,
  author  = {Hou, Kewei and Xue, Chen and Zhang, Lu},
  title   = {Digesting Anomalies: An Investment Approach},
  journal = {The Review of Financial Studies},
  year    = {2015},
  volume  = {28},
  number  = {3},
  pages   = {650--705}
}

@article{KhatkhatayNisar2007,
  author  = {Khatkhatay, Mohammed Husain and Nisar, Shariq},
  title   = {{Shariah} Compliant Equity Investments: An Assessment of Current Screening Norms},
  journal = {Islamic Economic Studies},
  year    = {2007},
  volume  = {15},
  number  = {1},
  pages   = {47--76}
}

@article{Merton1987,
  author  = {Merton, Robert C.},
  title   = {A Simple Model of Capital Market Equilibrium with Incomplete Information},
  journal = {The Journal of Finance},
  year    = {1987},
  volume  = {42},
  number  = {3},
  pages   = {483--510}
}

@techreport{MSCI2024,
  author      = {{MSCI}},
  title       = {{MSCI Islamic Index Series Methodology}},
  institution = {MSCI Inc.},
  year        = {2024},
  address     = {New York}
}

@article{OrhanIsiker2021,
  author  = {Orhan, Zeyneb Hafsa and Isiker, Murat},
  title   = {Developing a Ranking Methodology for {Shari'ah} Indices: The Case of {Borsa Istanbul}},
  journal = {ISRA International Journal of Islamic Finance},
  year    = {2021},
  volume  = {13},
  number  = {3},
  pages   = {302--317}
}

@article{PastorStambaugh2021,
  author  = {Pastor, Lubos and Stambaugh, Robert F. and Taylor, Lucian A.},
  title   = {Sustainable Investing in Equilibrium},
  journal = {Journal of Financial Economics},
  year    = {2021},
  volume  = {142},
  number  = {2},
  pages   = {550--571}
}

@article{PedersenFitzgibbons2021,
  author  = {Pedersen, Lasse Heje and Fitzgibbons, Shaun and Pomorski, Lukasz},
  title   = {Responsible Investing: The {ESG-Efficient} Frontier},
  journal = {Journal of Financial Economics},
  year    = {2021},
  volume  = {142},
  number  = {2},
  pages   = {572--597}
}

@article{RizaldyAhmed2019,
  author  = {Rizaldy, M. Reyhan and Ahmed, Habib},
  title   = {{Islamic} Legal Methodologies and {Shariah} Screening Standards: Application in the {Indonesian} Stock Market},
  journal = {Thunderbird International Business Review},
  year    = {2019},
  volume  = {61},
  number  = {5},
  pages   = {733--749}
}

@techreport{SCMalaysia2013,
  author      = {{Securities Commission Malaysia}},
  title       = {Revised {Shariah} Screening Methodology for {Shariah-Compliant} Securities},
  institution = {Securities Commission Malaysia},
  year        = {2013},
  address     = {Kuala Lumpur}
}

@misc{SPDJIDJIMAnnouncement2023,
  author       = {{S\&P Dow Jones Indices}},
  title        = {{Dow Jones Islamic Market Indices Compliance Criteria Update \& September 2023 Rebalance Implementation}},
  year         = {2023},
  month        = {August},
  note         = {Index announcement dated August 4, 2023; accessed June 26, 2026},
  url          = {https://www.spglobal.com/spdji/en/documents/indexnews/announcements/20230804-1465480/1465480_djislamicmarketindices-8-4-2023.pdf}
}

@misc{SPDJIShariahAnnouncement2023,
  author       = {{S\&P Dow Jones Indices}},
  title        = {{S\&P Shariah Indices Compliance Criteria Update \& September 2023 Rebalance Implementation}},
  year         = {2023},
  month        = {August},
  note         = {Index announcement dated August 4, 2023; accessed June 26, 2026},
  url          = {https://www.spglobal.com/spdji/en/documents/indexnews/announcements/20230804-1465479/1465479_spshariahindices-8-4-2023.pdf}
}

@techreport{SP2024,
  author      = {{S\&P Dow Jones Indices}},
  title       = {{S\&P Shariah Indices Methodology}},
  institution = {S\&P Dow Jones Indices},
  year        = {2024},
  address     = {New York}
}

@article{Zerbib2022,
  author  = {Zerbib, Olivier David},
  title   = {A Sustainable Capital Asset Pricing Model ({S-CAPM}): Evidence from Environmental Integration and Sin Stock Exclusion},
  journal = {Review of Finance},
  year    = {2022},
  volume  = {26},
  number  = {6},
  pages   = {1345--1388},
  doi     = {10.1093/rof/rfac045}
}

@article{WangWangDongWang2024IRFA,
  author  = {Wang, Jianli and Wang, Shaolin and Dong, Minghua and Wang, Hongxia},
  title   = {{ESG} Rating Disagreement and Stock Returns: Evidence from {China}},
  journal = {International Review of Financial Analysis},
  year    = {2024},
  volume  = {91},
  pages   = {103043},
  doi     = {10.1016/j.irfa.2023.103043}
}

@article{HuLiLi2024IRFA,
  author  = {Hu, Kexin and Li, Xingyi and Li, Zhongfei},
  title   = {Effect of {ESG} Rating Disagreement on Stock Price Informativeness: Empirical Evidence from {China's} Capital Market},
  journal = {International Review of Financial Analysis},
  year    = {2024},
  volume  = {96},
  number  = {PB},
  pages   = {103651},
  doi     = {10.1016/j.irfa.2024.103651}
}

@article{HePanShanZhou2025IRFA,
  author  = {He, Ye and Pan, Yuetong and Shan, Tao and Zhou, Yanyu},
  title   = {{ESG} Rating Disagreement and the Cost of Equity Financing},
  journal = {International Review of Financial Analysis},
  year    = {2025},
  volume  = {107},
  pages   = {104565},
  doi     = {10.1016/j.irfa.2025.104565}
}

@misc{SCMalaysiaLists2013_2025,
  author       = {{Securities Commission Malaysia}},
  title        = {{[dataset] List of Shariah-Compliant Securities, Semi-Annual Releases, 2013--2025}},
  year         = {2025},
  howpublished = {Securities Commission Malaysia Islamic Capital Market Publications},
  url          = {https://www.sc.com.my/development/icm/icm-publications/list-of-shariah-compliant-securities},
  note         = {Accessed April 29, 2026}
}
